\DeclareMathAlphabet\mathbfcal{OMS}{cmsy}{b}{n}
\newcommand{\yes}{Yes}
\newcommand{\no}{No}
\newcommand{\yesins}{Yes-instance}
\newcommand{\noins}{No-instance}
\newcommand{\fpt}{\textsf{FPT}}
\newcommand{\np}{\textsf{NP}}
\newcommand{\nph}{{\np}-hard}
\newcommand{\nphshort}{{{\np}-h}}
\newcommand{\nphns}{{{\np}-hardness}}
\newcommand{\poly}{{\textsf{P}}}
\newcommand{\bigo}{O}
\newcommand{\bigos}{O^*}
\newcommand{\discset}{J}
\newcommand{\setmid}{:}
\newcommand{\abs}[1]{\vert#1\vert}
\newcommand{\xc}{\mathcal{H}}  
\newcommand{\xce}{H}  
\newcommand{\xs}{A} 
\newcommand{\xse}{a} 
\newcommand{\xsize}{\kappa} 
\newcommand{\wah}{{\textsf{W[1]}}-hard}
\newcommand{\wahshort}{{\textsf{W[1]}}-h}
\newcommand{\wahns}{{\textsf{W[1]}}-hardness}
\newcommand{\prob}[1]{{\sc #1}}
\newcommand{\abmr}{ABM rule}
\newcommand{\abmrs}{ABM rules}
\newcommand{\pappadd}[1]{\prob{DesAppAddBrib-#1}}
\newcommand{\pappdel}[1]{\prob{DesAppDelBrib-#1}}
\newcommand{\pvc}[1]{\prob{$r$-DesVoteChgBrib}-#1}
\newcommand{\pvac}[1]{\prob{$r$-DesVoteAddChgBrib}-#1}
\newcommand{\pvdc}[1]{\prob{$r$-DesVoteDelChgBrib}-#1}
\newcommand{\vset}{U}  
\newcommand{\eset}{A}  
\newcommand{\hamdis}[2]{\textsf{Ham}(#1, #2)}
\newcommand{\edge}[2]{\{#1,#2\}}
\newcommand{\hide}[1]{}
\newcommand{\p}{p}
\renewcommand{\iff}{if and only if}
\newcommand{\thm}{Thm.~}
\newcommand{\thms}{Thms.~}
\newcommand{\av}{{\text{AV}}}
\newcommand{\sav}{{\text{SAV}}}
\newcommand{\nsav}{{\text{NSAV}}}
\newcommand{\vccav}{{\text{CCAV}}}
\newcommand{\pav}{{\text{PAV}}}
\newcommand{\n}{n} 
\newcommand{\m}{m}  
\newcommand{\they}{they}
\newcommand{\their}{their}
\newcommand{\EP}[3]{
\begin{center}
{\small
\begin{tabularx}{0.98\columnwidth}{ll}
\toprule
\multicolumn{2}{c}{{#1}} \\
\midrule
{\bf Input:}   & \parbox[t]{0.85\columnwidth}{#2\vspace*{1mm}}  \\
{\bf Question:}& \parbox[t]{0.85\columnwidth}{#3\vspace*{.5mm}} \\
\bottomrule
\end{tabularx}
}
\end{center}
}
\newcommand{\score}[4]{{\textsf{sc}}^{#3}_{#4}(#1,#2)} 
\newcommand{\scoreE}[3]{\textsf{sc}^{#2}_{#3}(#1)} 
\newcommand{\dd}{d}
\let\shortcite\cite
\newtheorem{theorem}{Theorem}
\newtheorem{corollary}{Corollary}
\newtheorem{claim}{Claim}
\newtheorem{lemma}{Lemma}
\newtheorem{reductionrule}{Reduction Rule}
\begin{document}

\title{On the Complexity of Destructive Bribery in Approval-Based Multiwinner Voting\thanks{A preliminary version of this paper appeared in the Proceedings of the 19th International Conference on Autonomous Agents and Multiagent Systems (AAMAS 2020)~\protect\cite{DBLP:conf/atal/000120}.}
}

\author{Yongjie Yang}
\date{\small{Chair of Economic Theory, Saarland University, Saarbr\"{u}cken 66123, Germany\\ Email: yyongjiecs@gmail.com}}

\maketitle

\begin{abstract}
A variety of constructive manipulation, control, and bribery problems for approval-based multiwinner voting have been extensively studied recently. However, their destructive counterparts seem to be less explored. This paper investigates the complexity of several destructive bribery problems under five prestigious approval-based multiwinner voting rules---approval voting, satisfaction approval voting, net-satisfaction approval voting, Chamberlin-Courant approval voting, and proportional approval voting. Broadly, these problems are to determine if a number of given candidates can be excluded from any winning committees by performing a limited number of modification operations. We offer a complete landscape of the complexity of the problems. For {\nph} problems, we study their parameterized complexity with respect to meaningful parameters.
\smallskip

\noindent{\bf{Keywords:}} {destructive bribery; satisfaction approval voting; Chamberlin-Courant approval voting; proportional approval voting; multiwinner voting;  \fpt; \wah; \nph}
\end{abstract}


\section{Introduction}
After more than two decades of extensive study on the complexity of single-winner voting problems, the computational social choice community has recently shifted its primary focus to multiwinner voting, given its generality and broad applications. 
In particular, many variants of manipulation, control, and bribery problems for approval-based multiwinner voting rules ({\abmrs} for short) have been studied from a complexity point of view (see e.g.,~\cite{DBLP:conf/atal/AzizGGMMW15,DBLP:conf/atal/FaliszewskiST17,DBLP:conf/ijcai/Yang19,DBLP:conf/atal/ZhouYG19}). Existing works in this line of research predominantly concern the constructive model of these problems, which models scenarios where a strategic agent attempts to elevate a single distinguished candidate to winner status, or make a committee a winning committee. However, the destructive counterparts of these problems have not been adequately studied in the literature so far. This paper studies the complexity and the parameterized complexity of several destructive bribery problems for {\abmrs}. These problems are designed to capture scenarios where an election attacker (or briber) aims to prevent multiple distinguished candidates from winning by making changes to the votes (e.g., by bribing some voters to alter their votes) under certain budget constraints. The attacker's motivation may stem from these distinguished candidates being rivals (e.g., having completely different political views from the attacker), or the attacker attempting to make them lose to increase the winning chance of her preferred candidates. 

We consider five  bribery operations categorized into two classes: {\it{atomic operations}} and {\it{vote-level operations}}. Approval addition (AppAdd) and approval deletion (AppDel) are atomic operations, where each single AppAdd/AppDel means to add/delete one candidate into/from the set of approved candidates in some vote. Vote-level change (VoteChg), vote-level addition change (VoteAddChg), and vote-level deletion change (VoteDelChg) are vote-level operations, where each single operation involves changing a vote in any possible way, changing a vote by adding some candidates to the set of approved candidates, and changing a vote by deleting some candidates from the set of approved candidates, respectively. Each bribery problem is associated with an operation type, and the attacker can perform at most a given number of single operations of the same type. For vote-level operation-based problems, we introduce a distance bound~$r$ and assume that each vote can only be changed into another one with a Hamming distance at most~$r$ from the original vote. We call bribery problems with this parameter {\emph{distance-bounded bribery problems}}. This parameter models scenarios where voters prefer not to deviate too much from their true preferences. This parameter broadens the scope our study, since when~$r$ equals the number of candidates, the impact of this distance bound completely fades out. 


We investigate these problems under five widely-studied {\abmrs}, namely, approval voting ({\av}), satisfaction approval voting ({\sav}), net-satisfaction approval voting ({\nsav}), Chamberlin-Courant approval voting ({\vccav}), and proportional approval voting ({\pav}). We determine the complexity of all problems considered in the paper. Notably, many of our {\nphns} results hold even in very special cases. For {\nph} problems, we explore how various meaningful parameters influence their parameterized complexity. The considered parameters encompass the number of candidates, the number of votes, the size of the winning committee, the number of distinguished candidates, the number of operations the strategic agent is allowed to perform, and the distance bound discussed above. 

\subsection{Related Works}

Our work is clearly related to the pioneering studies of Bartholdi and Orlin~\cite{BartholdiJames1991SocialChoiceWelfareSTV}, as well as Bartholdi, Tovey, and Trick.~\cite{BARTHOLDI89,Bartholdi92howhard}, which examined numerous strategic single-winner voting problems from a complexity perspective. Their motivation stemmed from the idea that computational complexity can serve as a barrier against strategic behavior.\footnote{It should be pointed out that there exist experimental studies demonstrating that many computationally hard voting problems can be solved efficiently for practical elections (see, e.g.,~\cite{DBLP:conf/sagt/BoehmerFJK23,DBLP:journals/jcss/ErdelyiFRS15a,DBLP:conf/aaai/LoreggiaNRVW14}).} Since their seminal work, research on the complexity of single-winner voting problems---particularly strategic problems in both constructive and destructive models---has driven much of the progress in computational social choice. However, the study of the complexity of multiwinner voting problems had lagged behind, with only a few related papers published. Among these studies, the following works are most relevant to ours. Meir~et~al.~\shortcite{DBLP:journals/jair/MeirPRZ08} explored both constructive and destructive manipulation and control, but mainly for ranking-based multiwinner voting rules. Faliszewski, Skowron, and Talmon~\shortcite{DBLP:conf/atal/FaliszewskiST17} analyzed various constructive bribery problems for {\abmrs}, including the operations AppAdd and AppDel. Bredereck~et~al.~\shortcite{DBLP:conf/aaai/BredereckFNT16} studied constructive shift bribery for both ranking-based multiwinner voting rules and {\abmrs}.  
However, in these problems, there is only a single distinguished candidate whom the election attackers seek to include in the winning committee. In contrast, our study considers multiple distinguished candidates whom the attackers aim to exclude from all winning committees. Yang~\shortcite{DBLP:conf/ijcai/Yang19} delved into the complexity of manipulation and control problems for {\abmrs} with multiple distinguished candidates but  focused exclusively on the constructive model. Destructive strategic voting problems with multiple distinguished candidates have been explored in the context of single-winner voting problems~\cite{DBLP:conf/atal/YangW17} and group identification~\cite{DBLP:journals/aamas/ErdelyiRY20,DBLP:journals/aamas/YangD18}.  Aziz~et~al.~\shortcite{DBLP:conf/atal/AzizGGMMW15} also studied a constructive manipulation problem in which a given set of candidates must form the winning committee. 

Our study of the distance-bounded bribery problems are related to~\cite{DBLP:conf/aaai/BaumeisterHR19,DBLP:conf/atal/Dey19,DBLP:journals/tcs/YangSG19}, where bribery problems with distance restrictions have been studied for ranking-based single-winner voting rules. 

The destructive bribery problems studied in this paper are connected to the concept of robustness of multiwinner voting rules, which focuses on the minimum changes required to alter the winning committees. Notably, Gawron, and Faliszewski~\shortcite{DBLP:conf/aldt/GawronF19} recently studied the complexity of determining the number of operations needed to change the set of winning committees in approval-based voting. They considered the addition, deletion, and replacement operations. Our problems have different objectives. 

In comparison with the conference version~\cite{DBLP:conf/atal/000120}, the current iteration contains all previously missing proofs. 
We note that since the publication of~\cite{DBLP:conf/atal/000120}, significant advancements have been made in this line of research. Specifically, Kusek~et~al.~\cite{DBLP:conf/atal/KusekBF0K23} explored the complexity of constructive bribery problems for the rule AV restricted to the candidate interval domain and the voter interval domain. They also provided a glimpse into the complexity of the destructive counterparts within the same restricted setting. 
Another notable contribution comes from the work of Faliszewski, Gawron, and Kusek~\cite{DBLP:conf/eumas/FaliszewskiGK22}, where they extended the scope of the study on robustness of {\abmrs} to encompass several greedy approval rules.

\subsection{Organization} 
The remainder of the paper is organized as follows. In Section~\ref{sec:preliminaries}, we present definitions of crucial concepts pertinent to our study. Moving on to Section~\ref{sec-ccav-pav}, we delve into the destructive bribery problems for {\vccav} and {\pav}. 
Subsequently, the following section, consisting of five subsections, is dedicated to exploring the complexity of the five destructive bribery problems for {\av}, {\sav}, and {\nsav}. Following this, in Section~\ref{sec-many-fpts}, we study {\fpt}-algorithms for three parameters---the number of candidates, the number of votes, and the number of distinguished candidates. Our study concludes in Section~\ref{sec-conclusion}, where we summarize our findings and outline several intriguing avenues for future research.

\section{Preliminaries}
\label{sec:preliminaries}
\subsection{Multiwinner Voting}
In approval-based voting, each voter is asked to report a subset of candidates that they approve of. Formally, an approval-based election is represented as a tuple $(C, V)$, where $C$ is a set of candidates and~$V$ is a multiset of votes.\footnote{The multiset notation accounts for multiple voters casting identical approval sets.} Each vote, cast by a voter, is defined as a subset of candidates consisting of all candidates approved by that voter. An empty vote is one without any candidates. A subset of candidates is referred to as a committee, while a subset consisting of exactly~$k$ candidates, for some integer~$k$, is termed a $k$-committee. A multiwinner voting rule~$f$ assigns to each election $(C, V)$ and an integer $k \leq |C|$ a collection of $k$-committees—--the winning $k$-committees of~$f$ at $(C, V)$. For $k = 1$, we refer to the candidate in each $1$-winning committee as a winner, and if the rule returns only a singleton $\{\{c\}\}$, where $c\in C$, we call~$c$ the unique winner of $f$ at $(C, V)$.

This paper focuses on the rules {\av}, {\sav}, {\nsav}, {\vccav}, and {\pav}.
In these rules, each vote assigns a certain score to each committee, and winning $k$-committees are those with the maximum total score received from all votes. These rules differ only in how the scores are defined. A summary of committee score definitions across different rules is given in Table~\ref{tab-rules}.
\begin{table}
\caption{A summary of committee score definitions across different rules}
\centering
{
\begin{tabular}{ll}\toprule
rules & score of $w\subseteq C$ in an election $(C, V)$\\ \midrule

{\av} & $\sum_{v\in V} \abs{v\cap w}$\\

{\sav} & $\sum_{v\in V, v\neq \emptyset} \frac{\abs{v\cap w}}{\abs{v}}$ \\

{\nsav} & $\sum_{v\in V, v\neq \emptyset} \frac{\abs{v\cap w}}{\abs{v}}-\sum_{v\in V, v\neq C}\frac{\abs{w\setminus v}}{|C|-\abs{v}}$\\

{\vccav} & $\abs{\{v\in V\setmid v\cap w\neq \emptyset\}}$\\

{\pav} & $\sum_{v\in V, v\cap w\neq \emptyset}\sum_{i=1}^{\abs{v\cap w}}\frac{1}{i}$ \\ \bottomrule
\end{tabular}
}
\label{tab-rules}
\end{table}

In {\av}, each voter gives~$1$ point to every candidate {\they} approve. In {\sav}, each voter has a fixed~$1$ point which is equally distributed among {\their} approved candidates. {\nsav} takes a step further by allowing voters to express dissatisfaction with their disapproved candidates. Particularly, in addition to the fixed~$1$ point equally distributed among {\their} approved candidates, similar to {\sav}, each voter also equally distributes~$-1$ point among all {\their} disapproved candidates. The {\av}/{\sav}/{\nsav} score of a committee is the sum of the total scores of its members. {\sav} and {\nsav} were respectively proposed by Brams and Kilgour~\shortcite{Bram2014Kilgour}, and by Kilgour and Marshall~\shortcite{Kilgour2014Marshall}. 

In {\vccav}, it is presumed that voters only care about whether a committee contains at least one of {\their} approved candidates. A voter is satisfied with a committee if at least one of {\their} approved candidates is included in the committee. This rule selects the $k$-committees satisfying the maximum number of voters. {\vccav} is a special case of a class of rules studied in~\cite{ChamberlinC1983APSR10.2307/1957270} and was suggested by Thiele~\shortcite{Thiele1985}. 

In {\pav}, each committee~$w$ receives $1+\frac{1}{2}+\cdots+\frac{1}{\abs{v\cap w}}$ points from each vote~$v$ such that $v\cap w\neq\emptyset$. {\pav} was first mentioned in the work of Thiele~\shortcite{Thiele1985}. 

It is noteworthy that calculating a winning $k$-committee is {\nph} for {\vccav} and {\pav}, whereas it transitions to a polynomial-time solvable problem for {\av}, {\sav}, and {\nsav}~\cite{DBLP:conf/atal/AzizGGMMW15,DBLP:journals/jair/BetzlerSU13}.

For each $f\in \{\av, \sav, \nsav, \vccav, \pav\}$, an election $E=(C, V)$, a committee~$w\subseteq C$, and a submultiset  $V'\subseteq V$, 
let $\score{V'}{w}{E}{f}$ denote the~$f$ score of~$w$ received from all votes in~$V'$ in the election~$E$. For notational simplicity, we use $\scoreE{w}{E}{f}$ to denote $\score{V}{w}{E}{f}$, the~$f$ score of~$w$ in~$E$. In addition, for~$w$ being a singleton~$\{c\}$, we simply write~$c$ for~$\{c\}$. 


\subsection{The Destructive Bribery Problems}
We explore the five destructive bribery problems characterized by five modification operations, including two atomic operations and three vote-level change operations.
The two atomic operations are defined as follows.

\begin{description}
\item[Approval addition (AppAdd)] A single AppAdd operation on a vote $v\in V$ where $v\neq C$ means the extension of~$v$ by adding exactly one candidate from $C\setminus v$ into~$v$.

\item[Approval deletion (AppDel)] A single AppDel operation on a vote $v\in V$ where $v\neq \emptyset$ entails the removal of one candidate from~$v$.
\end{description}

Let~$f$ be an {\abmr}. Let~$X$ be an atomic operation defined above.

\EP{\prob{Destructive~$X$ Bribery} for~$f$ (\prob{Des$X$Brib-$f$})}
{An election $(C, V)$, a nonempty subset $J\subseteq C$ of distinguished candidates, and two nonnegative integers $k\leq |C|$ and~$\ell$.}
{Is it possible to perform at most~$\ell$ many~$X$ operations on the votes in~$V$ so that none of the candidates in~$J$ is in any winning $k$-committees of the resulting election 
 under the rule~$f$?}

For a {\yesins} $((C, V), J, \ell, k)$ of \prob{Des$X$Brib-$f$}, a feasible solution refers to a sequence of at most~$\ell$ many $X$ operations applied to the votes in $V$, ensuring that none of the candidates in $J$ is contained in any winning $k$-committees of the resulting election. 

Unlike atomic operations, each vote-level change operation modifies a single vote in a specific way.
\begin{description}
\item[Vote-level change (VoteChg)] A single VoteChg operation on a vote~$v$ modifies~$v$ into another vote, which can be any subset of candidates.

\item[Vote-level addition change (VoteAddChg)] A single VoteAddChg operation on a vote~$v$, where $v\neq C$, adds one or more candidates from $C\setminus v$ to~$v$.

\item[Vote-level deletion change (VoteDelChg)] A single VoteDelChg operation on a vote~$v$, where $v\neq\emptyset$, removes one or more candidates from~$v$.
\end{description}

Clearly, each vote-level operation on a vote is equivalent to a sequence of atomic operations applied to the same vote.

The Hamming distance between two votes $v\subseteq C$ and $v'\subseteq C$ is defined as
\[\hamdis{v}{v'}=\abs{v\setminus v'}+\abs{v'\setminus v}.\] 

For a vote-level operation~$Y$ defined above, we study the following distance-bounded bribery problem.

\EP{\prob{$r$-Bounded Destructive~$Y$ Bribery} for~$f$ (\prob{$r$-Des$Y$Brib-$f$})}
{An election $(C, V)$, a nonempty subset $J\subseteq C$ of distinguished candidates, and three nonnegative integers $k\leq |C|$, $\ell\leq |V|$, and~$r$.}
{Is there a subset $V'\subseteq V$ of at most~$\ell$ votes such that we can execute a single~$Y$ operation on each vote in~$V'$ in a way that ensures the Hamming distance between the modified vote and the original vote is at most~$r$ and, moreover, after performing these~$\abs{V'}$ operations, none of the candidates in~$J$ is included in any winning $k$-committees under the rule~$f$?}

For a {\yesins} $I=((C, V), J, \ell, k)$ of \prob{$r$-Des$Y$Brib-$f$}, a feasible solution of~$I$ refers to a submultiset $V'\subseteq V$ of at most~$\ell$ votes and a sequence of~$Y$ operations applied to votes from~$V'$ so that none of~$J$ is contained in any winning $k$-committees of the election after the operations.

\medskip

\noindent{\bf{Remark}.} The winning $k$-committees of every election $(C, V)$, where $V=\emptyset$ or all votes in $V$ are empty votes, are exactly all the $k$-committees of~$C$ under all the five rules studied in this paper. In this case, both {\prob{Des$X$Brib-$f$}} and {\prob{$r$-Des$Y$Brib-$f$}} can be solved trivially for all atomic operations~$X$ and all vote-level operations~$Y$. For the rest of this study, we assume that the election in each instance of the problems contains at least one nonempty vote.

\begin{sidewaystable}
\caption{A summary of the (parameterized) complexity of destructive bribery for {\abmrs}. Here,  ``{\poly}'' means polynomial-time solvable,  ``{\nphshort}'' is shorthand for  ``{\nph}'', and  ``{\wahshort}'' is shorthand for  ``{\wah}''. Parameters for which a parameterized complexity result holds are shown as superscripts. Brackets indicate that the corresponding results remain valid even when the conditions inside the brackets are met. A ``?" next to a parameter signifies that the parameterized complexity of the problem for that parameter remains open.}
\renewcommand{\arraystretch}{2}
\setlength\tabcolsep{3.2pt}
\footnotesize
{
\centering
\begin{tabular}{|l|l|l|l|l|l|}\hline
    &  AppAdd & AppDel & $r$-VoteChg & $r$-VoteAddChg & $r$-VoteDelChg \\ \hline

{\av}&
{\poly} [{\thm}\ref{thm-appadd-av-poly}]&
{\poly} [{\thm}\ref{thm-appdel-av-poly}]&
{\nphshort} [$k=1 \wedge r\geq 4$, {\thm}\ref{thm-pvc-av-nph}]&
{\wahshort}$^{\ell, k}$ [$\abs{J}=1 \wedge r\geq 2$, {\thm}\ref{thm-vac-av-wah}]&
{\nphshort} [$k=1 \wedge r\geq 3$, {\thm}\ref{thm-vdc-av-nph-r-3-k-1}]\\

&
&
&
{\wahshort}$^{\ell, k}$ [$\abs{J}=1 \wedge r\geq 3$, {\thm}\ref{thm-vc-av-wa-hard-k-ell}] &
{\fpt}$^{m, n[r=m]}$ [{\thm}\ref{thm-fpt-m}, Cor.~\ref{thm-vc-av-vac-fpt-n-r-2m}] &
{\fpt}$^{|J|, m, n}$ [{\thms}\ref{thm-vdc-av-fpt-J}, \ref{thm-vdc-av-fpt-n}]\\

&
&
&
{\fpt}$^{m, n[r=m]}$ [{\thm}\ref{thm-fpt-m}, Cor.~\ref{thm-vc-av-vac-fpt-n-r-2m}]\hide{, {\poly} ($r$=1)}&
\hide{{\poly} ($r$=1)}{\poly} [$k=1$, {\thm}\ref{thm-vac-av-poly-k-1}]&
{\poly} [$r$=1, {\thm}\ref{thm-vdc-av-poly-r=1}], $\ell$ ?\\  \hline

{\sav}&
{\nphshort} [$k=1$, {\thm}\ref{thm-appadd-sav-nsav-np-hard}]&
{\nphshort} [$k=1$, {\thm}\ref{thm-appdel-sav-nsav-nph-k=1}]&
{\nphshort} [$k=1 \wedge r\geq 4$, {\thm}\ref{thm-vc-sav-nsav-nph-r-4}]&
{\nphshort} [$k=1\wedge r\geq 1$, {\thm}\ref{thm-vac-sav-nph-k-1-r-1}]&
{\nphshort} [$k=1\wedge r\geq 3$, {\thm}\ref{thm-vdc-sav-nsav-nph-r-3-k-1}]\\

&
{\fpt}$^m$ [{\thm}\ref{thm-fpt-m}]&
{\wahshort}$^{\ell, k}$ [$\abs{J}=1$, {\thm}\ref{thm-appdel-sav-nsav-wah-l-k}]&
{\wahshort}$^{\ell, k}$ [$\abs{J}=1 \wedge r\geq 1$, {\thm}\ref{thm-pvc-sav-nsav-wah-l-k-r-1}]&
{\fpt}$^m$ [{\thm}\ref{thm-fpt-m}]&
{\wahshort}$^{\ell, k}$ [$\abs{J} =1\wedge r\geq 1$, {\thm}\ref{thm-vdc-sav-nsav-wah-l-k-r-1}]\\

&$\ell$, $\abs{\discset}$ ?
&{\fpt}$^m$ [{\thm}\ref{thm-fpt-m}]
&{\fpt}$^m$\hide{, $n$~[$r=m$]} [{\thm}\ref{thm-fpt-m}]
&$\ell$, $\abs{\discset}$ ?
&{\fpt}$^m$ [{\thm}\ref{thm-fpt-m}]\\ \hline

{\nsav}&
{\nphshort} [$k=1$, {\thm}\ref{thm-appadd-sav-nsav-np-hard}]&
{\nphshort} [$k=1$, {\thm}\ref{thm-appdel-sav-nsav-nph-k=1}]&
{\nphshort} [$k=1 \wedge r\geq 4$, {\thm}\ref{thm-vc-sav-nsav-nph-r-4}]&
{\nphshort} [$k=1\wedge r\geq 1$, {\thm}\ref{thm-vac-sav-nph-k-1-r-1}]&
{\nphshort} [$k=1\wedge r\geq 3$, {\thm}\ref{thm-vdc-sav-nsav-nph-r-3-k-1}]\\

&
{\fpt}$^m$ [{\thm}\ref{thm-fpt-m}]&
{\wahshort}$^{\ell, k}$ [$\abs{J}=1$, {\thm}\ref{thm-appdel-sav-nsav-wah-l-k}]&
{\wahshort}$^{\ell, k}$ [$\abs{J}=1 \wedge r\geq 1$, {\thm}\ref{thm-pvc-sav-nsav-wah-l-k-r-1}]&
{\fpt}$^m$ [{\thm}\ref{thm-fpt-m}]&
{\wahshort}$^{\ell, k}$ [$\abs{J}=1 \wedge r\geq 1$, {\thm}\ref{thm-vdc-sav-nsav-wah-l-k-r-1}]\\

&$\ell$, $\abs{\discset}$ ?
&{\fpt}$^m$ [{\thm}\ref{thm-fpt-m}]
&{\fpt}$^m$ [{\thm}\ref{thm-fpt-m}]
&$\ell$, $\abs{\discset}$ ?
&{\fpt}$^m$ [{\thm}\ref{thm-fpt-m}]\\ \hline

{\vccav}&
\multicolumn{2}{c|}{{\wahshort}$^k$ [$|J|=1 \wedge \ell=0$, {Thm.~}\ref{thm-NWD-ccav--wahard}]}&
\multicolumn{3}{c|}{{\wahshort}$^k$ [$|J|=1 \wedge \ell=0\wedge r\geq 0$, {Thm.~}\ref{thm-NWD-ccav--wahard}]}\\

&
\multicolumn{2}{c|}{{\fpt}$^m$ [{\thm}\ref{thm-fpt-m}]}&
\multicolumn{3}{c|}{{\fpt}$^m$ [{\thm}\ref{thm-fpt-m}]} \\ \hline

{\pav}&
\multicolumn{2}{c|}{{\wahshort}$^k$ [$|J|=1 \wedge \ell=0$, {Thm.~}\ref{thm-NWD-ccav--wahard}]}&
\multicolumn{3}{c|}{{\wahshort}$^k$ [$|J|=1 \wedge \ell=0\wedge r\geq 0$, {Thm.~}\ref{thm-NWD-ccav--wahard}]}\\

&
\multicolumn{2}{c|}{{\fpt}$^m$ [{\thm}\ref{thm-fpt-m}]}&
\multicolumn{3}{c|}{{\fpt}$^m$ [{\thm}\ref{thm-fpt-m}]} \\ \hline
\end{tabular}
\label{tab-results-summary}
}
\end{sidewaystable}

\subsection{Some Hard Problems Establishing Our Results}
We assume the reader is acquainted with the fundamentals of computational complexity, parameterized complexity, and graph theory. For those readers who are not, we recommend referring to~\cite{DBLP:conf/lata/Downey12,DBLP:journals/interfaces/Tovey02,Douglas2000}. Our hardness results are based on reductions from the following problems.

\EP
{\prob{Restricted Exact Cover by Three Sets} (\prob{RX3C})}
{A universe~$\xs$ of cardinality~$3\xsize$ for some integer~$\xsize$, along with a collection~$\xc$ of subsets of~$\xs$, where each element of~$\xc$ has cardinality~$3$, and each element of~$\xs$ appears in exactly three elements of~$\xc$.}
{Does~$\xc$ contain an exact set cover of~$\xs$, i.e., a subcollection $\xc'\subseteq \xc$ of cardinality~$\xsize$ such that every element of~$\xs$ appears in exactly one element of~$\xc'$?}

It is known that the {\prob{RX3C}} problem is {\nph}~\cite{DBLP:journals/tcs/Gonzalez85}. 
Note that for every {\prob{RX3C}} instance $(\xs, \xc)$ where $\abs{\xs}=3\xsize$, it holds that $\abs{\xc}=\abs{\xs}$.

An independent set in a graph is a subset of pairwise nonadjacent vertices.
\EP{\prob{$\kappa$-Independent Set}}
{A graph~$G$ and a positive integer~$\kappa$.}
{Does~$G$ have an independent set of~$\kappa$ vertices?}

A clique in a graph is a subset of pairwise adjacent vertices.

\EP{\prob{$\kappa$-Clique}}
{A graph $G$ and a positive integer~$\kappa$.}
{Does $G$ have a clique of $\kappa$ vertices?}

It is well-known that both the  \prob{$\kappa$-Independent Set} problem and the {\prob{$\kappa$-Clique}} problem are {\wah} with respect to the solution size~$\kappa$~\cite{DBLP:conf/coco/DowneyF92}. Moreover, both problems remain {\wah}  when restricted to regular graphs~\cite{DBLP:journals/cj/Cai08}.

For a graph $G=(U, A)$ and a subset $U'\subseteq U$, we denote by $G[U']$ the subgraph of~$G$ induced by~$U'$. For an integer~$i$, let $[i]=\{j\in \mathbb{N} \setmid 1\leq j\leq i\}$ be the set of all positive integers at most~$i$.

\section{Bribery Under Intractable Rules}
\label{sec-ccav-pav}
We commence our exploration with {\vccav} and {\pav}. In contrast to other rules studied in this paper, the computation of winners for {\vccav} and {\pav} is {\nph}~\cite{DBLP:conf/atal/AzizGGMMW15,DBLP:journals/jair/BetzlerSU13}. Nevertheless, these rules remain appealing for several reasons. First, they satisfy several proportional properties that many other rules often fail to meet~\cite{DBLP:journals/scw/AzizBCEFW17,DBLP:conf/atal/FernandezF19,DBLP:conf/aaai/FernandezELGABS17}. Second, numerous {\fpt}-algorithms and competitive approximation algorithms have been reported for calculating winners under these rules~\cite{DBLP:conf/atal/AzizGGMMW15,DBLP:journals/algorithmica/GuptaJST23,DBLP:journals/jair/SkowronF17,DBLP:journals/ai/SkowronFS15,DBLP:journals/aamas/YangW23}. Additionally, polynomial-time algorithms for restricted domains have also been derived~\cite{DBLP:conf/aaai/Peters18,DBLP:conf/ijcai/Yang19a}. 

We show that all bribery problems defined in this paper are {\wah} under {\vccav} and {\pav} with respect to the size of the winning committees, even when there is only one distinguished candidate ($\abs{J}=1$), the budget is~$0$ ($\ell=0$), and every vote approves at most two candidates. We define this special case as {\sc{Non-Winner Determination}} for~$f$ ({\prob{NWD}}-$f$), formally defined below.

\EP{{\prob{NWD}}-$f$}
{An election~$(C, V)$, a distinguished candidate~$p\in C$, and an integer~$k$.}
{Is~$p$  not included in any winning $k$-committees of~$(C, V)$ under a rule~$f$?}  

Our {\wahns} results are based on reductions from the {\prob{$\kappa$-Independent Set}} problem restricted to regular graphs.

\begin{theorem}
\label{thm-NWD-ccav--wahard}
{\emph{\prob{NWD-{\vccav}}}} is {\emph\wah} with respect to the parameter~$k$, even when every vote approves at most two candidates.
\end{theorem}

\begin{proof}
Let $(G, \kappa)$ be an instance of the {\prob{$\kappa$-Independent Set}} problem, where $G=(U, A)$ is a~$\dd$ regular graph with $\dd>0$. We create an instance of {\prob{NWD-{\vccav}}}, denoted $(E, p, k)$, as follows. For each vertex~$u\in U$ in~$G$, we create one candidate~$c(u)$. In addition, we create a distinguished candidate~$p$. Let $C=\{c(u)\setmid u\in U\}\cup \{p\}$. 
The multiset~$V$ of votes is formed as follows. For each edge $\edge{u}{u'}\in A$, we create one vote $v(u, u')=\{c(u), c(u')\}$. Furthermore, we create ${\dd}-1$ votes, each exclusively approving the distinguished candidate~$p$. Finally, we set $k=\kappa$. Let $E=(C, V)$. The construction clearly runs in polynomial time. It remains to show the correctness.

$(\Rightarrow)$ Assume that~$G$ has an independent set of~$\kappa$ vertices. In this case, every $k$-committee corresponding to an independent set of~$\kappa$ vertices satisfies the maximum number of $\kappa\cdot {\dd}=k\cdot \dd$ votes. However, every $k$-committee containing the distinguished candidate~$p$ satisfies at most $(k-1)\cdot {\dd}+({\dd}-1)=k\cdot {\dd}-1$ votes. Therefore,~$p$ cannot be included in any winning $k$-committees.

$(\Leftarrow)$ Assume that~$G$ does not have any independent set of~$\kappa$ vertices. We show below that there exists at least one winning $k$-committee which contains the distinguished candidate~$p$. If~$p$ is included in all winning $k$-committees, we are done. Suppose there is a winning $k$-committee~$w$ which contains only candidates corresponding to a set of~$k$ vertices in~$G$. Since~$G$ lacks an independent set of~$\kappa$ vertices, there must be at least one edge~$\edge{u}{u'}$ in~$G$ such that both~$c(u)$ and~$c(u')$ are in~$w$.
Moreover, with exactly ${\dd}-1$ votes approving only the distinguished candidate~$p$, the committee $w'=(w\setminus \{c(u)\})\cup \{p\}$ satisfies at least $\scoreE{w}{E}{\vccav}-({\dd}-1)+({\dd}-1)=\scoreE{w}{E}{\vccav}$ votes. This implies that~$w'$ is also a winning $k$-committee.
\end{proof}

For {\pav}, we obtain the same result.

\begin{theorem}
\label{thm-NWD-pav--wahard}
{\emph{\prob{NWD-{\pav}}}} is {\emph\wah} with respect to the parameter~$k$, even when every vote approves at most two candidates.
\end{theorem}

\begin{proof}
The proof for {\pav} follows a similar logic to the proof of Theorem~\ref{thm-NWD-ccav--wahard}. Given an  instance  $(G, \kappa)$ of the {\prob{Independent Set}} problem, where $G=(U, A)$ is a~${\dd}$-regular graph for some positive integer~${\dd}$, we create an instance of {\prob{NWD-{\pav}}}, denoted $((C, V), p, k)$, as follows. First, we create the same candidates as in the proof of Theorem~\ref{thm-NWD-ccav--wahard}. Then, for each edge~$\edge{u}{u'}\in A$, we create two votes~$v_1(u, u')$ and~$v_2(u, u')$, each approving exactly~$c(u)$ and~$c(u')$. Additionally, we create $2d-1$ votes, each solely approving the distinguished candidate~$p$. Finally, we set $k=\kappa$. The correctness arguments parallel the proof of Theorem~\ref{thm-NWD-ccav--wahard}. If the graph~$G$ has an independent set of~$\kappa$ vertices, then any $k$-committee corresponding to an independent set of~$\kappa$ vertices has a {\pav} score of $2\kappa\cdot d=2k\cdot \dd$. However, every $k$-committee containing the distinguished candidate has a {\pav} score of at most $(2d-1)+2 d\cdot (k-1)=2k\cdot d-1$. Therefore,~$p$ cannot be included in any winning $k$-committee. Let us consider the other direction now. If~$G$ lacks any independent set of~$\kappa$ vertices, then, akin to the approach in the proof of Theorem~\ref{thm-NWD-ccav--wahard}, one can check that any winning $k$-committee~$w$ without the distinguished candidate can be modified by replacing one specific candidate in~$w$ with~$p$ to obtain another winning $k$-committee.
\end{proof}

The above theorems offer us the following corollary.

\begin{corollary}
\label{cor-ccav-pav-w-hardness}
For each $f\in \{\text{{\emph\vccav}}, \text{{\emph\pav}}\}$, the problems {\emph\pappadd{f}}, {\emph\pappdel{f}}, {\emph\pvc{f}}, {\emph\pvac{f}}, and {\emph\pvdc{f}} are {\emph\wah} when parameterized by~$k$. These assertions hold even when $\abs{\discset}=1$, the budget of the briber is $\ell=0$, and every vote approves at most two candidates. For~{\emph\pvc{f}},~{\emph\pvac{f}}, and~{\emph\pvdc{f}}, the {\emph\wahns} holds for all~$r\geq 0$.
\end{corollary}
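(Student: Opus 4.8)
The plan is to derive this corollary directly from Theorems~\ref{thm-NWD-ccav--wahard} and~\ref{thm-NWD-pav--wahard} by observing that NWD-$f$ is nothing other than the common special case of all five bribery problems obtained by fixing the budget to zero and taking a single distinguished candidate. Concretely, I would argue that when $\ell=0$ the attacker is forbidden from performing any operation at all---whether atomic (AppAdd, AppDel) or vote-level (VC, VAC, VDC)---since each single operation consumes one unit of budget. Hence the election is left untouched, and the bribery question ``can we exclude all of $\discset$ from every winning $k$-committee using at most $\ell$ operations?'' collapses, for $\ell=0$, to the question ``is $\discset$ already disjoint from every winning $k$-committee of the given profile?''.

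Next I would specialize to $\abs{\discset}=1$, say $\discset=\{p\}$. The preceding question then reads exactly ``is $p$ excluded from every winning $k$-committee?'', which is the definition of NWD-$f$. I would therefore point out that the map sending an NWD-$f$ instance $((C,V),p,k)$ to the bribery instance consisting of the same election, $\discset=\{p\}$, $\ell=0$, and the same committee size~$k$ is a reduction from NWD-$f$ to each of the five problems that leaves the parameter~$k$ unchanged. Since $k$ is preserved, the {\wahns} of NWD-$f$ with respect to~$k$ transfers verbatim to each bribery problem.

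For the three vote-level variants I would add one sentence explaining why the distance bound~$r$ is immaterial: because $\ell=0$, no vote is ever modified, so the Hamming-distance constraint imposes no restriction whatsoever, and the very same reduction is valid for every choice of $r\geq 0$ simultaneously. Finally, I would invoke the fact that the hardness instances produced in Theorems~\ref{thm-NWD-ccav--wahard} and~\ref{thm-NWD-pav--wahard} already have every voter approving at most two candidates (the edge-votes approve two candidates and the padding votes approve only~$p$), so the extra structural restriction demanded by the corollary is inherited for free.

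The main point needing care---rather than any genuine difficulty---will be to confirm that the $\ell=0$ slice coincides with NWD-$f$ uniformly across all five operation types. This reduces to checking that each of AppAdd, AppDel, VC, VAC, and VDC consumes at least one unit of budget per application, so that a budget of zero freezes the profile; once this is observed, the correspondence is immediate and the parameterized reduction is trivial.
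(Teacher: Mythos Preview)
Your proposal is correct and follows exactly the approach of the paper. The paper simply states ``The above theorems give us the following corollary'' after having already explained, just before Theorem~\ref{thm-NWD-ccav--wahard}, that the $\abs{\discset}=1$, $\ell=0$ slice of all five bribery problems is precisely NWD-$f$; your write-up spells out this identification and the parameter-preserving trivial reduction in the detail the paper leaves implicit.
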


\section{Bribery Under Polynomial-Time Rules}
In this section, we investigate destructive bribery for {\av}, {\sav}, and {\nsav}.
We begin by examining a relationship between {\sav} and {\nsav} elections, enabling us to automatically derive hardness results for {\nsav} based on those for {\sav}.
Assume that we have a hardness result for {\sav} obtained via a reduction that constructs an election with the following property: for any two candidates with different {\sav} scores, the absolute difference in their scores exceeds a certain threshold. To show  hardness for {\nsav}, we introduce a sizable set of dummy candidates, none of whom are approved by any voter (and none of whom are distinguished candidates). The abundance of the dummy candidates and the substantial minimum score gap between candidates ensure that {\nsav} scores remain dominated by their {\sav} scores. Specifically, a candidate has a higher (or lower) {\sav} score than another {\iff} the same holds for their {\nsav} scores after incorporating the dummy candidates.

This relationship is formalized in Lemma~\ref{lem-relation-sav-nsav} below. 
\hide{To streamline the proof of this formulation, it is connivent for us to study the following lemma in advance.

{\textcolor{blue}{The following lemma does not hold. Consider the case where $m=10$ and $n=4$. Two votes each approve $c$ and some other seven candidates but not $c'$. One vote approving $c'$ and some other six candidates. One vote approving $c'$ and some other eight candidates. The score gap between $c$ and $c'$ is $1/4-1/7-1/9$ whose absolute value is smaller than $\frac{1}{90}$. What if we prove for $\frac{1}{n\cdot m \cdot {m-1}}$?}}

\begin{lemma}
\label{lem-score-gap-sav}
Let $E=(C, V)$ be an election where $m=\abs{C}\geq 2$. Let $c$ and~$c'$ be two candidates with distinct SAV scores in $E$. Then, the absolute value of the SAV score gap between $c$ and $c'$ is lower bounded by $\frac{1}{m\cdot(m-1)}$. That is, $\abs{\scoreE{c}{E}{\sav} - \scoreE{c'}{E}{\sav}}\geq \frac{1}{m\cdot (m-1)}$.
\end{lemma}

\begin{proof}
    Let $E=(C, V)$, $m$, and $c$ and $c'$ are as stipulated in the lemma. We will prove the lemma by induction on $m$. 
Notice that if a vote approves both $c$ and $c'$, or approves neither $c$ nor $c'$, then, the vote has no contribution to the score gap between $c$ and $c'$. For each $i\in [m-1]$, we let $x_i$ denote the number of votes $v$ in $V$ such that $\abs{v}=i$, $c\in v$, and $c'\not\in v$. Similarly, let $y_i$ denote the number of votes $v$ in $V$ such that $\abs{v}=i$, $c'\in v$, and $c\not\in v$. Furthermore, let $z_i=x_i-y_i$. Note that each $z_i$, $i\in [m-1]$ is an integer. Then, it holds that 
\begin{equation}
\label{eq-score-gap-sav}
    \scoreE{c}{E}{\sav} - \scoreE{c'}{E}{\sav}=\sum_{i=1}^{m-1}\frac{z_i}{i}.
\end{equation}
As a consequence, to prove the lemma, it is sufficient to prove that either $\sum_{i=1}^{m-1}\frac{z_i}{i}=0$ or $\abs{\sum_{i=1}^{m-1}\frac{z_i}{i}}\geq \frac{1}{m\cdot (m-1)}$, for all integers $z_i$, $i\in [m]$ and $m\geq 2$. We prove this by induction on $m$ below. 

For $m=2$, we have that $\frac{1}{m\cdot (m-1)}=\frac{1}{2}$. The right side of Equation~\eqref{eq-score-gap-sav} is degenerated to $z_1$. As $c$ and $c'$ have distinct SAV score in $E$, we have that $\abs{z_1}\geq 1>\frac{1}{2}$. 

For $m=3$, we have that $\frac{1}{m\cdot (m-1)}=\frac{1}{6}$. The right side of Equation~\eqref{eq-score-gap-sav} is degenerated to $z_1+\frac{z_2}{2}$. Clearly, if $z_1+\frac{z_2}{2}\neq 0$, then the minimum value of $\abs{z_1+\frac{z_2}{2}}$ is $\frac{1}{2}$, which is strictly larger than $\frac{1}{6}$.

Assuming that the lemma holds for every  $m= 2, 3,\dots, h-1$, where $h\geq 4$, we prove that the lemma holds for $m=h$. 
For simplicity, let 
\[A=\sum_{i=1}^{m-2}\frac{z_i}{i}.\]
Our proof proceeds by distinguishing the following cases. Note that when we discuss a case, we assume that all cases discussed before do not occur.

\begin{description}
    \item[Case~1:] $z_{m-1}=0$. \hfill

     By induction, we have that $\abs{A}\geq \frac{1}{(m-1)\cdot (m-2)}$, which is larger than $\frac{1}{m\cdot {m-1}}$, for all $m\geq 3$. This completes the proof for this case.
     
     \item[Case~2:] $A=0$. \hfill

    Note that $\abs{z_{m-1}}\geq 1$. By an elementary calculation, one can easily verify that $\abs{\frac{z_{m-1}}{m-1}}\geq \frac{1}{m\cdot {m-1}}$ holds for all $m\geq 3$. 

     \item[Case~3:] $A\cdot z_{m-1}> 0$, i.e., either both $A$ and $z_{m-1}$ are positive, or both are negative. \hfill

     If both $A$ and $z_{m-1}$ are positive, we have that 
     \begin{align*}
           \abs{A+\frac{z_{m-1}}{m-1}}&=\abs{A}+\abs{\frac{z_{m-1}}{m-1}}\\
                                      &\geq \frac{1}{(m-1)\cdot (m-2)}+\frac{z_{m-1}}{m-1}\\
                                      &= \frac{1}{m-1}\cdot (\frac{1}{m-2}+z_{m-1})\\
                                      &>\frac{1}{(m-1)\cdot m}.\\
     \end{align*}
     The transition from the first line to the second line follows from the induction and the assumption that $z_{m-1}$ is positive. 
     The transition from the third line to the last line relies on the fact that $z_{m-1}$ is a positive integer.

     The proof for the case where both $A$ and $z_{m-1}$ are negative is analogous.

     \item[Case~4:] $A\cdot z_{m-1}< 0$. \hfill 

    In this case, $\abs{A+\frac{z_{m-1}}{m-1}}=\abs{\abs{A}-\abs{\frac{z_{m-1}}{m-1}}}$. We distinguish further between the following two cases

     If $\abs{A}>\abs{\frac{z_{m-1}}{m-1}}$, we have that 
          \begin{align*}
           \abs{A+\frac{z_{m-1}}{m-1}}&=\abs{A}+\frac{z_{m-1}}{m-1}\\
                                      &\geq \frac{1}{(m-1)\cdot (m-2)}+\frac{z_{m-1}}{m-1}\\
                                      &= \frac{1}{m-1}\cdot (\frac{1}{m-2}+z_{m-1})\\
                                      &>\frac{1}{(m-1)\cdot m}.\\
     \end{align*}
\end{description}
\end{proof}
}

\begin{lemma}
\label{lem-relation-sav-nsav}
Let $E = (C, V)$ be an election of $m$ candidates and $n$ votes with $m \geq 2$.  
Let $c, c' \in C$ be two candidates such that the absolute difference in their {\sav} scores in $E$ is at least $\frac{1}{m \cdot (m-1)}$.  
Let $D$ be a set of at least $n\cdot m^2$ candidates disjoint from $C$, and let $E' = (C \cup D, V)$.  
Then,  
$\scoreE{c}{E}{\sav} > \scoreE{c'}{E}{\sav}$ {\iff} $\scoreE{c}{E'}{\nsav} > \scoreE{c'}{E'}{\nsav}$.
\end{lemma}

\begin{proof}
In election $E'$, the candidates in $D$ receive no approvals from any vote in $V$.  
Thus, the {\nsav} score of any candidate $c \in C$ in $E'$ is its {\sav} score in $E$ minus  
$\sum_{v \in V, c \notin v} \frac{1}{m + \abs{D} - \abs{v}}$. 
Since $\abs{D} \geq n\cdot m^2$ and $\abs{v} \leq m-1$, it follows that  
\[\sum_{v \in V, c \notin v} \frac{1}{m + \abs{D} - \abs{v}} \leq \frac{n}{m+n\cdot m^2-(m-1)}\leq \frac{1}{m^2+1/n} < \frac{1}{m\cdot (m-1)}.\]
The lemma follows.
\end{proof}

All hardness results for {\nsav} in this paper stem from modifications of the  reductions for the same problems under {\sav}  by adding dummy candidates as discussed above. The correctness of the modifications are ensured by Lemma~\ref{lem-relation-sav-nsav} and the following fact: in all our reductions for {\sav}, the constructed elections satisfy the property that for any two candidates with different {\sav} scores, the absolute difference in their scores is greater than $\frac{1}{m\cdot (m-1)}$.\footnote{One can check that in all reductions, it is optimal not to add dummy candidates in the sets of approved candidates in votes.}

In the following, we divide our discussions into several subsections, each of which is dedicated to a concrete bribery problem.

\subsection{Approval Addition}
In this section, we study the atomic operation AppAdd. We show that among the five rules, AV is the sole rule that admits a polynomial-time algorithm.

\begin{theorem}
\label{thm-appadd-av-poly}
{\emph\pappadd{\emph\av}} is polynomial-time solvable.
\end{theorem}

\begin{proof}
Let $I=(E, J, k, \ell)$ be an instance of  {\pappadd{\av}}, where $E=(C, V)$ and $J\subseteq C$. Let~$m=\abs{C}$ and~$n=\abs{V}$ denote the number of candidates and the number of votes in~$E$, respectively. Observe that if some candidate from $\discset$ is included in all votes,~$I$ is a {\noins}. Therefore, in the following, we consider the case where none of~$\discset$ is included in all votes. We derive an algorithm as follows. First, we compute the {\av} scores of all candidates, and let~$c^{\star}$ be a candidate from~$\discset$ such that $\scoreE{c^{\star}}{E}{\av} \geq \scoreE{c}{E}{\av}$ for all $c\in \discset$. Let 
\[C^{>}(c^{\star})=\{c\in C\setminus\discset \setmid \scoreE{c}{E}{\av} > \scoreE{c^{\star}}{E}{\av}\}\] 
be the set of all nondistinguished candidates whose {\av} scores are higher than that of~$c^{\star}$. 

If $\abs{C^{>}(c^{\star})} \geq k$, we conclude that~$I$ is a {\yesins}, since every distinguished candidate has a lower AV score than any candidate in $C^{>}(c^{\star})$. As $C^{>}(c^{\star})$ contains at least $k$ candidates, none of the distinguished candidates can be a winner.

Now, consider the case where $\abs{C^{>}(c^{\star})} < k$. Adding candidates to votes never decreases the {\av} scores of any candidate, and it is always optimal to avoid adding distinguished candidates to any vote. Thus, the problem reduces to determining whether we can perform at most~$\ell$ AppAdd operations to ensure that at least~$k$ candidates from $C \setminus J$ achieve an {\av} score of at least $\scoreE{c^{\star}}{E}{\av} + 1$. For each candidate $c\in C\setminus (\discset \cup C^{>}(c^{\star}))$, define
\[{\sf{diff}}(c)=\scoreE{c^{\star}}{E}{\av}+1-\scoreE{c}{E}{\av}\] 
as the minimum number of AppAdd operations  required to increase~$c$'s {\av} score to at least $\scoreE{c^{\star}}{E}{\av} + 1$. We order the candidates in $C \setminus (\discset \cup C^{>}(c^{\star}))$ in nondecreasing order of~${\sf{diff}}(c)$ and define~$A$ as the set containing the first $k - \abs{C^{>}(c^{\star})}$ candidates in this ordering.

If $\sum_{c \in A} {\sf{diff}}(c) \leq \ell$, we conclude that~$I$ is a {\yesins}; otherwise,~$I$ is a {\noins}. 
The correctness of this step follows from the following reasoning: 
\begin{itemize}
    \item In the first case, for each candidate $c \in A$, we arbitrarily select ${\textsf{diff}}(c)$ votes from 
    $\{v \in V \setmid c\not\in v\}$, add~$c$ to these votes, and decrement~$\ell$ by ${\textsf{diff}}(c)$ accordingly. After performing these operations, we have $\ell \geq 0$, all candidates in $A \cup C^{>}(c^{\star})$ have strictly higher scores than~$c^{\star}$, and $\abs{A \cup C^{>}(c^{\star})} = k$. 
    \item Conversely, if $\sum_{c \in A}{\sf{diff}}(c) > \ell$, then by the definition of $A$, more than~$\ell$ AppAdd operations would be needed to ensure that at least $k - \abs{C^{>}(c^{\star})}$ candidates from $C \setminus (\discset \cup C^{>}(c^{\star}))$ reach a score of at least $\scoreE{c^{\star}}{E}{\av} + 1$. 
\end{itemize}
It is clear that the above algorithm runs in polynomial time with respect to the size of $I$. This completes the proof of the theorem.
\end{proof}

An essential foundation for the correctness of the algorithm in the proof of Theorem~\ref{thm-appadd-av-poly} is the observation that adding candidates in a vote does not change the scores of other candidates. This characteristic enables a greedy solution to the instance. However, in {\sav} and {\nsav} voting, the addition of a candidate in a vote increases the score of that candidate while decreasing the scores of other candidates in the same vote. This divergence in behavior between {\av} and {\sav}/{\nsav} fundamentally distinguishes the complexity of the bribery problems under these rules.

\begin{theorem}
\label{thm-appadd-sav-nsav-np-hard}
{\emph\pappadd{{\emph\sav}}} and {\emph\pappadd{{\emph\nsav}}}  are {\emph\nph} even if $k=1$.
\end{theorem}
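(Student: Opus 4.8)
The plan is to reduce from RX3C for {\sav}, and then obtain the {\nsav} case essentially for free via Lemma~\ref{lem-relation-sav-nsav}: append at least $n\cdot m^2$ never-approved dummy candidates (none distinguished) so that the {\nsav} order on the real candidates coincides with their {\sav} order and no dummy can ever become a winner. The engine of the reduction is exactly the {\sav}-specific phenomenon stressed right after Theorem~\ref{thm-appadd-av-poly}: adding a candidate to a vote~$v$ raises that candidate but simultaneously lowers every \emph{other} candidate approved in~$v$. This is what lets a single AppAdd ``hit three elements at once'', matching the set size~$3$ of RX3C, and it is precisely the behaviour that fails for {\av}.

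For the construction, given an RX3C instance $(\xs,\xc)$ with $\abs{\xs}=\abs{\xc}=3\xsize$, I would create one candidate $c(\xse)$ for each element $\xse\in\xs$ and make them all distinguished, i.e.\ $\discset=\{c(\xse)\setmid \xse\in\xs\}$, together with one extra non-distinguished candidate~$p$ that initially lies in no vote. For each set $\xce=\{\xse,\xse',\xse''\}\in\xc$ I create a small ``set vote'' $\{c(\xse),c(\xse'),c(\xse'')\}$ of size~$3$, and I add $\Theta(\xsize^2)$ large ``background votes'', each approving all of $\{c(\xse)\setmid \xse\in\xs\}$ (hence of size $3\xsize$), whose sole purpose is to lift every $c(\xse)$ to one common, carefully calibrated {\sav} score~$S$. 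Finally set $k=1$ and $\ell=\xsize$. Since every vote approves only (distinguished) element-candidates, the only candidate the briber ever has reason to add is~$p$: adding~$p$ to a set vote raises~$p$ by $\frac14$ and drops each of its three element-candidates by $\frac{1}{12}$, whereas adding~$p$ to a background vote, or diluting an element-candidate through a background vote, is almost worthless ($\approx\frac{1}{3\xsize}$).

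The intended correspondence is that an exact cover $\xc'$ (with $\abs{\xc'}=\xsize$) tells the briber to add~$p$ to exactly the $\xsize$ set votes of $\xc'$, using up the whole budget; this lifts~$p$ to $\frac{\xsize}{4}=S$ while lowering every element-candidate (covered exactly once) to $S-\frac{1}{12}<p$, so~$p$ becomes the unique winner and no distinguished candidate wins. I would calibrate~$S$ so that $p$'s \emph{maximum} reachable score --- obtained by spending all $\xsize$ additions on distinct smallest (i.e.\ set) votes --- equals exactly~$S$, and so that a single set-vote dilution is \emph{just} enough to push an element-candidate below that value while an undiluted one stays at~$S$ and hence remains a co-winner.

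The main obstacle, and the part needing the most care, is the reverse direction: if no exact cover exists, no sequence of at most~$\xsize$ AppAdd operations can make all of $\discset$ non-winners. The argument I would give is that to succeed every element-candidate must be dropped strictly below~$S$ (the best~$p$ can reach); a set-vote addition drops at most three element-candidates enough to cross, a background-vote addition is provably too weak to cross within the budget, and~$p$ can never exceed~$S$. Hence the set votes that receive an addition must cover all $3\xsize$ elements, and covering $3\xsize$ elements with at most $\xsize$ operations each hitting at most three forces a partition of~$\xs$, i.e.\ an exact cover; conversely, whenever an element is left uncovered, its candidate stays at~$S$ and ties or beats everyone, so some distinguished candidate still wins. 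The delicate technical points are the integrality of the number of background votes realizing the exact value~$S$ (handled by placing~$S$ in the width-$\tfrac{1}{12}$ window just above $\tfrac{\xsize}{4}$), and checking that \emph{mixed} strategies (some set-vote additions, some background dilutions, or adding candidates other than~$p$) are all dominated and still cannot cross the threshold without a full cover.
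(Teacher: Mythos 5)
Your proposal is correct and follows essentially the same route as the paper's own proof: a reduction from RX3C with one distinguished candidate per element, a never-approved candidate~$p$, size-$3$ set votes plus $\Theta(\xsize^2)$ background votes calibrating every element-candidate to the common score $\frac{\xsize}{4}$, budget $\ell=\xsize$, and the same $\frac14$-gain/$\frac{1}{12}$-drop accounting (the paper uses exactly $\frac34\xsize^2-3\xsize$ background votes with $\xsize>4$ even to make this integral, and handles the {\nsav} case via Lemma~\ref{lem-relation-sav-nsav} just as you do). The ``delicate points'' you flag---ruling out background-vote modifications and mixed strategies---are precisely the claims the paper proves in its backward direction, so nothing essential is missing.
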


\begin{proof}
We provide the proof for {\sav} via a reduction from the {\prob{RX3C}} problem. The reduction for {\nsav} is a modification of the reduction for {\sav} based on Lemma~\ref{lem-relation-sav-nsav}.

Let $I=(\xs, \xc)$ be an instance of {\prob{RX3C}}, where $\abs{\xs}=\abs{\xc}=3\xsize$. We assume that $\xsize\geq 6$ and that~$\xsize$ is even, which does not affect the complexity of the problem.\footnote{If $\xsize< 6$, the problem can be solved in polynomial time. For odd~$\xsize$, we can obtain an equivalent instance by duplicating the given instance. In this transformation, the original instance has an exact set cover of cardinality~$\xsize$ if and only if the new instance has an exact set cover of cardinality~$2\xsize$.
}
We create an instance denoted by $g(I)=((C, V), \discset, \ell, k)$ of {\pappadd{{\sav}}} as follows.

For each $\xse\in \xs$, we create one candidate~$c(\xse)$. Let $C(\xs)=\{c(\xse) \setmid \xse\in \xs\}$. In addition, we create a candidate denoted by~$\p$. Let $C=C(\xs)\cup \{\p\}$, let $\discset=C(\xs)$, let $k=1$, and define $\ell=\xsize$. The multiset~$V$ comprises the following votes.
First, we create $\frac{3}{4} {\xsize}^2-3{\xsize}$ votes, each approving all candidates except~$\p$. As $\xsize\geq 6$ and~$\xsize$ is even,~$\frac{3}{4} {\xsize}^2-3{\xsize}$ is a positive integer. Additionally, for each $\xce\in \xc$, we create one vote~$v(\xce)=\{c(\xse) \setmid \xse\in \xce\}$ approving exactly the three candidates corresponding to~$\xce$. This completes the construction of~$g(I)$, which can be done in  polynomial time. In the election~$(C, V)$, the {\sav} score of~$\p$ is~$0$, and that of every other candidate is  \[\left(\frac{3}{4} \xsize^2-3{\xsize}\right)\cdot \frac{1}{3\xsize}+3\times \frac{1}{3}=\frac{\xsize}{4}.\] It remains to show the correctness of the reduction. Since $C = \discset \cup \{\p\}$ and $k = 1$, the problem $g(I)$ is equivalent to determining whether at most $\ell = \xsize$ AppAdd operations can be performed on $(C, V)$ to make $p$ the unique winner.

$(\Rightarrow)$ Assume that $\xc'\subseteq \xc$ is an exact $3$-set cover of~$\xs$, i.e., $\abs{\xc'}=\xsize$ and every $\xse\in \xs$ appears in exactly one element of~$\xc'$. Consider the election~$E$ obtained from~$(C, V)$ by adding~$\p$ to the vote~$v(\xce)$ for every $\xce\in \xc'$. Since $\abs{\xc'} = \xsize$, this process results in exactly $\xsize=\ell$ additions.  In~$E$, the votes approving~$\p$ are precisely those corresponding to~$\xc'$. Since each of these~$\xsize$ votes approves four candidates in~$E$, the {\sav} score of~$p$ in~$E$ is~$\frac{\xsize}{4}$. For each candidate $c(\xse)$ where $\xse \in \xs$, their {\sav} score decreases by $\frac{1}{3} - \frac{1}{4} = \frac{1}{12}$ when $\p$ is added to a vote $v(\xce)$ such that $\xse \in \xce$ for some $\xce \in \xc'$. Since~$\xc'$ is an exact set cover, there is exactly one such vote for each candidate~$c(\xse)$. Therefore,  the {\sav} score of~$c(\xse)$ where $\xse\in \xs$ becomes $\frac{\xsize}{4}-\frac{1}{12}$ in~$E$. As a result,~$p$ uniquely wins~$E$, meaning that $g(I)$ is a {\yesins}.

$(\Leftarrow)$ Assume that we can perform at most $\ell=\xsize$ AppAdd operations on~$(C, V)$ so that~$p$ uniquely wins the resulting election. Precisely, there exists $V'\subseteq V$, a multiset $V''$ of $\abs{V'}$ votes, and a one-to-one correspondence $h: V'\rightarrow V''$ such that the following conditions are satisfied:
\begin{enumerate}
    \item[(1)] For every $v\in V'$, it holds that $v\subsetneq h(v)$, i.e., $v$ is changed into~$h(v)$ by performing at least one AppAdd operation;
    \item[(2)] The total number of AppAdd operations performed satisfies $\sum_{v \in V'} \abs{h(v) \setminus v} \leq \xsize$. 
    \item[(3)] $p$ uniquely wins the resulting election $E=(C, V\setminus V'\cup V'')$. 
\end{enumerate}
Let $V(\xc)=\{c(\xce) \setmid \xce\in \xc\}$, and let~$t=\abs{V'\cap (V\setminus V(\xc))}$ be the number of votes, among the $\frac{3}{4} {\xsize}^2-3{\xsize}$ votes approving~$\discset$, that are changed. 
We prove that all votes in $V'$ originate from $V(\xc)$, as reflected by the following claim.  

\begin{claim}  
\label{claim-appadd-sav-nsav-np-hard}  
    $t = 0$.  
\end{claim}  

\begin{proof}
Assume, for the sake of contradiction, that $t>0$. Then, by Condition~(1) and Condition~(2) above, at most $\xsize-1$ votes from~$V(\xc)$ are changed, i.e., $\abs{V'\cap V(\xc)}\leq \xsize-1$. This implies that there exists at least one distinguished candidate~$c(\xse)$, where $\xse\in \xs$, such that none of the three votes~$v(\xce)$ with $\xse\in \xce\in \xc$ is contained in~$V'$. Consequently, the {\sav} score of~$c(\xse)$ in~$E$ is
\[\left(\frac{3}{4} {\xsize}^2-3{\xsize-t}\right)\cdot \frac{1}{3\xsize}+\frac{t}{3\xsize+1}+1=\frac{\xsize}{4}-\frac{t}{3\xsize}+\frac{t}{3\xsize+1}.\]
However, by Condition~(1), the {\sav} score of~$\p$ in~$E$ can be at most
\[\frac{t}{3\xsize+1}+\frac{\xsize-t}{4},\] which is strictly smaller than that of~$c(\xse)$. This contradicts Condition~(3). 
We can therefore conclude that $t=0$.
\end{proof}

By Claim~\ref{claim-appadd-sav-nsav-np-hard}, we know that $V'\subseteq V(\xc)$. Let $\xc'=\{\xce\in \xc \setmid v(\xce)\in V'\}$. Observe that when some~$v(\xce)$, where $\xce\in \xc$, is to be modified, it is optimal to add only~$p$ to the vote in order to achieve the goal of making $p$ the unique winner.
Therefore, we may assume that for every $v(\xce)\in V'$, it holds that $h(v(\xce))=v(\xce)\cup \{p\}$. Moreover, it is optimal to fully use the budget $\ell=k$. Thus, we assume that $\abs{V'}=\ell=\kappa$. It follows that $\abs{\xc'}=\kappa$, and the {\sav} score of~$p$ in~$E$ is~$\frac{\xsize}{4}$. Since $p$ uniquely wins~$E$, and every candidate~$c(\xse)$, where $\xse\in \xs$, has an original {\sav} score of $\frac{\xsize}{4}$, the {\sav} score of $c(\xse)$ must be decreased in the final election~$E$. By the construction above, this means that for every candidate~$c(\xse)$, where $\xse\in \xs$, at least one vote~$v(\xce)$ with $\xse\in \xce\in \xc$ is contained in~$V'$. It follows that~$\xc'$ is an exact $3$-set cover of~$\xs$. 

To establish the hardness of {\pappadd{{\nsav}}}, we augment the above constructed election $(C, V)$ by introducing an additional set of $\abs{V}\cdot \abs{C}^2$ candidates, which are set as distinguished candidates (i.e., placed into the set~$J$); other components of the reduction remain unchanged. The correctness for the $\Rightarrow$ direction is similar.  For the $\Leftarrow$ direction, note that in the original election $(C, V)$, there are only two distinct {\sav} scores: $0$ and $\frac{\xsize}{4}$. It is therefore evident that the precondition in Lemma~\ref{lem-relation-sav-nsav} holds. The correctness of the $\Leftarrow$ direction is then ensured by Lemma~\ref{lem-relation-sav-nsav} and the observation that if the constructed instance of {{\pappadd{{\nsav}}}} is a {\yesins}, it is optimal to change $\ell$ votes from $V(\xc)$ as described in the proof of the $\Leftarrow$ direction above.
\end{proof}


\hide{Note that by adding considerably large number of dummy candidates who are never approved by anyone in an election, the {\nsav} order in any election is determined by the {\sav} order in the same election. Therefore, the {\nphns} proofs for {\sav} can be modified to work for {\nsav} by adding many such dummy candidates (for example, $\xsize^{10}$ dummy candidates). However, one may not be satisfied with such modifications since there are too many candidates who are not approved by anyone. For this reason, we provide alternative proofs where in the constructed election all candidates are approved.

\begin{theorem}
\label{thm-globaladdbribery-nsav-np-hard}
{\pappadd{{\nsav}}} is {\nph} even when $k=1$.
\end{theorem}

\begin{proof}
We prove the theorem via a reduction from the {\prob{RX3C}} problem. Let $(\xs, \xc)$ be an instance of the {\prob{RX3C}} problem such that $\abs{\xs}=\abs{\xc}=3\xsize$. Without loss of generality, assume~$\xsize$ is dividable by~$10$. We create an instance $((C, V), \discset\subseteq C, \ell)$ of GlobalAddBribery for {\nsav} as follows. For each $\xse\in \xs$, we create a candidate denoted by~$c(\xse)$. Let $C(\xs)=\{c(\xse) \setmid \xse\in\xs\}$. In addition, we create a candidate denoted by~$\p$. Therefore, we have $C=C(\xs)\cup \{\p\}$. Let $(C^1(\xs),C^2(\xs), C^3(\xs))$ be a partition of~$C(\xs)$ such that $\abs{C^i(\xs)}=\xsize$ for all $i\in [3]$.  For each $\xce\in \xc$, let $C=\{c(\xse) \setmid \xse\in\xce\}$ be the set of the three candidates corresponding to the three elements in~$\xce$. We create in total $\frac{11\xsize}{10}+3\xsize$ votes as follows:  
\begin{itemize}
    \item  a set~$V(p)$ of $\xsize/2$ votes, each approving exactly~$p$;

\item a set~$V^1$ of $\xsize/5$ votes, each approving exactly the candidates from $C^1(\xs)$;

\item a set~$V^2$ of $\xsize/5$ votes, each approving exactly the candidates from $C^2(\xs)$;

\item a set~$V^3$ of $\xsize/5$ votes, each approving exactly the candidates from $C^3(\xs)$; and

\item for each $\xce\in \xc$, one vote $v(\xce)=C(\xc)\setminus C(\xce)$.
\end{itemize}

We complete the reduction by setting $\ell=\xsize$.
Before proving the correctness of the reduction, let us first calculate the {\nsav} scores of all candidates.

The {\nsav} score of~$\p$ is
\begin{equation}
\label{nsav-score-p-thm-globaladdbribery-nsav-np-hard}
\frac{\xsize}{2}-\frac{3\xsize}{5}\cdot \frac{1}{2\xsize+1}-\frac{3\xsize}{4}=-\frac{\xsize}{4}-\frac{3\xsize}{5(2\xsize+1)}.
\end{equation}

The {\nsav} score of every~$c(\xse)$ where $\xse\in\xs$ is
\begin{equation}
\label{nsav-score-c-thm-globaladdbribery-nsav-np-hard}
\frac{\xsize}{5}\cdot \frac{1}{\xsize}-\frac{2\xsize}{5}\cdot \frac{1}{2\xsize+1}-\frac{\xsize}{2}\cdot \frac{1}{3\xsize}+\frac{3\xsize-3}{3\xsize-3}-\frac{3}{4}=\frac{17}{60}-\frac{2\xsize}{5(2\xsize+1)}.
\end{equation}

It is immediately clear that  the {\nsav} score of~$\p$ is strictly smaller than that of any $c(\xse)\in C(\xs)$.
Now we prove the correctness of the reduction.

$(\Rightarrow)$ Assume that $\xc'\subseteq\xc$ is an exact $3$-set cover of~$\xs$. Let $V(\xc')=\{v(\xce) \setmid \xce\in \xc'\}$ be the set of~$\xsize$ votes corresponding to elements in the $3$-set cover~$\xc'$. Consider the modification: for each $\xce\in \xc'$, add to the vote~$v(\xce)$ the candidate~$\p$. Adding~$\p$ to one vote~$v(\xce)$ where $\xce\in \xc'$ increases the {\nsav} score of~$\p$ by $\frac{1}{4}+\frac{1}{3\xsize-2}$. As we modify exactly~$\xsize$ votes, the {\nsav} score of~$\p$ increases by $\xsize \cdot (\frac{1}{4}+\frac{1}{3\xsize-2})$ in total. By Equality~\eqref{nsav-score-p-thm-globaladdbribery-nsav-np-hard}, the {\nsav} score of~$\p$ after the above modification is
\[-\frac{\xsize}{4}-\frac{3\xsize}{5(2\xsize+1)}+\xsize \cdot \left(\frac{1}{4}+\frac{1}{3\xsize-2}\right)=\frac{\xsize}{3\xsize-2}-\frac{3\xsize}{5(2\xsize+1)}.\]
Consider the {\nsav} score of a candidate~$c(\xse)$ where $\xse\in \xs$ after the above modification. When we add~$\p$ to some vote~$v(\xce)$ where $\xce\in \xc'$, if $\xse\in\xce$, then the {\nsav} score of~$c(\xse)$ decreases by $\frac{1}{3}-\frac{1}{4}=\frac{1}{12}$. In the other case where $\xse\not\in \xce$, the {\nsav} score of~$c(\xse)$ decreases by $\frac{1}{3\xsize-3}-\frac{1}{3\xsize-2}$. As~$\xc'$ is an exact $3$-set cover of~$\xs$, there are exactly one vote $v(\xce)\in V(\xc')$ such that $\xse\in \xce$. Therefore, the above modifications make in total a decrease of $\frac{1}{12}+(\xsize-1)\cdot \left(\frac{1}{3\xsize-3}-\frac{1}{3\xsize-2}\right)$ in the {\nsav} score of~$c(\xse)$. By Equality~\eqref{nsav-score-c-thm-globaladdbribery-nsav-np-hard}, the {\nsav} score of~$c(\xse)$ after the above modifications becomes
\begin{align*}
  & \frac{17}{60}-\frac{2\xsize}{5(2\xsize+1)}-\left(\frac{1}{12}+(\xsize-1)\cdot \left(\frac{1}{3\xsize-3}-\frac{1}{3\xsize-2}\right)\right)\\
= &-\frac{2}{15}+\frac{\xsize-1}{3\xsize-2}-\frac{2\xsize}{5(2\xsize+1)},\\
\end{align*}
which is strictly smaller than that of the final {\nsav} score of~$\p$ for any positive~$\xsize$, implying that~$\{\p\}$ is the unique winning $1$-committee after the above modifications.

$(\Leftarrow)$ The proof for this direction is based on three important observations. First, none of the votes from $V(p)\cup V^1\cup V^2\cup V^3$ should be modified in order to make~$\{\p\}$ the unique winning $1$-committee. The reason for~$V(p)$ is clear since adding any more candidates in the vote only decreases the score of~$\p$ and increases the score of the added candidates. We should not modify votes from $V^1\cup V^2\cup V^3$, because any modification of votes from this submultiset only increases the score of~$\p$ by $\frac{1}{\bigo(\xsize)}$ which cannot upgrade the score of~$\p$ enough due to the big score gap between the {\nsav} scores of~$\p$ and those of any others. So, we assume that only the votes corresponding to~$\xc$ are modified. Another important observation is that if a vote~$v(\xce)$ is supposed to be modified, it is optimal to only add~$\p$ into the vote because adding any further candidate  decreases the {\nsav} score of~$\p$ and increases the {\nsav} score of some other candidates. The third observation is that it is always better to modify exactly~$\xsize$ votes. According to these three observations, we may denote by~$V(\xc')$ the set of modified votes such that $\abs{V(\xc')}=\xsize$. We claim then that $\xc'=\{\xce\in\xc \setmid v(\xce)\in V(\xc')\}$ corresponding to the modified votes is an exact $3$-set cover of~$\xs$. Assume for the sake of contradiction~$\xc'$ is not an exact $3$-set cover. This means that there exists some $\xse\in \xs$ which is not included in any $\xce\in \xc'$. Using the argument in the first case, we know that the {\nsav} score of the candidate~$c(\xse)$ decreases in total
\[\xsize \cdot \left(\frac{1}{3\xsize-3}-\frac{1}{3\xsize-2}\right)\] after modifying all votes from~$V(\xc')$ in the above way. Therefore, after the modification, the {\nsav} score of~$c(\xse)$ becomes
\[\frac{17}{60}-\frac{2\xsize}{5(2\xsize+1)}-\xsize\cdot \left(\frac{1}{3\xsize-3}-\frac{1}{3\xsize-2}\right)\]
which is strictly greater than that of~$\p$ which is $\frac{\xsize}{3\xsize-2}-\frac{3\xsize}{5(2\xsize+1)}$ as calculated in the first case under our assumption $\xsize\geq 10$.
\end{proof}
}

\subsection{Approval Deletion}
This section examines the atomic operation AppDel. 
For a sequence $\textsf{OP}$ of AppDel operations and an election~$E=(C, V)$, we denote by $\textsf{OP}(E)$ the election obtained by applying all operations in~$\textsf{OP}$ to~$E$. 
Additionally, we slightly abuse notation by letting $\abs{\textsf{OP}}$ represent the number of AppDel operations in~$\textsf{OP}$. Furthermore, for a vote $v\in V$, we use~$\textsf{OP}(v)$ to denote the vote resulting from applying the operations in~$\textsf{OP}$ to~$E$. 

For {\av},  we once again establish a polynomial-time solvability result.

\begin{theorem}
\label{thm-appdel-av-poly}
{\emph{\pappdel{\av}}} is polynomial-time solvable.
\end{theorem}

\begin{proof}
Let $I = (E, \discset, k, \ell)$ be an instance of {\pappdel{\av}}, where $E = (C, V)$ is an election and $\discset \subseteq C$ is nonempty. Our algorithm proceeds as follows.

We compute the subset $A \subseteq C \setminus J$ of nondistinguished candidates approved by at least one vote from~$V$. If $\abs{A} \leq k - 1$, we immediately conclude that $I$ is a {\noins}. The reasoning is that, in this case, any $k$-committee of $E$ that contains $A$ and any other $k - \abs{A}$ candidates from $C \setminus A$ forms an AV winning $k$-committee of $E$. Given that $J\subseteq C\setminus A$ and $J \neq \emptyset$, we can conclude that there exists an AV winning $k$-committee of $E$ that intersects with $J$. Thus, the given instance is a {\noins}.

Next, our algorithm proceeds by exhaustively applying the following reduction rule. 

\begin{reductionrule} 
\label{reduction-rule-a}
Let~$c^{\star}\in J$ be a distinguished candidate with the maximum {\av} score among all distinguished candidates, i.e., $\scoreE{c^{\star}}{E}{\av} \geq \scoreE{c}{E}{\av}$ for all $c\in J$. If ${\scoreE{c^{\star}}{E}{\av}}\geq 1$ and there are at most $k-1$ candidates from~$C\setminus \discset$ with {\av} scores at least $\scoreE{c^{\star}}{E}{\av}+1$, remove~$c^{\star}$ from any arbitrary vote approving~$c^{\star}$, and decrease~$\ell$ by one.
\end{reductionrule}

We prove now that Reduction Rule~\ref{reduction-rule-a} is correct, i.e., applying this rule to any instance of {\emph{\pappdel{\av}}} results in an equivalent instance. 

\begin{claim}
\label{claim-reduction-rule}
    Reduction Rule~\ref{reduction-rule-a} is correct.
\end{claim}

\begin{proof}
    Let $c^{\star}$ be a distinguished candidate, as defined in Reduction Rule~\ref{reduction-rule-a}.  
    Assume that at most $k-1$ candidates from $C \setminus \discset$ have {\av} scores of at least $\scoreE{c^{\star}}{E}{\av} + 1$, and that ${\scoreE{c^{\star}}{E}{\av}} \geq 1$. 
   Let $I' = ((C, V'), \discset, k, \ell - 1)$ be the instance obtained from $I$ by applying Reduction Rule~\ref{reduction-rule-a} to~$c^{\star}$. Thus,~$V'$ is derived from~$V$ by removing~$c^{\star}$ from some vote, say $v^{\star} \in V$, where~$c^{\star}$ is approved, i.e., $V' = V \setminus \{v^{\star}\} \cup \{v^{\star} \setminus \{c^{\star}\}\}$. 
   Since ${\scoreE{c^{\star}}{E}{\av}} \geq 1$, such a vote~$v^{\star}$ must exist. Define $v^{\star-}=v^{\star}\setminus \{c^{\star}\}$, and let $E'=(C, V')$.

It is clear that if~$I'$ is a {\yesins}, then~$I$ is also a {\yesins}. 
Now consider the opposite direction. For clarity, in the following discussion, for a vote~$v$ and some candidate $c\in v$, we use $(v, c)$ to denote the AppDel operation where~$c$ is deleted from~$v$. 
Assume that~$I$ is a {\yesins}. Our goal is to ensure that none of the distinguished candidates are included in any winning $k$-committee by performing a limited number of AppDel operations. Therefore, it is not meaningful to perform any AppDel operations on nondistinguished candidates. We may thus assume that there exists a sequence~$\textsf{OP}$ of at most~$\ell$ AppDel operations that transforms~$E$ into an election $\widetilde{E}$ such that (1) each AppDel operation is performed on some vote $v\in V$ by deleting a distinguished candidate in $\discset$, and (2) none of the candidates in $J$ are included in any AV winning $k$-committees of~$\widetilde{E}$.   
By Condition~(1), we know that in $\widetilde{E}$, the number of nondistinguished candidates whose AV score exceeds~$\scoreE{c^{\star}}{E}{\av}$ is limited to $k - 1$. Then, Condition~(2) implies that $\textsf{OP}$ contains at least one AppDel operation in which $c^{\star}$ is removed from some vote in $V$. 
Our proof proceeds by considering two cases. 
\begin{itemize}
    \item Case~1. The AppDel operation $(v^{\star}, c^{\star})$ is in $\textsf{OP}$. 

    In this case, we define $\textsf{OP}'$ as the sequence of AppDel operations obtained from~$\textsf{OP}$ by removing the AppDel operation $(v^{\star}, c^{\star})$. Obviously,~$\textsf{OP}'$ contains at most $\ell - 1$ AppDel operations. Furthermore, after applying the operations in~$\textsf{OP}'$ to~$E'$, we obtain exactly the election $\widetilde{E}$. Then, by Condition~(2) above, we know that $I'$ is a {\yesins}.

    \item Case~2. The AppDel operation $(v^{\star}, c^{\star})$ is not in $\textsf{OP}$. 
    
    As discussed above,~$\textsf{OP}$ contains an AppDel operation $(v^{\circ}, c^{\star})$ for some $v^{\circ} \in V$. We know that $v^{\circ}\neq v^{\star}$. We define $\textsf{OP}'$ as the sequence of AppDel operations obtained from $\textsf{OP}$ by removing the AppDel operation $(v^{\circ}, c^{\star})$. Thus,~$\textsf{OP}'$ contains at most $\ell-1$ AppDel operations. Let $\widehat{E}=\textsf{OP}'(E')$. Moreover, for each vote $v\in V\setminus \{v^{\star}\}$, let $\widetilde{v}=\textsf{OP}(v)$, and let $\widehat{v}=\textsf{OP}'(v)$. By the relationship between~$\textsf{OP}'$ and~$\textsf{OP}$, and the relationship between~$E$ and~$E'$, the following hold:
    \begin{itemize}
        \item for every $v\in V\setminus \{v^{\circ}, v^{\star}\}$, we have $\widetilde{v}=\widehat{v}$, and 
        \item $\widetilde{v^{\circ}}=\widehat{v^{\circ}}\cup \{c^{\star}\}$.
    \end{itemize}
    Recall that $v^{\star-}=v^{\star}\setminus \{c^{\star}\}$ and $V' = V \setminus \{v^{\star}\} \cup \{v^{\star-}\}$. As a result, the AV scores of all candidates remain unchanged  in~$\widetilde{E}$ and~$\widehat{E}$. Therefore, if none of the distinguished candidates is included in any winning $k$-committee of~$\widetilde{E}$ (which holds by Condition~(2)), the same holds for~$\widehat{E}$. Consequently, $I$ is a {\yesins}.
\end{itemize}
    The proof for the claim is completed. 
\end{proof}

After exhaustively applying this rule, we conclude that~$I$ is a {\yesins} if and only if $\ell \geq 0$. The correctness of this step follows from Claim~\ref{claim-reduction-rule}.  

Since each application of Reduction Rule~\ref{reduction-rule-a} runs in polynomial time, and it can be applied at most $\abs{C} \cdot \abs{V}$ times, the overall algorithm runs in polynomial time.
\end{proof}

However, for {\sav} and {\nsav}, we encounter hardness results.

\begin{theorem}
\label{thm-appdel-sav-nsav-nph-k=1}
{\emph{\pappdel{{\sav}}}} and {\emph{\pappdel{{\nsav}}}} are {\emph\nph} even when $k=1$.
\end{theorem}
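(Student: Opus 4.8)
The plan is to prove {\nphns} of {\pappdel{{\sav}}} via a reduction from RX3C, mirroring the structure of the proof of Theorem~\ref{thm-appadd-sav-nsav-np-hard}, and then to obtain the {\nsav} result for free by adding $n\cdot m^2$ dummy candidates and invoking Lemma~\ref{lem-relation-sav-nsav}. Given an RX3C instance $(\xs, \xc)$ with $\abs{\xs}=\abs{\xc}=3\xsize$ (again assuming $\xsize$ even and large, by the same normalization footnoted earlier), I would create one candidate $c(\xse)$ for each $\xse\in\xs$, collected into $C(\xs)$, plus a single special candidate~$\p$, set $C=C(\xs)\cup\{\p\}$, $\discset=C(\xs)$, and $k=1$. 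Since $C=\discset\cup\{\p\}$ and $k=1$, the destructive goal is exactly that $\{\p\}$ becomes the unique winning $1$-committee, i.e.\ that after deletions every $c(\xse)$ has strictly smaller {\sav} score than~$\p$.

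The key design idea is that deletion, unlike addition, \emph{raises} the {\sav} score of the remaining approved candidates in a vote. So I would arrange that $\p$ is approved together with the three set-candidates in each set-vote, and that deleting the set-candidates from a vote is what boosts~$\p$. Concretely, for each $\xce\in\xc$ I create a vote $v(\xce)=\{c(\xse)\setmid \xse\in\xce\}\cup\{\p\}$ approving four candidates, and I add a suitable number of ``filler'' votes approving only $C(\xs)$ (and possibly votes approving only~$\p$) so that, in the initial election, every $c(\xse)$ and~$\p$ are tied, or $\p$ trails each $c(\xse)$ by a controlled margin. The budget is set to $\ell=\xsize$. In the forward direction, given an exact cover $\xc'$, for each $\xce\in\xc'$ I delete from $v(\xce)$ exactly the candidates $c(\xse)$ with $\xse\in\xce$; but a single AppDel removes only one candidate, so each four-candidate set-vote needs three deletions to strip all set-candidates, which would cost $3\xsize$ rather than $\xsize$. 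The right move is therefore to choose the vote contents and the budget so that one deletion per covered element suffices: I would instead let each set-vote approve $\p$ plus the three set-candidates and delete, across the $\xsize$ chosen sets, exactly one set-candidate per occurrence, calibrating the filler counts so that the resulting per-candidate score drop (one unit of $\frac14$ versus the remaining $\frac13$ contributions) pushes every $c(\xse)$ strictly below~$\p$ precisely when every element is covered.

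The heart of the argument, and the main obstacle, is the converse: showing any deletion scheme using at most $\ell$ operations that makes $\{\p\}$ the unique winner must correspond to an exact cover. As in Theorem~\ref{thm-appadd-sav-nsav-np-hard}, I would first argue it is never helpful to delete~$\p$ from any vote (that only lowers $\p$'s score) and never helpful to touch the filler votes in a way that fails to separate~$\p$ from the set-candidates; this pins down that all useful deletions occur inside the set-votes and target set-candidates $c(\xse)$. The delicate part is the arithmetic of fractional {\sav} scores: I must choose the number of filler votes so that the score gaps are tight enough that $\xsize$ deletions can just barely defeat every $c(\xse)$, forcing each of the $3\xsize$ set-candidates to lose exactly one approval and hence forcing the chosen set-votes to hit every element exactly once, which is an exact cover. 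Verifying these inequalities $c(\xse)<\p$ hold after the modification and \emph{fail} whenever some element is uncovered is the routine-but-error-prone calculation I would carry out carefully, exactly parallel to the score computations displayed in the proof of Theorem~\ref{thm-appadd-sav-nsav-np-hard}. Finally, for {\nsav} I would append a set~$D$ of at least $n\cdot m^2$ never-approved dummy candidates disjoint from~$C$; by Lemma~\ref{lem-relation-sav-nsav} the relative {\sav} order among candidates in~$C$ is preserved as the {\nsav} order, and since the optimal briber never adds dummies into votes, the same equivalence with RX3C carries over, completing both hardness proofs.
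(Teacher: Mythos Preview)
Your overall strategy---reduce from RX3C with $C=C(\xs)\cup\{\p\}$, $\discset=C(\xs)$, $k=1$, set-votes $v(\xce)=\{\p\}\cup\{c(\xse):\xse\in\xce\}$, plus filler votes approving exactly $C(\xs)$, and then handle {\nsav} via Lemma~\ref{lem-relation-sav-nsav}---is exactly the paper's approach. The gap is your budget. You commit to $\ell=\xsize$ and then try to get by with one deletion per chosen set-vote; this cannot work. A single AppDel operation lowers the {\sav} score of exactly one candidate and \emph{raises} the scores of all remaining candidates in that vote. With only $\xsize$ deletions you negatively affect at most $\xsize$ of the $3\xsize$ distinguished candidates, while the other two set-candidates in each touched vote actually gain score; no calibration of filler votes can rescue this. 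The paper instead takes $\ell=3\xsize$: given an exact cover $\xc'$, it deletes \emph{all three} set-candidates from each $v(\xce)$ with $\xce\in\xc'$, turning that vote into the singleton $\{\p\}$. With $\frac{9}{2}\xsize^2-2\xsize$ filler votes approving $C(\xs)$, this raises ${\savs}(\p)$ from $\frac{3\xsize}{4}$ to $\frac{3\xsize}{2}$ and drops every $c(\xse)$ from $\frac{3\xsize}{2}+\frac{1}{12}$ to $\frac{3\xsize}{2}-\frac{1}{6}$.

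For the converse you are also missing one structural observation that makes the counting go through. After arguing (as you do) that it is never useful to delete~$\p$ and that only set-votes should be touched, you must further argue that whenever a set-vote is modified at all, it is optimal to remove \emph{all three} of its distinguished candidates (partial removal wastes budget, since it shifts score among distinguished candidates rather than to~$\p$). This normalizes any feasible solution to stripping exactly $\xsize$ set-votes completely; $\p$'s final score is then pinned at $\frac{3\xsize}{2}$, and forcing every $c(\xse)$ strictly below this requires each $\xse$ to lie in at least one chosen set, hence exactly one---an exact cover.
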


\begin{proof}
We present the proof for {\sav}, and the proof for {\nsav} can be obtained through a modification of the proof for {\sav} based on Lemma~\ref{lem-relation-sav-nsav}.

The proof is via a reduction from the {\prob{RX3C}} problem. Let $I=(\xs, \xc)$ be an instance of {\prob{RX3C}} with  $\abs{\xs}=\abs{\xc}=3\xsize>0$. Similar to the proof of Theorem~\ref{thm-appadd-sav-nsav-np-hard}, we assume that~$\xsize$ is even.
We create an instance $g(I)=((C, V), J, \ell, k)$, where $\ell=3\xsize$ and $k=1$, of {\pappdel{{\sav}}} as follows.
The candidate set is $C=\xs\cup \{p\}$. Let $\discset=\xs$. Regarding the votes, we first create a multiset~$V'$ of $\frac{9}{2} \xsize^2-2\xsize$ votes, each approving exactly the distinguished candidates in~$\discset$. As~$\xsize$ is a positive even integer, $\frac{9}{2} \xsize^2-2\xsize$ is a positive integer. Then, for each $\xce\in \xc$, we create one vote~$v(H)=\{p\}\cup H$ approving~$\p$ and the three candidates corresponding to the three elements of~$\xce$. Let $V(\xc)=\{v(\xce) \setmid \xce\in \xc\}$, let $V=V'\cup V(\xc)$, and define $E=(C, V)$.

The construction of~$g(I)$ can be completed in polynomial time. We prove the correctness of the reduction as follows. In the election~$E$, the {\sav} score of~$\p$ is $\frac{3\xsize}{4}$, and that of every other candidate is
\[\left(\frac{9\xsize^2}{2}-2\xsize\right)\cdot\frac{1}{3\xsize}+\frac{3}{4}=\frac{3 \xsize}{2}+\frac{1}{12}.\] 

$(\Rightarrow)$ Assume that there is an exact $3$-set cover $\xc'\subseteq \xc$ of~$\xs$. Consider the election~$E'$ obtained from~$E$ by the following modifications: for every $\xce\in\xc'$, we remove from~$v(H)$ the three candidates corresponding to the three elements in~$\xce$. As $\abs{\xc'}=\xsize$, we performed in total~$3\xsize$ AppDel operations. The modifications increase the {\sav} score of~$\p$ by~$\frac{3\abs{\xc'}}{4}=\frac{3\xsize}{4}$, and decrease the {\sav} score of each $\xse\in \xs$ by~$\frac{1}{4}$. So, in the resulting election~$E'$, the {\sav} score of~$p$ increases to $\frac{3\xsize}{2}$, and that of every $\xse\in \xs$ decreases to $\frac{3\xsize}{2}-\frac{1}{6}$. This implies that~$\p$ is the unique winner of~$E'$.

$(\Leftarrow)$ Assume that~$g(I)$ is an {\yesins}. Precisely, we can transform $E=(C, V)$ into a new election~$E'$ by performing a sequence of at most $\ell = 3\xsize$ AppDel operations such that none of the candidates from $\discset = \xs$ emerges as the winner of~$E'$, or equivalently, that~$p$ becomes the unique winner of~$E'$. We prove below that this implies that the given instance~$I$ of {\prob{RX3C}} is a {\yesins}. To this end, we study a series of claims that formally confirm the following intuitions: 
\begin{itemize}
    \item It is optimal to delete only distinguished candidates from the votes (Claim~\ref{claim-appdel-sav-nsav-nph-a}).
     
    \item It is optimal to fully utilize the deletion budget $\ell=3\xsize$ (Claim~\ref{claim-appdel-sav-nsav-nph-b}). 

    \item It is optimal to modify only the votes in~$V(\xc)$ and, moreover, when a vote~$v(\xce)\in V(\xc)$ is designated for modification, it is optimal to delete the three distinguished candidates corresponding to the elements in~$\xce$ (Claims~\ref{claim-xx} and~\ref{claim-last}). In general, the rationale is that deleting candidates from a vote in~$V'$ merely redistributes the {\sav} scores among the distinguished candidates, whereas deleting distinguished candidates from a vote in~$V(\xc)$ transfers their {\sav} scores to the sole nondistinguished candidate~$p$. Our meticulous design of the election ensures that if the increase in the {\sav} score of~$p$ is not evenly distributed across the decreases in the {\sav} scores of the distinguished candidates, then at least one distinguished candidate will ultimately have a higher {\sav} score than~$p$ in the final election. 
\end{itemize}

We now present the claims, introducing the necessary notations. 

For a vote~$v$ and a candidate $c\in v$, we use $(v, c)$ to denote the AppDel operation of deleting~$c$ from~$v$. 
 We say~$v$ (respectively,~$c$) is involved in the AppDel operation $(v, c)$. Furthermore,~$v$ (respectively,~$c$) is said to be involved in a sequence $\textsf{OP}$ of AppDel operations if it is involved in at least one AppDel operation in~$\textsf{OP}$. 

\begin{claim}
\label{claim-appdel-sav-nsav-nph-a}
   The instance $g(I)$ has a feasible solution where~$p$ is not involved.
\end{claim}

\begin{proof}
Let~$\textsf{OP}$ be any arbitrary feasible solution of~$g(I)$. If~$p$ is not involved in~$\textsf{OP}$, we are done. Otherwise, consider the sequence~$\textsf{OP}'$ obtained from~$\textsf{OP}$ by removing all AppDel operations in~$\textsf{OP}$ that involve~$p$. Clearly, $\abs{\textsf{OP}'}\leq \ell-1$. Moreover, the {\sav} score of~$p$ in $\textsf{OP}'(E)$ is strictly larger than that in $\textsf{OP}(E)$, while the {\sav} score of each other candidate is either the same or strictly smaller than that in~$\text{OP}(E)$. This implies that~$\textsf{OP}'$ is also a feasible solution of~$g(I)$. As~$p$ is not involved in~$\textsf{OP}'$, the claim is proved.  
\end{proof}

\begin{claim}
\label{claim-appdel-sav-nsav-nph-b}
   The instance $g(I)$ admits a feasible that consists of exactly $3\xsize$ AppDel operations.
\end{claim} 

\begin{proof}
    Let $\textsf{OP}$ be a feasible solution for~$g(I)$.  
    If $\abs{\textsf{OP}} = 3\xsize$, the claim holds. Otherwise, since~$\textsf{OP}$ is a feasible solution, we have $\abs{\textsf{OP}} \leq 3\xsize - 1$.  

    Notice that each vote in~$V(\xc)$ approves three distinguished candidates, and we have $\abs{\xc} = 3\xsize$. In other words, votes in~$V(\xc)$ collectively provide $9\xsize\geq 3\xsize$ approvals to the distinguished candidates. Therefore, after applying the operations in~$\textsf{OP}$ to~$E$, there exists at least one vote~$v(\xce)$,  $\xce \in \xc$, that still approves at least one distinguished candidate, say~$\xse$. Let~$X$ be the set of candidates still approved in~$v(\xce)$ after applying the operations in~$\textsf{OP}$ to~$E$. By Claim~\ref{claim-appdel-sav-nsav-nph-a}, we may assume that $p \in X$.  

    Now, consider the sequence $\textsf{OP}'$ obtained from $\textsf{OP}$ by adding the AppDel operation $(v(\xce), \xse)$. Clearly, $\abs{\textsf{OP}'} = \abs{\textsf{OP}} + 1 \leq 3\xsize$. Moreover, compared to~$\textsf{OP}(E)$, in~$\textsf{OP}'(E)$:
    \begin{itemize}
        \item The {\sav} scores of all candidates in $X \setminus \{\xse\}$, including~$p$, strictly increase by the same amount.
        \item The {\sav} score of $\xse$ strictly decreases.
        \item The {\sav} scores of all other candidates remain unchanged.
    \end{itemize} 

    Recall that~$p$ uniquely wins~$\textsf{OP}(E)$ (since $\textsf{OP}$ is a feasible solution of~$g(I)$). It follows that~$p$ also uniquely wins $\textsf{OP}'(E)$, ensuring that $\textsf{OP}'$ remains a feasible solution of~$g(I)$.   

    By iteratively applying this transformation to~$\textsf{OP}$ whenever $\abs{\textsf{OP}}<3\xsize$, we eventually obtain a feasible solution for~$g(I)$ consisting of exactly~$3\xsize$ AppDel operations.  
\end{proof}

We say that a vote is of type~$i$ (or a type-$i$ vote) with respect to a sequence~$\textsf{OP}$ of AppDel operations, if it is involved in exactly~$i$ AppDel operations in $\textsf{OP}$, meaning exactly~$i$ candidates are deleted from the vote according to $\textsf{OP}$.  

\begin{claim}
\label{claim-xx}
    Let $\mathcal{OP}$ be the set of all sequences of~$3\xsize$ AppDel operations. Then, the following statements hold: 
    \begin{enumerate}
        \item[(1)]  The maximum possible {\sav} score of $p$ in~$E$ after applying a sequence of~$3\xsize$ AppDel operations is~$\frac{3\kappa}{2}$. That is, \[\max_{\textsf{OP}\in \mathcal{OP}}\scoreE{p}{\textsf{OP}(E)}{\sav}\leq \frac{3\kappa}{2}.\]
        
        \item[(2)] If the {\sav} score of $p$ in some ${\textsf{OP}}(E)$, where $\textsf{OP}\in \mathcal{OP}$, attains~$\frac{3\kappa}{2}$, then, with respect to $\textsf{OP}$, there are no type-$1$ and type-$2$ votes in $V(\xc)$, and all votes in $V'$ are of type-$0$.
        
        \item[(3)] If the {\sav} score of $p$ in some $\textsf{OP}(E)$, where $\textsf{OP}\in \mathcal{OP}$, is strictly less than $\frac{3\kappa}{2}$, then \[\scoreE{p}{\textsf{OP}(E)}{\sav} \leq \frac{3\kappa}{2}-\frac{1}{6}.\] 
    \end{enumerate}
\end{claim}

\begin{proof}
We begin by proving the first statement.  
Let $\textsf{OP} \in \mathcal{OP}$ be a sequence of~$3\xsize$ AppDel operations that results in the highest {\sav} score of~$p$,  
meaning that $\scoreE{p}{\textsf{OP}(E)}{\sav} \geq \scoreE{p}{\textsf{OP}'(E)}{\sav}$ for all $\textsf{OP}' \in \mathcal{OP}$.

Note that deleting any candidate from a vote in~$V'$ does not affect the {\sav} score of~$p$, since none of the votes in~$V'$ approves~$p$. In contrast, deleting any distinguished candidate from a vote in~$V(\xc)$ strictly increases the {\sav} score of~$p$. Therefore, we may assume that no votes from~$V'$ are involved in~$\textsf{OP}$. Consequently, all votes from~$V'$ are of type-$0$. 
    
    Furthermore, removing~$p$ from any vote would only decrease its {\sav} score. Therefore, we may also assume that~$p$ is not involved in~$\textsf{OP}$. 
    
    For $i\in \{0, 1, 2, 3\}$, let~$t_i$ denote the number of type-$i$ votes in $V(\xc)$ with respect to~$\textsf{OP}$. Then, we have 
    \begin{equation}
    \label{eq-sav-p}
        \scoreE{p}{\textsf{OP}(E)}{\sav}=\frac{t_0}{4}+\frac{t_1}{3}+\frac{t_2}{2}+t_3.
    \end{equation}
    Moreover, we have 
    \begin{equation}
    \label{eq-sav-p-a}
        t_1+2t_2+3t_3=\abs{\textsf{OP}}=3\xsize.
    \end{equation} 
    Since each vote in~$V(\xc)$ approves exactly three distinguished candidates, and $\abs{V(\xc)}=3\xsize$, we obtain 
    \begin{equation}
        \label{eq-t0}
        t_0\leq 2\xsize.
    \end{equation}
    It follows that
    
        \begin{align}
         \scoreE{p}{\textsf{OP}(E)}{\sav} & =\frac{1}{12}\cdot \left(3{t_0}+4({t_1}+2{t_2}+3t_3)-2t_2\right) & \text{By}~\eqref{eq-sav-p}\notag \\
                              & =\frac{3t_0+12\xsize-2t_2}{12}                                   & \text{By}~\eqref{eq-sav-p-a} \label{eq-sav-b} \\
                              & \leq \frac{3\cdot 2\xsize+12\xsize-0}{12} = \frac{3\xsize}{2}.   & \text{By}~\eqref{eq-t0} \notag
    \end{align}

    This completes the proof of the first statement. 

Now, we prove the second statement. We have  shown above that if $p$ attains the highest possible {\sav} score with respect to some $\textsf{OP}$, then all votes in~$V'$ are of type-$0$. It remains to analyze the votes from $V(\xc)$. From~\eqref{eq-sav-b}, we know that $\scoreE{p}{\textsf{OP}(E)}{\sav}$ reaches its maximum possible value of $\frac{3\xsize}{2}$ if and only if $t_0=2\xsize$ and $t_2=0$. By the definition of $t_i$, $i\in \{0, 1, 2, 3\}$, we also have 
\begin{equation}\label{eq-sav-e}
    t_0+t_1+t_2+t_3=\abs{V(\xc)}=\abs{\xc}=3\xsize. 
\end{equation}
Substituting $t_0=2\xsize$ and $t_2=0$ into~\eqref{eq-sav-p-a} and~\eqref{eq-sav-e}, we obtain $t_1=0$ and $t_3=\xsize$. 
This confirms that there are no type-$1$ or type-$2$ votes in~$V(\xc)$ with respect to~$\textsf{OP}\in \mathcal{OP}$ whenever $\scoreE{p}{\textsf{OP}(E)}{\sav}=\frac{3\xsize}{2}$.

To prove the third statement, observe that~\eqref{eq-sav-b} achieves its second-largest possible value when $t_2= 1$ or $t_0=2\xsize-1$. A straightforward calculation shows that this value is bounded from above by $\frac{3\xsize}{2}-\frac{1}{6}$.
    \end{proof}

    \begin{claim}
    \label{claim-last}
        The instance $g(I)$ has a feasible solution $\textsf{OP}$ such that $\scoreE{p}{\textsf{OP}(E)}{\sav}=\frac{3\xsize}{2}$. 
    \end{claim}

    \begin{proof}
    By Claims~\ref{claim-appdel-sav-nsav-nph-a} and~\ref{claim-appdel-sav-nsav-nph-b}, we assume that $p$ is not involved in $\textsf{OP}$ and that $\abs{\textsf{OP}}=3\xsize$. 
    
If $\scoreE{p}{\textsf{OP}(E)}{\sav}=\frac{3\xsize}{2}$, the claim is immediately verified. 
Otherwise, by Statement~3 of Claim~\ref{claim-xx}, we have $\scoreE{p}{\textsf{OP}(E)}{\sav}\leq \frac{3\xsize}{2}-\frac{1}{6}$. We now show by contradiction that this cannot occur.

        Since~$\textsf{OP}$ is a feasible solution of~$g(I)$, it follows that~$p$ uniquely wins~$\textsf{OP}(E)$. This implies that the {\sav} score of every distinguished candidate in~$\textsf{OP}(E)$ is strictly less than $\frac{3\xsize}{2}-\frac{1}{6}$, i.e., $\scoreE{\xse}{\textsf{OP}(E)}{\sav}< \frac{3\xsize}{2}-\frac{1}{6}$ for all $\xse\in J$. 
        
        Now, we have 
        \begin{equation}\label{eq-p}
            \scoreE{p}{\textsf{OP}(E)}{\sav}-\scoreE{p}{E}{\sav}\leq  \left(\frac{3\xsize}{2}-\frac{1}{6}\right)-\frac{3\xsize}{4}= \frac{3\xsize}{4}-\frac{1}{6},
        \end{equation}
        and
        \begin{equation}\label{eq-dis}
            \sum_{\xse\in J}\left(\scoreE{\xse}{E}{\sav}-\scoreE{\xse}{\textsf{OP}(E)}{\sav}\right) 
            > \sum_{\xse\in J}\left(\left(\frac{3\xsize}{2}+\frac{1}{12}\right)-\left(\frac{3\xsize}{2}-\frac{1}{6}\right)\right)= \sum_{\xse\in J}\frac{1}{4}= \frac{3\xsize}{4}.
        \end{equation}

     Observe that~$\textsf{OP}$ cannot consists of solely the~$3\xsize$ operations of the form $(v, \xse)$ for some vote $v\in V'$, i.e., $\textsf{OP}(E)$ cannot be obtained from~$E$ by emptying a single vote~$v$ from~$V'$. The reason is that, in this case, the {\sav} score of~$p$ in~$\textsf{OP}(E)$ remains $\frac{3\xsize}{4}$, the same as in~$E$, while the {\sav} score of every other candidate decreases to $\frac{3\xsize}{2}+\frac{1}{12}-\frac{1}{3\xsize}>\frac{3\xsize}{4}$ in~$\textsf{OP}(E)$. This contradicts the assumption that~$p$ uniquely wins~$\textsf{OP}(E)$. 
     
     Furthermore, note that deleting any candidate from a vote in~$V'$ does not increase the total {\sav} score of the distinguished candidates. However, deleting a distinguished candidate from a vote in~$v(\xce)$ effectively redistributes the {\sav} score  that the removed candidate would have received from that vote among the remaining candidates in the vote.  

     It follows that the increase in the {\sav} score of~$p$ is exactly equal to the decrease in the total {\sav} score of all the distinguished candidates. That is, the following equality must hold: 
     \[\scoreE{p}{\textsf{OP}(E)}{\sav}-\scoreE{p}{E}{\sav}=\sum_{\xse\in J}\left(\scoreE{\xse}{E}{\sav}-\scoreE{\xse}{\textsf{OP}(E)}{\sav}\right).\]
        However, this contradicts~\eqref{eq-p} and~\eqref{eq-dis} above, completing the proof. 
    \end{proof}

    
By Claim~\ref{claim-last}, we know that~$g(I)$ admits a feasible solution~$\textsf{OP}$ such that the {\sav} score of~$p$ in~$\textsf{OP}(E)$ is exactly $\frac{3}{2}\xsize$. Consequently, for every $\xse\in\xs$, there must exist at least one vote $v(\xce)$, where $\xse\in \xce$ for some $\xce\in \xc$, that originally approves~$\xse$ but ceases to do so in $\textsf{OP}(E)$. In other words, for every $\xse\in \xs$, there exists at least one $\xce\in \xc$ such that $\xse\in \xce$ and the corresponding vote~$v(\xce)$ is involved in $\textsf{OP}$. This implies that the subcollection 
\[\xc'=\{\xce\in \xc \setmid v(\xce)~\text{is involved in}~\textsf{OP}\}\] 
of~$\xc$ forms a set cover of~$\xs$. By Statement~(2) of Claim~\ref{claim-xx}, we know that votes in $V(\xc)$ are either of type-$0$ or type-$3$ with respect to $\textsf{OP}$. Clearly, the number of votes in~$V(\xc)$ involved in $\textsf{OP}$ equals the number of type-$3$ votes, which is bounded above by $\frac{\abs{\textsf{OP}}}{3}=\xsize$. Thus, we conclude that $\abs{\xc'}=\xsize$, completing the proof that the instance~$I$ of {\prob{RX3C}} is a {\yesins}.
\end{proof}

We also establish {\wahns} results for {\sav} and {\nsav} for the parameters~$\ell$ and~$k$. This holds true even when there is only one distinguished candidate. Recall that Theorem~\ref{thm-appdel-sav-nsav-nph-k=1} is based on a reduction where the objective is to select only one winner, while the number of distinguished candidates equals to the number of candidates minus one.

\begin{theorem}
\label{thm-appdel-sav-nsav-wah-l-k}
{\emph{\pappdel{{\sav}}}}  and {\emph{\pappdel{{\nsav}}}} are {\emph{\wah}} with respect to both the parameter~$\ell$ and the parameter~$k$. Moreover, the results hold even when $\abs{J}=1$.
\end{theorem}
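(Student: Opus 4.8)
The plan is to first turn the committee question into a score-counting one. Because $\discset=\{\p\}$ is a single candidate and, under {\savs}, the score of a $k$-committee is the sum of the {\savs} scores of its members, the candidate~$\p$ belongs to some winning $k$-committee if and only if at most $k-1$ candidates have strictly larger {\savs} score than~$\p$. Thus the instance is a \yesins\ exactly when at most~$\ell$ AppDel operations can make at least~$k$ candidates strictly outscore~$\p$. I would establish \wahns\ by a parameterized reduction from \prob{$\kappa$-Clique} restricted to regular graphs, taking care that both~$k$ and~$\ell$ are bounded by functions of~$\kappa$ (so that \wahns\ in~$(\ell,k)$ follows), and keeping $\abs{\discset}=1$ throughout. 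The {\nsav} variant will be obtained from the {\savs} construction by Lemma~\ref{lem-relation-sav-nsav}: adding a block of at least $n\cdot m^{2}$ dummy candidates approved by nobody makes the {\nsav} order coincide with the {\savs} order, so every argument transfers verbatim.

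For the reduction I would use an \emph{incidence} gadget that couples a clique with ``both-endpoints'' activation. Given a $\dd$-regular graph $(U,A)$, I create~$\p$, one candidate $c(e)$ for each edge $e\in A$, and per-vertex ``selector'' votes: for each vertex $u$ a vote approving all edge-candidates $c(e)$ with $e\ni u$ together with private filler candidates. The scores are calibrated so that every $c(e)$ sits just below~$\p$, and one deletion in the selector vote of~$u$ (``activating''~$u$) nudges all edge-candidates incident to~$u$ upward by the same tiny amount, with the gap to~$\p$ chosen so that $c(e)$ overtakes~$\p$ precisely when it is nudged from \emph{both} of its endpoints. Setting $k=\binom{\kappa}{2}$ and $\ell=\kappa$, a clique of size~$\kappa$ yields a solution: activating its~$\kappa$ vertices lifts exactly the $\binom{\kappa}{2}$ edge-candidates inside the clique above~$\p$, so~$\p$ is excluded from all winning $k$-committees. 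Conversely, within budget~$\kappa$ one can activate at most~$\kappa$ vertices, and the edge-candidates pushed above~$\p$ are those whose both endpoints are activated, i.e.\ the edges induced by the activated set; reaching the target~$\binom{\kappa}{2}$ forces the activated~$\kappa$ vertices to induce $\binom{\kappa}{2}$ edges, that is, to form a clique.

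The hard part, and the step I expect to demand the most care, is ruling out \emph{shortcut} deletion patterns in the converse direction, because under {\savs} deleting an approved candidate redistributes its weight to the co-approved candidates of the same vote. I must ensure that one cannot cross the gap to~$\p$ more cheaply by (i)~\emph{concentrating} several deletions inside a single selector vote to over-boost a few edge-candidates from only one side, or (ii)~directly deleting~$\p$ from its own supporting votes to lower its threshold. Handling~(ii) calls for a robustness gadget spreading~$\p$'s support over many large votes (with private fillers), so that lowering~$\p$ by the relevant margin costs far more than~$\ell$ deletions; handling~(i) is the genuine crux and requires a tight calibration of vote sizes and multiplicities so that any single-vote boost, however concentrated, provably stays below~$\p$ while a legitimate two-sided boost crosses it. This tension---shared selector votes keep~$\ell$ bounded but invite concentration, whereas private votes block concentration but inflate~$\ell$---is precisely what the calibration must resolve. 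Once these no-shortcut inequalities are in place, the equivalence ``solution with $\le\ell$ deletions $\Leftrightarrow$ $\kappa$-clique'' closes, and the identical argument under the dummy-padded election gives the {\nsav} result.
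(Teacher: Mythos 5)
Your reformulation of the committee condition and your use of Lemma~\ref{lem-relation-sav-nsav} for {\nsav} are both fine, and your overall plan (parameterized reduction from {\prob{$\kappa$-Clique}} on regular graphs with $k$ and $\ell$ functions of~$\kappa$) matches the paper's. But your gadget is the dual of the paper's---edge-candidates with per-vertex selector votes, $k=\binom{\kappa}{2}$, $\ell=\kappa$---and the difficulty you flag as ``the genuine crux'' (ruling out concentrated one-sided boosts) is not a calibration detail you can defer: it is fatal to this architecture. Under {\sav}, if the selector vote of a vertex~$u$ has size~$N$, one deletion raises each surviving co-approved candidate by $\delta=\frac{1}{N(N-1)}$, while $j$ deletions concentrated in that same vote raise each survivor by $\frac{1}{N-j}-\frac{1}{N}=\frac{j}{N(N-j)}>j\delta$. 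Your forward direction requires the deficit~$\Delta$ of an edge-candidate below~$\p$ to satisfy $\Delta\leq 2\delta$ (two single-deletion nudges, one per endpoint, must suffice), but then $\kappa$ deletions concentrated in one selector vote deliver a boost exceeding $\kappa\delta>2\delta\geq\Delta$ to every surviving edge-candidate incident to~$u$ (at least $\dd-\kappa$ of them). Since one must assume~$\dd$ large relative to~$\kappa$ (otherwise {\prob{$\kappa$-Clique}} is {\fpt} on the instance), this pushes far more than $\binom{\kappa}{2}$ candidates above~$\p$ with only $\kappa$ deletions and no clique, so the backward direction fails. The inequality $\frac{j}{N(N-j)}>j\cdot\frac{1}{N(N-1)}$ is forced by the {\sav} redistribution rule itself, for any number and placement of private fillers, so no choice of vote sizes or multiplicities closes the window between ``legitimate two-sided boost suffices'' and ``one-sided concentration does not''; abandoning shared selector votes would fix concentration but inflates~$\ell$ beyond your stated budget.

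The paper avoids this by inverting the roles: candidates are vertices, each edge $\edge{u}{u'}$ becomes a vote $\{u,u',\p\}$, and the single intended operation---deleting~$\p$ from such a vote---simultaneously lowers~$\p$ by $\frac{1}{3}$ and raises both endpoints by $\frac{1}{6}$ each, with $\ell=\frac{\kappa(\kappa-1)}{2}$ and $k=\kappa$. There the no-shortcut argument is a clean global budget count rather than a per-gadget calibration: the total {\sav} mass shifted onto non-distinguished candidates by~$\ell$ deletions is at most $\frac{\kappa(\kappa-1)}{6}$, and lifting~$\kappa$ candidates each by the required $\frac{\kappa-1}{6}$ exhausts it exactly, which forces the changed votes to be precisely the edge set of a $\kappa$-clique (a short case analysis also rules out touching the padding votes and reducing any edge-vote to fewer than two approvals). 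As written, your proposal has a genuine gap at its central step; to repair it you would need a fundamentally different activation mechanism, and the paper's vertex-as-candidate construction is the natural one.
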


\begin{proof}
We prove the theorem by reducing from the {\prob{$\kappa$-Clique}} problem restricted to regular graphs.
We present the proof only for {\sav}. Later, it will become evident that the absolute score difference between any two candidates with distinct {\sav} scores in the structured election below satisfies the precondition of Lemma~\ref{lem-relation-sav-nsav}. 
Consequently, by introducing dummy candidates as suggested in Lemma~\ref{lem-relation-sav-nsav}, we can adapt the reduction for SAV to establish the {\wahns} of {{\pappdel{{\nsav}}}} with respect to $\ell$ and $k$, specifically for the case where there is only one distinguished candidate.

Let $I=(G, \kappa)$ be an instance of the {\prob{$\kappa$-Clique}} problem, where~$G=(\vset, \eset)$ is a~$d$-regular graph with~$d>0$. Let~$\m$ and~$\n$ denote the number of edges and the number of vertices of~$G$, respectively. We assume that both~$\kappa$ and~$n$ are divisible by~$6$. This assumption does not affect the {\wahns} of the problem.\footnote{If this condition is not met, we can transform $(G=(U, A), \kappa)$ into an equivalent instance $(G', 6\kappa)$ that satisfies the assumption in polynomial time as follows: For each vertex $u\in U$, we create a set~$B(u)$ of six vertices and let them form a clique in~$G'$. For each edge $\edge{u}{u'}$ in~$G$, we add all missing edges between~$B(u)$ and~$B(u')$, ensuring that $B(u)\cup B(u')$ forms a clique in~$G'$. It is evident that $(G, \kappa)$ and $(G',6\kappa)$ are equivalent.} Additionally, we assume that $\kappa^2+2d<2m$; otherwise, the problem can be solved in {\fpt}-time with respect to~$\kappa$.\footnote{If $2m\leq \kappa^2+2d$, given that $2m=n\cdot d$ and $d\geq 1$, we obtain $n\leq \kappa^2+2$. In this case, the instance can be solved in {\fpt}-time with respect to~$\kappa$ by enumerating all $\kappa$-subsets of vertices and checking whether any of them form a clique.}

We construct an instance $g(I)=((C, V), \discset, \ell, k)$ of {\pappdel{{\sav}}} as follows. For each vertex~$u$ in~$G$, we create a corresponding candidate, denoted by the same symbol for notational brevity. In addition, we create a candidate~$p$ as the only distinguished candidate. Let $C=\vset\cup \{p\}$ and let $J=\{p\}$. We create two classes of votes. To this end, we first arbitrarily group vertices of~$G$ into~$n/3$ sets of equal size, denoted as~$U_1$,~$U_2$,~$\dots$,~$U_t$, where $t=n/3$. 
This can be done as~$n$ is divisible by~$6$. 
For each group~$U_i$, we create $\frac{2\m-\kappa^2-2d+2}{2}$ votes, with each vote approving exactly the candidates in~$U_i$. Let~$V(U_i)$ denote the multiset of the votes created for~$U_i$, $i\in [t]$. As~$\kappa$ is even and $2m>\kappa^2+2d$, $\frac{2\m-\kappa^2-2d+2}{2}$ is a positive integer. Let $V(U)=\bigcup_{i\in [t]}V(U_i)$ be the multiset of all votes created for these sets. The second class of votes corresponds to edges of~$G$: for each edge~$\edge{u}{u'}\in A$, we create one vote $v(\edge{u}{u'})=\{u, u', p\}$. Figure~\ref{fig-appdel-sav-wah} illustrates the construction of the candidates and the votes. The construction is completed by setting $k=\kappa$ and $\ell=\frac{k\cdot (k-1)}{2}$. 

\begin{figure}
    \centering
    \includegraphics[width=\linewidth]{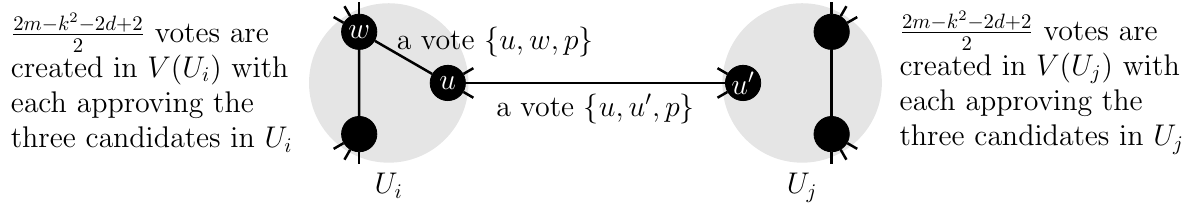}
    \caption{An illustration of the reduction presented in the proof of Theorem~\ref{thm-appdel-sav-nsav-wah-l-k}. For clarity, only the vertices in two groups, $U_i$ and $U_j$, are depicted. Each edge in $G$ is either within the same group, such as the edge between $w$ and $u$, or spans two different groups, like the edge between $u$ and $u'$. For every edge, there is a corresponding vote that approves exactly its two endpoints along with the distinguished candidate $p$. To simplify the illustration, only the votes corresponding to the edges $\edge{w}{u}$ and $\edge{u}{u'}$ are shown.}
    \label{fig-appdel-sav-wah}
\end{figure}

The reduction can be completed in polynomial time. It remains to show that~$I$ is a {\yesins} {\iff}~$g(I)$ is a {\yesins}. Notice that in the election~$(C, V)$, the {\sav} score of~$p$ is~$\frac{\m}{3}$, and that of every $u\in U$ is \[\frac{2\m-k^2-2d+2}{2}\times\frac{1}{3}+\frac{d}{3}=\frac{2\m-k^2+2}{6}.\] As~$p$ has the maximum {\sav} score,~$p$ is included in all winning $k$-committees of $(C, V)$. 

$(\Rightarrow)$ Assume that~$G$ contains a clique~$K$ of $\kappa=k$ vertices. Let~$A'$ be set of the edges whose both endpoints are within~$K$. Clearly,~$A'$ consists of exactly $\frac{k\cdot (k-1)}{2}=\ell$ edges. For each vote corresponding to an edge $\edge{u}{u'}\in A'$, we remove~$p$ from this vote. In total, we performed $\abs{A'}=\ell$ AppDel operations. 
Dropping~$p$ from a vote $v(\edge{u}{u'})$, where $\edge{u}{u'}$ is an edge in~$G$, decreases the {\sav} score of~$p$ by~$\frac{1}{3}$. As a result, the {\sav} score of~$p$ decreases to \[\frac{m}{3}-\frac{\abs{A'}}{3}=\frac{2\m-k\cdot (k-1)}{6}=\frac{2\m-k^2+k}{6}.\] As~$K$ is a clique of size $\kappa=k$, each vertex $u\in K$ is in exactly $k-1$ edges in~$A'$. Dropping~$p$ from each vote corresponding to an edge containing~$u$ increases the {\sav} score of~$u$ by $\frac{1}{2}-\frac{1}{3}=\frac{1}{6}$. Dropping~$p$ from any vote corresponding to an edge not containing~$u$ does not change the {\sav} score of~$u$. Therefore, after the above modifications, the {\sav} score of every~$u\in K$ increases to \[\frac{2\m-k^2+2}{6}+\frac{k-1}{6}=\frac{2\m-k^2+k+1}{6},\] which is strictly greater than the {\sav} score of~$p$. From $\abs{K}=k$, we know that~$p$ cannot be in any winning $k$-committees after the above modifications.

$(\Leftarrow)$ Assume that~$g(I)$ is a {\yesins}. 
We first prove the following claim.

\begin{claim} 
\label{claim-a}
There is a feasible solution where none of the votes in~$V(U)$ is changed.
\end{claim}

\begin{proof}
For the sake of contradiction, assume that in a feasible solution, we changed a vote $v\in V(U_i)$ for some $i\in [t]$. After the change, there are three possible cases: 
(1) The vote approves two candidates. 
(2) The vote approves only one candidate. 
(3) The vote does not approve any candidate. 
Note that changing any vote corresponding to an edge by removing~$p$ from the vote decreases the score gap between~$p$ and every other candidate by~$\frac{1}{3}$. However, changing~$v$ as in Case~(1) decreases the score gap between~$p$ and the two candidates not removed from $v$ by~$\frac{1}{6}$, while strictly increasing the score gap between $p$ and the candidate removed from $v$. Changing~$v$ as in Case~(3) does not decrease the score gap between~$p$ and any other candidate; in fact, it even increases the score gap between~$p$ and the candidates in~$v$. Therefore, if the vote~$v$ is changed as in Cases (1) or (3), we can obtain a new feasible solution by leaving~$v$ unchanged and instead changing a vote corresponding to some edge that has not yet been changed. Such a vote must exist because, by our initial assumption, it holds that $\ell=\frac{\kappa\cdot (\kappa-1)}{2}< \frac{\kappa^2+2d}{2}<m$. In Case~(2), the score gap between~$p$ and every other candidate decreases by at most~$\frac{2}{3}$ due to two deletion operations. In this case, we can obtain a new feasible solution by keeping~$v$ unchanged and instead changing two edge-votes by removing~$p$ that have not been changed in the feasible solution. By a similar argument as above, such two votes must exist.  
\end{proof}

By Claim~\ref{claim-a}, we may assume that in a feasible solution, all modified votes are edge-votes. Similar to the proof of Claim~\ref{claim-a}, after changing a vote, there are three possibilities. However, changing a vote into an empty vote is not meaningful, as this does not decrease the score gap between 
$p$ and any other candidate. Therefore, we consider only the first two cases. Additionally, it is optimal to fully utilize the allowed number of deletion operations. Thus, we assume that in the feasible solution, exactly $\ell=\frac{\kappa \cdot (\kappa-1)}{2}$ AppDel operations are performed. 
Moreover, if an edge-vote is selected to be changed, candidate~$p$ should be among the removed candidates. 
Let~$t_1$ and~$t_2$ denote the number of votes that approve exactly one candidate and exactly two candidates after the changes, respectively. Then, we have
\[2t_1+t_2=\frac{\kappa \cdot (\kappa-1)}{2}.\]
Since~$p$ is removed in all changed votes, its {\sav} score decreases by $\frac{t_1+t_2}{3}$, resulting in a final {\sav} score of 
\[\frac{m}{3}-\frac{t_1+t_2}{3}=\frac{m+t_1}{3}-\frac{\kappa \cdot (\kappa-1)}{6}.\] 
Removing a candidate from a vote effectively redistributes the score that the candidate would have received to the remaining approved candidates in the vote.  
Let~$K$ be the set of candidates whose {\sav} scores increase after the changes in the feasible solution. The total increase in the {\sav} scores of the candidates in~$K$ is at most
\[\frac{2t_1}{3}+\frac{t_2}{3}=\frac{\kappa \cdot (\kappa-1)}{6}.\] 
If~$p$ is not included in any winning $k$-committee after the changes in the feasible solution, then there must exist a subset $K'\subseteq K$ of at least~$k=\kappa$ candidates whose final {\sav} scores are strictly greater than the final {\sav} score of~$p$. Since the total increase in the {\sav} scores of these candidates is at most $\frac{\kappa \cdot (\kappa-1)}{6}$, it follows that among the candidates in~$K'$, the one with the lowest {\sav} score experiences an increase of at most $\frac{\kappa-1}{6}$, leading to a final {\sav} score of at most 
\[\frac{2\m-k^2+2}{6}+\frac{k-1}{6}=\frac{2\m-k^2+k+1}{6}.\] 
As~$p$ has a strictly lower {\sav} score than this candidate, we obtain 
\[\frac{m+t_1}{3}-\frac{\kappa \cdot (\kappa-1)}{6}<\frac{2\m-k^2+k+1}{6}.\] 
This implies that $t_1=0$, and thus $t_2=\frac{\kappa\cdot (\kappa-1)}{2}=\ell$. 
Thus, the final {\sav} score of~$p$ is $\frac{m}{3}-\frac{\kappa\cdot (\kappa-1)}{6}$. This further implies that changing~$\ell$ edge-votes in a feasible solution increases the {\sav} score of each candidate in~$K'$ by at least $\frac{\kappa\cdot (\kappa-1)}{6}$. By the construction above, and since changing an edge-vote $v(\edge{u}{u'})$ increases the {\sav} scores of both~$u$ and~$u'$ by exactly $\frac{1}{2}-\frac{1}{3}=\frac{1}{6}$, we conclude that each candidate in~$K'$ corresponds to a vertex incident to at least $\kappa-1$ edges whose corresponding votes are changed. Since exactly $\ell=\frac{\kappa \cdot (\kappa-1)}{2}$ edge-votes are changed in the feasible solution, it follows that the set of vertices corresponding to~$K'$ forms a clique in the graph~$G$.
\end{proof}

\hide{
\begin{theorem}
{\pappdel{{\nsav}}}  is {\nph} even if $k=1$ and $r=3$.
\end{theorem}

\begin{proof}
We partition~$C(\xs)$ into three subsets~$C^1(\xs)$,~$C^2(\xs)$, and~$C^3(\xs)$ such that each subset consists of exactly~$\xsize$ candidates. We create the following votes. For each of~$C^i(\xs)$ where $i\in [3]$, we create $\xsize^2+\xsize$ votes each of which approves exactly the candidates in~$C^i(\xs)$. For each $\xce\in \xc$, we create one vote~$v(\xce)$ which approves~$\p$ and the three candidates corresponding to the elements in~$\xce$. Finally, we set $\ell=3\xsize$.

The only optimal solution is to select~$\xsize$ votes  corresponding to an exact $3$-set cover $\xc'\subseteq \xc$ and for each of them remove the three candidates corresponding to the three elements in~$\xce$.
\end{proof}
}

\subsection{Vote-Level Change}
From this section, we study vote-level operations. We show that problems associated with these operations are computationally hard even in several special cases.


%

\begin{theorem}
\label{thm-pvc-av-nph}
{\emph{\pvc{\av}}} for all possible values of $r\geq 4$ are {\emph\nph} even when $k=1$.
\end{theorem}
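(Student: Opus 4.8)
The plan is to reduce from RX3C. Given an instance $(\xs,\xc)$ with $|\xs|=|\xc|=3\xsize$ in which every element occurs in exactly three sets, I would build a $k=1$ instance of {\pvc{\av}} as follows. Create one distinguished candidate $c(\xse)$ for each $\xse\in\xs$, so $\discset=\{c(\xse):\xse\in\xs\}$ with $|\discset|=3\xsize$, together with a single non-distinguished candidate~$\p$; these are all the candidates. For each set $\xce\in\xc$ introduce a \emph{set-vote} $v(\xce)=\{c(\xse):\xse\in\xce\}$ with exactly three approvals. Then add \emph{filler} votes, each approving a single candidate, chosen so that every $c(\xse)$ attains the same {\av} score~$s$ while $\p$ attains score $s-\xsize$ (concretely, $s-3$ singleton votes for each $c(\xse)$ and $s-\xsize$ singleton votes for $\p$, with $s$ large enough that $s-\xsize\ge 0$; the finitely many small~$\xsize$ are solved directly). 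Finally set $k=1$ and $\ell=\xsize$. The purpose of the filler is that $\p$ starts exactly $\xsize$ below the distinguished candidates, so the only way to dethrone all of them is to raise $\p$ by $\xsize$ while simultaneously pushing every $c(\xse)$ strictly below $\p$.

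For the forward direction, given an exact cover $\xc'\subseteq\xc$ with $|\xc'|=\xsize$, I would change each set-vote $v(\xce)$ with $\xce\in\xc'$ into the vote $\{\p\}$. This deletes its three element-candidates and adds $\p$, a change of Hamming distance $3+1=4\le r$ --- this is precisely where the bound $r\ge 4$ is needed. Since $\xc'$ is an exact cover, each $c(\xse)$ loses exactly one approval and drops to $s-1$, whereas $\p$ gains $\xsize$ and reaches $s$; hence $\{\p\}$ becomes the unique winning $1$-committee and no distinguished candidate survives.

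The crux is the backward direction. Suppose at most $\ell=\xsize$ vote changes, each of Hamming distance at most $r$, leave every $c(\xse)$ strictly below the maximum. As $\p$ is the only non-distinguished candidate, validity is equivalent to $\avs(\p)>\avs(c(\xse))$ for all $\xse$ in the final election. Writing $\gamma$ for the net gain of $\p$ and $\delta_{\xse}$ for the net loss of $c(\xse)$, validity forces $\delta_{\xse}\ge \xsize-\gamma+1$ for every $\xse$, so the total number $D$ of deletions hitting distinguished candidates satisfies $D\ge\sum_{\xse}\delta_{\xse}\ge 3\xsize(\xsize-\gamma+1)$. The key structural observation is that every vote contains at most three distinguished candidates and at most one approval of $\p$; hence $D\le 3t$ and $\gamma\le t$, where $t\le\xsize$ is the number of changed votes. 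Combining $3\xsize\ge 3t\ge D\ge 3\xsize(\xsize-\gamma+1)$ forces $\gamma=\xsize$, which in turn forces $t=\xsize$, every changed vote to add $\p$ (so each was originally $\p$-free), $D=3\xsize$, and each $\delta_{\xse}=1$. A $\p$-free vote carrying three distinguished candidates is exactly a set-vote, so the $\xsize$ changed votes are set-votes and their deletions hit each $c(\xse)$ exactly once; the corresponding sets therefore partition $\xs$, giving an exact cover.

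I expect the main obstacle to be making this backward argument uniform in $r$. The temptation is to argue via a global atomic budget $\ell\cdot r$, but that counting collapses once $r>4$, since a single change can then perform many atomic operations. The plan avoids counting atomic operations globally and instead relies on the two $r$-independent structural caps $D\le 3t$ and $\gamma\le t$, which already pin down $\gamma=\xsize$. I would also verify carefully that the filler votes create no alternative targets: a singleton filler offers at most one deletion, and a $\p$-filler already contains $\p$, so neither can serve as one of the forced ``add-$\p$-and-delete-three'' changes, ensuring the changed votes really are set-votes.
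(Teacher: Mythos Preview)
Your proposal is correct and follows essentially the same reduction as the paper: the paper's construction is exactly your construction with $s=\xsize$ (so there are no $\p$-filler votes), and the forward direction is identical. The only difference is in the backward direction, where the paper argues by an exchange/normalisation step (``if a singleton vote is changed, swap it for an unchanged set-vote; then the optimal change is to $\{\p\}$'') while you give a direct counting argument via the bounds $D\le 3t$ and $\gamma\le t$; your version is a bit more explicit and avoids the ``optimal strategy'' hand-waving, but both arrive at the same conclusion.
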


\begin{proof}
We prove the theorem via a reduction from the {\prob{RX3C}} problem. Let~$r\geq 4$ be an integer. Let $(\xs, \xc)$ be an instance of {\prob{RX3C}}, where $\abs{\xs}=\abs{\xc}=3\xsize$. Without loss of generality, we assume that $\xsize>3$. We create an instance $((C, V), \discset, \ell, k, r)$ of {\pvc{\av}} as follows. The candidate set is $C=\xs\cup \{p\}$, consisting of $3\kappa+1$ candidates. Let $\discset=\xs$. Now we construct the votes. For each $\xse\in \xs$, we create $\xsize-3$ votes, each approving only~$\xse$. For each $\xce\in \xc$, we create one vote~$v(\xce)$ approving exactly the three candidates in~$\xce$, i.e., $v(\xce)=\xce$. We define~$V$ as the multiset of the above created votes. Let $\ell=\xsize$ and let $k=1$. It is easy to see that~$p$ has an {\av} score of~$0$, while each $\xse\in \xs$ has an {\av} score of~$(\xsize-3)+3=\xsize$ in~$(C, V)$.

The reduction can be completed in polynomial time. We show below that the {\prob{RX3C}} instance is a {\yesins} {\iff} the constructed {\pvc{\av}} instance is a {\yesins}.

$(\Rightarrow)$ Assume that there is an exact set cover $\xc'\subseteq \xc$ of~$\xs$. We change all the~$\xsize$ votes corresponding to the~$\xsize$ elements of~$\xc'$ so that, after the change, all of them approve only~$p$. Clearly, the Hamming distance between each modified vote and its original is four. Moreover, in the new election,~$p$ receives~$\xsize$ approvals. As~$\xc'$ is an exact set cover, for every candidate $\xse\in\xs$, there is exactly one vote that originally approves~$\xse$ but no longer does after the changes. Therefore, in the new election, every candidate~$\xse\in A$ receives $\xsize-1$ approvals, implying that none of the candidates in~$\discset$ can be a winner.

$(\Leftarrow)$ Assume that we can change at most~$\ell$ votes without exceeding the distance bound~$r$, ensuring that none of the candidates in~$\discset$ is a winner of the resulting election. This is equivalent to making~$p$ the unique winner after the changes. First, observe that if a vote is determined to be changed, it is always optimal to change it into a vote that approves only~$p$. Since $r\geq 4$ and each vote originally approves at most three candidates, any vote can be changed to~$\{p\}$ without violating the distance restriction. Second, if the constructed instance is a {\yesins}, there exists at least one feasible solution where exactly~$\ell$ votes are changed%
.
Given such a feasible solution, and based on the two observations, after changing~$\ell$ votes in the constructed instance as described above, the {\av} score of~$p$ becomes exactly $\ell=\xsize$. Consequently, for every $\xse\in \xs$, there is at least one vote~$v(\xce)$, $\xce\in \xc$, that originally approved~$\xse$ but is changed into a vote that only approves~$p$. By construction, a vote~$v(\xce)$, $\xce\in \xc$, originally approved~$\xse$ if and only if $\xse\in \xce$. Since this holds for all $\xse\in \xs$, the subcollection corresponding to the solution is an exact set cover of~$\xs$.
\end{proof}

In the proof of Theorem~\ref{thm-pvc-av-nph}, the number of distinguished candidates is not bounded by a constant. This naturally prompts the question of whether the problem is {\fpt} with respect to the number of distinguished candidates or the combined parameter~$k+\abs{J}$. The following result provides a negative answer to the question.

\hide{
\begin{theorem}
{\pvc{\av}}  is {\nph} even when $\abs{\discset}=1$ and $r=4$, i.e., when there is only one distinguished candidate.
\end{theorem}

\begin{proof}
We create a set $C=\xs\cup\{\p\}$ of candidates and let $\discset=\{\p\}$ be the singleton of a distinguished candidate~$\p$. For each $\xse\in \xs$, we create $\xsize-3$ votes approving all candidates except~$c(\xse)$. In addition, for each $\xce\in \xc$, we create a vote approving $(C(\xs)\setminus C(\xce))\cup \{\p\}$. Clearly, the approval score of the distinguished candidate~$\p$ is
\[3\xsize\cdot (\xsize-3)+3\xsize=3\xsize^2-6\xsize.\]
The approval score of every candidate $c(\xse)$ where $\xse\in \xs$ is
\[(3\xsize-1)\cdot (\xsize-3)+(3\xsize-3)=3\xsize^2-7\xsize.\]
We set $\ell=\xsize$. We aim to select exactly~$3\xsize$ winners.
The construction can be done in polynomial time. It remains to prove the correctness of the reduction.

$(\Rightarrow)$ Assume that $\xc'\subseteq \xc$ is an exact $3$-set cover of~$\xs$. Consider the election after modifying each vote~$v(\xce)$ such that $\xce\in \xc'$ so that after the modification each of such vote approves all candidates except the distinguished candidate~$\p$. As~$\xc'$ is an exact $3$-set cover of~$\xs$, there are exactly~$\xsize$ of such votes. Modifying each vote decreases the score of~$\p$ by one, leading to a final approval score of~$\p$ being $3\xsize^2-6\xsize-\xsize=3\xsize^2-7\xsize$. Modifying each vote~$v(\xce)$ such that $\xce\in \xc'$ increases the score of every candidate~$c(\xse)$ such that $\xse\in \xce$ by one. As~$\xc'$ is an exact $3$-set cover of~$\xs$, for every $\xse\in \xs$, there is exactly one $\xce\in \xc'$. Therefore, the above modification increases the score of every candidate~$c(\xse)$ where $\xse\in \xs$ by exactly one, leading to a final approval score of~$c(\xse)$ to be $3\xsize^2-7\xsize+1$ which is greater than that of~$\p$.

$(\Leftarrow)$ Assume that we can modify at most $\ell=\xsize$ votes so that every nondistinguished candidate has a strictly higher score than that of~$\p$. Note that all votes approve the distinguished candidate, so we can assume that we modify exactly~$\ell$ votes and we can determine that the final score of~$\p$ is exactly $3\xsize^2-6\xsize-\ell=3\xsize^2-7\xsize$. This implies that the approval score of every nondistinguished candidate is at least $3\xsize^2-7\xsize+1$. As each nondistinguished candidate~$c(\xse)$ has score $3\xsize-7\xsize$, the score of~$c(\xse)$ should be increased by at least one by the modification, and hence the sum of the approval scores of the nondistinguished candidates should increase by at least~$3\xsize$. This implies that none of the votes corresponding to~$\xse$ for all $\xse\in \xs$ should be modified, since modifying such a vote increases the score of only the candidate~$c(\xse)$ and we are limited to modify at most~$\xsize$ votes. So, we assume that modified votes are all those corresponding to~$\xc$. According to the above construction and discussion, we can modify~$\xsize$ votes so that the score of every~$c(\xse)$ where $\xse\in \xs$ increases by at least one \iff the subcollection corresponding to the modified votes forms an exact $3$-set cover of~$\xs$.
\end{proof}
}


\begin{theorem}
\label{thm-vc-av-wa-hard-k-ell}
{\emph{\pvc{\av}}} for all possible values of $r\geq 3$ is {\emph{\wah}} with respect to the parameters $\ell+k$. This holds even when $\abs{\discset}=1$.
\end{theorem}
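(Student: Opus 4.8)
The plan is to give a parameterized reduction from \prob{$\kappa$-Clique} restricted to $\dd$-regular graphs (which is \wah{} by the results recalled in the preliminaries) to \pvc{\av} with $\abs{\discset}=1$, setting the committee size to $k=\xsize$ and the budget to $\ell=\binom{\xsize}{2}$ so that both parameters are functions of $\xsize$. The starting observation is that under \av{} a candidate $p$ lies in some winning $k$-committee if and only if at most $k-1$ candidates have strictly larger \av{} score than $p$; hence the destructive goal is equivalent to making at least $k$ candidates beat $p$ strictly. So I must build an election in which raising $\xsize$ candidates above $p$ within budget is possible exactly when $G$ has a $\xsize$-clique.

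The central gadget must cope with the fact that a single VC operation may rewrite a vote into an \emph{arbitrary} nearby set, so edge information cannot be encoded merely by listing endpoints. Given $G=(\vset,\eset)$, I would take $C=\vset\cup\{p\}$ with $\discset=\{p\}$, and for each edge $e=\edge{u}{u'}$ create one vote $v(e)=(\vset\setminus\edge{u}{u'})\cup\{p\}$ that approves $p$ together with \emph{every vertex except the two endpoints of $e$}. The point of this ``all-but-endpoints'' encoding is that $u$ and $u'$ are the only vertex candidates not already approved by $v(e)$, so changing $v(e)$ can raise the \av{} score of a vertex candidate \emph{only} for $u$ or $u'$; the uniquely useful modification is to delete $p$ and add both $u$ and $u'$, which has Hamming distance exactly $3$ and is therefore admissible precisely because $r\geq 3$ (and remains the best option for every larger $r$, since no third vertex can ever be added to $v(e)$). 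I would also add a block of baseline votes each approving all of $\vset$ but not $p$; these can never be changed profitably, as they already approve all vertices and do not approve $p$.

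With $p$ approved only by the $\abs{\eset}=m$ edge-votes and each vertex approved by the $m-\dd$ non-incident edge-votes plus the baseline, the baseline can be tuned so that, after the whole budget is spent, a vertex $x$ beats $p$ if and only if the number of changed edges incident to $x$ is at least $\xsize-1$. This tuning needs $\dd\geq\binom{\xsize}{2}+\xsize-2$, which I would enforce beforehand by a padding $G\mapsto G\,\square\,(K_{\xsize-1})^{\square s}$: this keeps the graph regular, raises its degree by $s(\xsize-2)$ to whatever is needed, and creates no new $\xsize$-clique since $\omega(G\,\square\,H)=\max(\omega(G),\omega(H))$ forces $\omega\bigl(G\,\square\,(K_{\xsize-1})^{\square s}\bigr)=\max(\omega(G),\xsize-1)$, which is $\geq\xsize$ iff $\omega(G)\geq\xsize$. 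The crux is then a counting argument: an optimal solution changes exactly $\ell=\binom{\xsize}{2}$ edge-votes, each by the full operation, so the changed edge set $F$ satisfies $\sum_{x}\deg_F(x)=2\binom{\xsize}{2}=\xsize(\xsize-1)$; if $\xsize$ vertices each reach the threshold $\deg_F(x)\geq\xsize-1$, their degrees already exhaust the entire sum, forcing $F$ to consist exactly of the $\binom{\xsize}{2}$ edges among those $\xsize$ vertices, i.e.\ a $\xsize$-clique, and conversely the edges of any $\xsize$-clique give a valid solution.

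The main obstacle, and the reason the construction takes this shape, is taming the expressive power of VC: because one change can install an arbitrary set within distance $r$, any naive edge gadget would let the briber boost unrelated candidates and turn every instance into a \yesins. The all-but-endpoints votes neutralize this by leaving only two ``unsaturated'' vertices per vote, so that distance and budget together force every boost to travel along a genuine edge. Verifying that all deviating operations are dominated---keeping $p$ while adding endpoints, adding only one endpoint, deleting extra vertices, or editing baseline votes---is the routine but essential part of the $(\Leftarrow)$ direction, after which the counting argument closes the proof.
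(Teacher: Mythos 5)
Your proposal is correct and follows essentially the same route as the paper's proof: a reduction from \prob{$\kappa$-Clique} on regular graphs with one edge-vote $(\vset\cup\{p\})\setminus\{u,u'\}$ per edge, a block of baseline votes approving everything except~$p$ tuned so that a vertex beats~$p$ exactly when $\kappa-1$ of its incident edge-votes are changed, budget $\ell=\binom{\kappa}{2}$, the distance-$3$ change that swaps~$p$ for the two endpoints, and the same degree-sum counting argument in the backward direction. The only (cosmetic) divergence is how the degree lower bound is secured: the paper simply assumes $d>\kappa^3$, disposing of the remaining instances by an \fpt{} clique algorithm, whereas you pad the graph via Cartesian products with $K_{\kappa-1}$; both are valid.
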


\begin{proof}
We prove the theorem via a reduction from the {\prob{$\kappa$-Clique}} problem restricted to regular graphs. Let $(G,\kappa)$ be an instance of the  {\prob{$\kappa$-Clique}}  problem, where~$G=(U, A)$ is a~$\dd$ regular graph with~$\dd>0$. Let~$m$ be the number of edges in~$G$. Without loss of generality, we assume that~$\kappa\geq 2$. Moreover, we assume that $d>{\kappa}^3$; otherwise, the instance can be solved in {\fpt}-time with respect to~$\kappa$. Let~$r$ be an integer at least~$3$. We construct an {\pvc{\av}} instance $((C, V), \discset, \ell, k, r)$ as follows. 
The candidate set is $C=U\cup \{p\}$, where~$p\not\in U$ is the only distinguished candidate, i.e., $\discset=\{p\}$. Regarding the votes, we first create~$m$ votes corresponding to the edges in~$G$. Particularly, for each edge~$\edge{u}{u'}\in A$, we create a vote approving all candidates in~$C$ except~$u$ and~$u'$. Then, we create $d+1-\frac{(\kappa-1)\cdot (\kappa+2)}{2}$ votes approving all candidates in~$C$ except~$p$. Note that under our assumption $d>\kappa^3$ and $\kappa\geq 2$, $d+1-\frac{(\kappa-1)\cdot (\kappa+2)}{2}$ is a positive integer. We complete the construction by setting $\ell=\frac{\kappa\cdot (\kappa-1)}{2}$ and $k=\kappa$, i.e., we are allowed to change at most $\frac{\kappa\cdot (\kappa-1)}{2}$ votes, and we aim to select exactly~$\kappa$ winners.

Clearly, the above construction can be implemented in polynomial time. It remains to analyze the correctness of the reduction. Let us first consider the {\av} scores of the candidates. The {\av} score of~$\p$ is
\[\scoreE{\p}{E}{\av}=m,\]
and the {\av} score of every other candidate $u\in U$ is
\begin{align*}
\scoreE{u}{E}{\av} & =\left(d+1-\frac{(\kappa-1)\cdot (\kappa+2)}{2}\right)+\left(m-d\right)\\
             & =1+m-\frac{(\kappa-1) \cdot (\kappa+2)}{2}.\\
\end{align*}
Under the assumption $\kappa\geq 2$, it holds that $\scoreE{u}{E}{\av} < \scoreE{p}{E}{\av}$ for every candidate $u\in U$.

$(\Rightarrow)$ Assume that there is a clique $U'\subseteq U$ of~$\kappa$ vertices in the graph~$G$. We modify all votes corresponding to edges between vertices in the clique~$U'$, i.e., all edges in~$G[U']$. Clearly, there are exactly $\frac{\kappa\cdot (\kappa-1)}{2}=\ell$ such votes. We modify them so that all candidates except~$\p$ are approved in these votes. The distance between each new vote and its original vote is exactly~$3$. Modifying a vote corresponding to some edge~$\edge{u}{u'}$ this way decreases the {\av} score of~$\p$ by one, while increases the {\av} scores of both~$u$ and~$u'$ by one. So, after all these modifications, the {\av} score of~$\p$ becomes
\[m-\frac{\kappa\cdot (\kappa-1)}{2}.\]
For each $u\in U'$, as there are exactly $\kappa-1$ edges incident to~$u$ in~$G[U']$, after modifying the above~$\ell$ votes, the {\av} score of~$u$ becomes
\[\scoreE{u}{E}{\av}+\kappa-1=1+m-\frac{\kappa \cdot (\kappa-1)}{2},\]
strictly greater than the final score of~$\p$, implying that~$\p$ cannot be in any winning $k$-committees after all the modifications.

$(\Leftarrow)$ Assume that we can change at most $\ell=\frac{\kappa \cdot (\kappa-1)}{2}$ votes in~$(C, V)$ so that~$\p$ is not included in any winning $k$-committee of the resulting election. Observe that it is always better to modify votes corresponding to the edges than those who already disapprove~$\p$ and approve all the other candidates. Moreover, it is optimal to fully use the number of operations allowed, and when a vote corresponding to an edge is determined to be modified, it is optimal to change it to a vote that approves all candidates except the distinguished candidate~$\p$. The final {\av} score of~$\p$ is then determined:
$\scoreE{p}{E}{\av}-\ell=m-\frac{\kappa \cdot (\kappa-1)}{2}$.
This implies that after the changes, there must be at least~$k$ candidates whose {\av} scores increase by at least $\kappa-1$ each.  Note that when a vote corresponding to some edge~$\edge{u}{u'}$ is changed, only the {\av} scores of~$u$ and~$u'$ increase, each by one.
Let~$U'$ be the set of candidates whose scores are increased after the changes of the votes in a feasible solution.
Therefore,~$U'$ consists of the vertices spanned by all edges whose corresponding votes are changed.
By the above analysis, changing $\ell=\frac{\kappa\cdot (\kappa-1)}{2}$ edge-votes can increase the total {\av} scores of candidates in~$U'$ by at most $\kappa\cdot (\kappa-1)$.
This directly implies that~$U'$ consists of exactly~$k$ candidates. From the fact that  $\frac{\kappa\cdot (\kappa-1)}{2}$ edges span exactly~$\kappa$ vertices if and only if the set of spanned vertices is a clique, we know that~$U'$ is a clique in~$G$, and hence the given {\prob{$\kappa$-Clique}} instance is a {\yesins}.
\end{proof}

Now, we move on to SAV and NSAV. 

\begin{theorem}
\label{thm-vc-sav-nsav-nph-r-4}
{\emph{\pvc{{\sav}}}}  and {\emph{\pvc{{\nsav}}}} for all integers $r\geq 4$ are {\emph\nph}. This holds even if $k=1$.
\end{theorem}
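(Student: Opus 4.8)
The plan is to prove \nphns{} for \pvc{{\sav}} via a reduction from RX3C, reusing the overall architecture of the AppDel reduction in Theorem~\ref{thm-appdel-sav-nsav-nph-k=1}, and then obtain the \nsav{} result by invoking Lemma~\ref{lem-relation-sav-nsav} with a large set of dummy candidates. Given an RX3C instance $(\xs, \xc)$ with $\abs{\xs}=\abs{\xc}=3\xsize$ (assuming $\xsize$ even, as before), I set $C=\xs\cup\{p\}$, $\discset=\xs$, $k=1$, so that the task is exactly to make $\{p\}$ the unique winning $1$-committee. I create a large block $V'$ of identical ``padding'' votes each approving all of $C(\xs)$ (but not $p$), tuned so every $\xse\in\xs$ has a slightly higher \sav{} score than $p$ initially, and for each $\xce\in\xc$ one set-vote $v(\xce)=\{p\}\cup\{c(\xse):\xse\in\xce\}$ approving $p$ together with its three members. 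The budget is $\ell=\xsize$. The crucial design point forced by $r\ge 4$: since each set-vote already approves exactly four candidates, a single VC operation (Hamming distance $\le 4$) can change $v(\xce)$ into $\{p\}$, simultaneously boosting $p$'s \sav{} contribution from that vote (from $\tfrac14$ to $1$) and stripping the approval from all three members $c(\xse)$.

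For the forward direction, given an exact cover $\xc'$, I change each $v(\xce)$ with $\xce\in\xc'$ into $\{p\}$; this raises $p$'s \sav{} score by a fixed amount per vote while each $c(\xse)$ loses exactly one $\tfrac14$-contribution (since $\xc'$ covers each element exactly once), and the arithmetic should be arranged so that $p$ strictly overtakes every $c(\xse)$, making $\{p\}$ the unique winner. For the converse I would establish, as in Theorem~\ref{thm-appdel-sav-nsav-nph-k=1}, the standard chain of exchange observations: (i) it is never useful to modify a padding vote in $V'$, because doing so only redistributes score among the distinguished candidates rather than shifting score onto $p$, so any solution touching $V'$ can be rewritten to touch only set-votes; (ii) when a set-vote is changed it is optimal to change it precisely to $\{p\}$; and (iii) it is optimal to use all $\xsize$ changes. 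Under these normalizations, $p$'s final score is fixed, and for $\{p\}$ to be the strict winner every $\xse\in\xs$ must have at least one of its three containing set-votes modified; since only $\xsize$ votes are modified and each element lies in exactly three sets while $\abs{\xs}=3\xsize$, a counting argument forces the modified sets to form an exact cover.

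The main obstacle I anticipate is the numerical tuning of the padding block $V'$ together with verifying the strict inequalities in both directions, since \sav{} scores are fractions with denominators depending on vote sizes ($3\xsize$ for padding votes, $3$ and $4$ for unmodified and modified set-votes). I must choose the size of $V'$ so that the initial gap between each $c(\xse)$ and $p$ is small enough to be overcome by an exact cover yet large enough that only a genuine cover (and not, say, modifying fewer votes or leaving some element uncovered) can produce the required overtaking; this is exactly the role played by the $\frac34\xsize^2-3\xsize$ count in Theorem~\ref{thm-appadd-sav-nsav-np-hard} and the $\frac92\xsize^2-2\xsize$ count in Theorem~\ref{thm-appdel-sav-nsav-nph-k=1}, and I would pick the analogous quantity here. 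A secondary subtlety is confirming that the distance bound $r\ge 4$ is both sufficient (each set-vote needs distance $4$ to become $\{p\}$: remove three members, add $p$) and that no advantage is gained by partial changes, which the exchange arguments in the converse direction handle.

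Finally, for \nsav{} I add a set $D$ of at least $\abs{V}\cdot\abs{C}^2$ dummy candidates approved by no one; Lemma~\ref{lem-relation-sav-nsav} guarantees that the \sav{} ordering of the genuine candidates is preserved in the \nsav{} ordering of the enlarged election, and since every modification in the reduction only rearranges approvals among existing candidates (never touching dummies), the same reduction proves \nphns{} for $r$-VC-\nsav{} for all $r\ge 4$.
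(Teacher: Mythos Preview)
Your reduction has a genuine gap in the converse direction. The exchange claim (i)---that modifying a padding vote ``only redistributes score among the distinguished candidates rather than shifting score onto $p$''---is lifted from the AppDel proof of Theorem~\ref{thm-appdel-sav-nsav-nph-k=1}, but it is false for the VC operation. Under VC the briber may \emph{add} approvals, so when $r$ is large (say $r\ge 3\xsize+1$) a padding vote $C(\xs)$ can be changed directly to $\{p\}$. This raises $p$'s {\sav} score by~$1$ (the padding vote did not approve $p$ before) and lowers every $c(\xse)$ by $\tfrac{1}{3\xsize}$. By contrast, changing one of your set-votes $v(\xce)=\{p\}\cup\{c(\xse):\xse\in\xce\}$ to $\{p\}$ raises $p$'s score by only $\tfrac34$ (since $p$ already received $\tfrac14$ from it) and lowers only the three members by $\tfrac14$ each. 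Thus for \emph{every} distinguished candidate a padding change closes the gap to $p$ by $1+\tfrac{1}{3\xsize}$, strictly more than the at most~$1$ obtained from a set-vote change. With any padding count that makes the forward direction work (in particular with $N=\tfrac92\xsize^2-2\xsize$ as in Theorem~\ref{thm-appdel-sav-nsav-nph-k=1}), changing $\xsize$ padding votes to $\{p\}$ already makes $p$ the unique winner regardless of whether an exact cover exists, so for all sufficiently large $r$ the constructed instance is a trivial {\yesins} and the reduction collapses. (A smaller symptom of the same confusion: the Hamming distance from $\{p\}\cup\xce$ to $\{p\}$ is $3$, not $4$, since $p$ is already present.)

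The paper's construction avoids exactly this pitfall by keeping $p$ out of every vote initially and by using singleton padding votes $\{\xse\}$ rather than the full block $C(\xs)$. Then changing \emph{any} vote to $\{p\}$ gives $p$ exactly one additional point, so after $\xsize$ optimal changes $p$'s final score is pinned at $\xsize$; and a padding change now affects only one distinguished candidate while a set-vote change affects three. The counting argument ($3\xsize$ candidates to drop, $\xsize$ changes, at most three candidates touched per change) then forces all changed votes to be set-votes forming an exact cover. Your architecture can be repaired along these lines, but not as written.
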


\begin{proof}
We give only the detailed proof for {\sav}. The proof for {\nsav} can be obtained by utilizing Lemma~\ref{lem-relation-sav-nsav}.

The proof follows a similar structure to the proof of Theorem~\ref{thm-pvc-av-nph} and is based on a reduction from the {\prob{RX3C}} problem. Let $r\geq 4$ be an integer. Let $I=(\xs, \xc)$ be an instance of the {\prob{RX3C}} problem with $\abs{\xs}=\abs{\xc}=3\xsize>0$. We create an instance $g(I)=((C, V), \discset, \ell, k, r)$ of {\pvc{{\sav}}} as follows. The candidate set is $C=\xs\cup \{p\}$ with $\discset=\xs$ being the set of distinguished candidates. 
We have in total $3\xsize+1$ candidates. We create the following votes.
For each $\xse\in \xs$, we create $\xsize-1$ votes in~$V$, each approving only~$\xse$. In addition, for each $\xce\in \xc$, we create one vote $v(\xce)=\xce$ in~$V$ which approves exactly the three candidates corresponding to the elements in~$\xce$. We have in total $3\xsize \cdot (\xsize-1)+3\xsize=3\xsize^2$ votes. The reduction is completed by setting $\ell=\xsize$ and $k=1$. 
In the following, we show that~$I$ is a {\yesins} if and only if~$g(I)$ is a {\yesins}. Note that in the election~$(C, V)$, the {\sav} score of~$p$ is~$0$, and that of every $\xse\in \xs$ is $(\xsize-1)+3\times \frac{1}{3}=\xsize$. 

($\Rightarrow$) Assume that there is an exact set cover $\xc'\subseteq \xc$ of~$\xs$. We change the~$\xsize$ votes corresponding to~$\xc'$ so that they approve only the candidate~$p$ after the changes. The Hamming distance between each new vote and its original vote is four, which is bounded from above by~$r\geq 4$. In the new election, the {\sav} score of~$p$ is~$\xsize$, and that of every candidate $\xse\in \xs$ is $\xsize-\frac{1}{3}$, implying that none of~$\discset$ is a winner of the new election.

($\Leftarrow$) Suppose that we can change at most $\ell=\xsize$ votes by performing at most~$\ell$ vote-level change operations to exclude any of~$\discset$ from any winning $k$-committee. We observe that if a vote is determined to be changed, it is always optimal to change it to a vote that exclusively approves the candidate~$p$. As a consequence, we may assume that~$p$ has a {\sav} score of~$\xsize$ in the final election. Therefore, for every candidate $\xse\in\xs$, at least one vote approving~$\xse$ is changed. As we change at most $\ell=\xsize$ votes and $\abs{\xs}=3\xsize$, all changed votes must be from those corresponding to~$\xc$. As a vote $v(\xce)$ approves~$\xse$ if and only if $\xse\in \xce$, it holds that the subcollection corresponding to the changed votes is set cover of~$\xs$, and as we can change at most $\ell=\xsize$ votes, it is in fact an exact set cover of $\xs$.
\end{proof}

Regarding fixed-parameter intractability results, we have the following theorem.

\begin{theorem}
\label{thm-pvc-sav-nsav-wah-l-k-r-1}
For all integers $r\geq 1$, {\emph{\pvc{{\sav}}}}  and {\emph{\pvc{{\nsav}}}}  are {\emph\wah} with respect to~$\ell+k$, even when $\abs{\discset}=1$.
\end{theorem}

\begin{proof}
We provide only the proof for {\sav}. The proof for {\nsav} can be obtained from the following proof by introducing a number of dummy candidates based on Lemma~\ref{lem-relation-sav-nsav}. 

Our proof is based on a reduction from the {\prob{$\kappa$-Clique}} problem on regular graphs. Let~$r$ be a positive integer. Let $I=(G, \kappa)$ be an instance of the {\prob{$\kappa$-Clique}} problem, where~$G=(U, A)$ is a $\dd$-regular graph with $\dd>0$. We assume that~$\kappa$ is odd. This assumption does not change the {\wahns} of the problem.\footnote{To show the {\wahns} of the {\prob{$\kappa$-Clique}} problem on regular graphs, Cai~\shortcite{DBLP:journals/cj/Cai08} presented a polynomial-time reduction that transforms an instance of the {\prob{$\kappa$-Clique}} problem in general graphs into an equivalent instance where the input graph is regular. Notably, the parameters in both instances remain unchanged. Thus, to establish that the assumption does not affect the {\wahns} of the problem when restricted to regular graphs, it suffices to show that the assumption does not alter the {\wahns} of the problem in general. To achieve this, consider an instance of the {\prob{$\kappa$-Clique}} problem where~$\kappa$ is even. We modify the graph by adding a new vertex  adjacent to all existing vertices. Clearly, the original instance contains a clique of~$\kappa$ vertices if and only if the modified graph contains a clique of~$\kappa+1$ vertices.} Let~$m=\abs{A}$ denote the number of edges in~$G$. In addition, we assume that $m>d+\kappa^2+\kappa\cdot r$ (note that~$r$ is an integer constant), $d\geq \frac{\kappa \cdot (\kappa-1)}{2}$, and $\kappa\geq 5$, which does not change the {\wahns} of the problem too. (If they are not satisfied the problem can be solved in {\fpt}-time with respect to~$\kappa$~\cite{DBLP:journals/jea/EppsteinLS13}.) Now we create an instance $g(I)=((C, V), \discset, \ell, k, r)$ of {\pvdc{{\sav}}} as follows. 
 Let  
    \[
    t = m - d - \frac{\kappa \cdot (\kappa - 3)}{2} - \frac{\kappa - 1}{2} \cdot (r+2).
    \]  
    By our assumptions,~$t$ is a positive integer. 
 
 We begin by creating a candidate~$p$ and defining $\discset = \{p\}$, meaning that $p$ is the only distinguished candidate.   
    Then, for each vertex $u \in U$, we create a candidate~$c(u)$ and introduce~$t$ subsets $C(u,1), C(u,2), \dots, C(u,t)$ of dummy candidates, where each subset consists of exactly $r+1$ candidates.  
    Additionally, for each edge $\edge{u}{u'} \in A$, we create a set~$C(\edge{u}{u'})$ of $r-1$ dummy candidates.  

  
  Now, we define the votes in~$V$. For each vertex $u \in U$, we introduce~$t$ votes, denoted as $v(u,1)$, $v(u,2)$, $\dots$, $v(u,t)$, where each vote is given by  
    \[
    v(u,i) = \{c(u)\} \cup C(u,i),
    \]  
    for all $i \in [t]$.  
    Furthermore, for each edge $\edge{u}{u'}$ in~$G$, we create a vote  
    \[
    v(\edge{u}{u'}) = \{c(u), c(u'), p\} \cup C(\edge{u}{u'}).
    \]  
    Let $V(A) = \{v(\edge{u}{u'}) \setmid \edge{u}{u'} \in A\}$ be the subset of votes corresponding to edges of $G$.  

\begin{figure}
    \centering
    \includegraphics[width=0.85\linewidth]{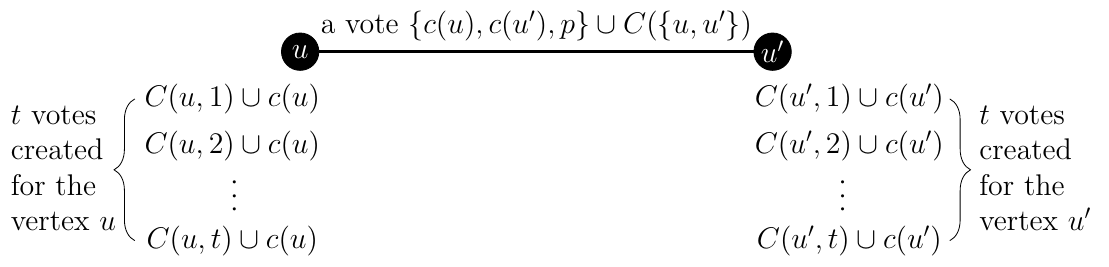}
    \caption{An illustration of the election construction in the proof of Theorem~\ref{thm-pvc-sav-nsav-wah-l-k-r-1}. Each set  $C(u, i)$ consists of exactly $r+1$ candidates, while each set $C(\edge{u}{u'})$, created for an edge $\edge{u}{u'}$, consists of exactly $r-1$ candidates.} 
    \label{fig-vote-change-sav-wa-hard}
\end{figure}
Refer to Figure~\ref{fig-vote-change-sav-wa-hard} for an illustration of the construction of $(C, V)$. 
The reduction is completed by setting $\ell=\frac{\kappa \cdot (\kappa-1)}{2}$ and $k=\kappa$. It is clear that $g(I)$ can be constructed in polynomial time. 
Broadly speaking, the instance~$g(I)$ is designed so that all dummy candidates have no possibility of becoming winners. Furthermore, we can modify the votes corresponding to edges within a clique of size~$\kappa$ by removing dummy candidates and~$p$, ensuring that the candidates representing the clique achieve strictly higher {\sav} scores than  after these modifications. 
The formal proof of the reduction is as follows.

Let $E=(C, V)$. Before the formal proof, it is helpful to analyze the {\sav} scores of the candidates in the election~$E$. Observe that each vote approves exactly $r+2$ candidates. As each dummy candidate is approved by exactly one vote, the {\sav} score of each dummy candidate is $\frac{1}{r+2}$. The {\sav} score of~$p$ is~$\frac{m}{r+2}$, and the {\sav} score of each~$c(u)$ where $u\in U$ is $\frac{d+t}{r+2}$.  

($\Rightarrow$) Assume that there is a clique $K\subseteq U$ of~$\kappa$ vertices in~$G$. Let $A'=\{\edge{u}{u'}\in A \setmid \{u, u'\}\subseteq K\}$ be the set of edges in the subgraph of $G$ induced by~$K$. It follows that $\abs{A'}=\frac{\kappa \cdot (\kappa-1)}{2}$. We change the $\ell=\frac{\kappa \cdot (\kappa-1)}{2}$ votes corresponding to~$A'$ by removing~$p$ and all dummy candidates from these votes. Precisely, for each $\edge{u}{u'}\in A'$, we change the vote $v(\edge{u}{u'})$ into~$\{c(u), c(u')\}$. The Hamming distance between the original and the new votes is~$r$. Let~$E'$ denote the resulting election.  
Now we analyze the {\sav} scores of the candidates in~$E'$. As changing each vote as above decreases the {\sav} score of~$p$ by~$\frac{1}{r+2}$, the {\sav} score of~$p$ in~$E'$ is 
\[\scoreE{p}{E'}{\sav}=\frac{m-(\kappa\cdot (\kappa-1)/{2})}{r+2}.\] The {\sav} score of each dummy candidate in~$E'$ is at most $\frac{1}{r+2}$. As~$K$ is a clique, each $u\in K$ is incident to exactly $\kappa-1$ edges in~$A'$. From the above construction, it follows that there are exactly~$\kappa-1$ votes approving~$c(u)$ being changed. As changing each vote approving~$c(u)$ as above increases the {\sav} score of~$c(u)$ by $\frac{1}{2}-\frac{1}{r+2}$, the {\sav} score of~$c(u)$ in~$E'$ climbs to
\[\frac{d+t}{r+2}+(\kappa-1)\cdot \left(\frac{1}{2}-\frac{1}{r+2}\right) = \scoreE{p}{E'}{\sav}+\frac{1}{r+2}.\] As $k=\xsize=\abs{K}$, this implies that~$p$ cannot be in any {\sav} winning $k$-committees of~$E'$.

($\Leftarrow$) Assume that~$g(I)$ is a {\yesins}. That is, there exists $V'\subseteq V$ of at most~$\ell$ votes that can be changed into votes at a Hamming distance of at most~$r$ from their original votes, ensuring that none of the candidates in~$\discset$ is contained in any wining $k$-committee of the resulting election. Observe that if a vote created for a vertex $u\in U$ is included in~$V'$, we can replace this vote in~$V'$ with a vote corresponding to an edge incident to~$u$ that is not currently  in~$V'$. Such a vote must exist, given that $d\geq \frac{\kappa \cdot (\kappa-1)}{2}$ and $\ell=\frac{\kappa \cdot (\kappa-1)}{2}$.\footnote{Changing a vote (subject to the distance bound) created for a vertex $u\in U$ increases the score gap between~$c(u)$ and $p$ by at most $1/2-1/(r+2)$. However, changing a vote $v(\edge{u}{u'})$ by removing the dummy candidates and $p$ not only increases the score gap between~$c(u)$ and~$p$ but also amplifies the score gap between~$c(u')$ and~$p$ by at least the same amount.} In addition, if a vote $v(\edge{u}{u'})$ corresponding to an edge~$\edge{u}{u'}$ is contained in~$V'$, removing the~$r-1$ dummy candidates and the distinguished candidate~$p$ is an optimal strategy. Altering a vote in this manner minimizes the {\sav} score of candidate~$p$ to the greatest extent possible, while simultaneously maximizing the scores of nondummy candidates. Moreover, it is always better to fully use the budget, i.e., changing exactly~$\ell$ votes than changing less than~$\ell$ votes. In light of these observations, we may assume that  $\abs{V'}=\ell=\frac{\kappa\cdot (\kappa-1)}{2}$, and all votes in $V'$ are from~$V(A)$. Defining~$E'$ as the election obtained from $E$ by changing all votes from $V'$ by removing the dummy candidates and~$p$, we know then that~$p$ is not contained in any winning $k$-committee of~$E'$. The {\sav} score of~$p$ in~$E'$ is $\frac{m-(\kappa\cdot (\kappa-1)/{2})}{r+2}$, as analyzed in the above proof for the $\Rightarrow$ direction. Let~$K$ be the set of vertices in the edges corresponding to the votes in~$V'$. Clearly, we have $\abs{K}\geq \kappa$. Furthermore, equality holds if and only if $K$ forms a clique in~$G$. As changing one vote $v(\edge{u}{u'})$ where $\edge{u}{u'}\in A$ in a way as discussed above in effect shifts the total score $\frac{r}{r+2}$ of $C(\edge{u}{u'})\cup \{p\}$ received from this vote to the two candidates~$c(u)$ and~$c(u')$, the total increase in the {\sav} score of candidates corresponding to~$K$ is $\frac{r}{r+2}\cdot \ell=\frac{r}{r+2} \cdot \frac{\kappa \cdot (\kappa-1)}{2}$. 
As~$p$ is not in any winning $k$-committees of~$E'$, we know that there are at least~$k$ candidates corresponding to vertices in~$K$ each of whose {\sav} score increases by at least 
\[\frac{m-(\kappa\cdot (\kappa-1)/{2})}{r+2}+\frac{1}{r+2}-\frac{d+t}{r+2}=\frac{r\cdot (\kappa-1)}{2(r+2)}\] after the changes. (On the left side, the first term represents the {\sav} score of~$p$ in the final election~$E'$, the third term corresponds to the {\sav} score of every vertex-candidate in the original election $E$, and the second term provides a lower bound on the minimum possible positive increase in the {\sav} score due to the vote-change operations.) It is then evident that the total increase in the {\sav} scores of the $k$ candidates is exactly the total increase in the {\sav} score of candidates corresponding to~$K$. Consequently, $\abs{K}=\kappa$ holds, and moreover,~$K$ forms a clique in~$G$. 

To prove the hardness for NSAV, recall that in the original election $E$, the SAV scores of candidates are $\frac{m}{r+2}$, $\frac{d+t}{r+2}$, and $\frac{1}{r+2}$. The minimum positive difference between the three values is bounded below by~$\frac{1}{r+2}$. As the total number of candidates is strictly larger than $r+2$, we can add a sufficiently large number of dummy candidates, as suggested in Lemma~\ref{lem-relation-sav-nsav}, to the above reduction to establish the {\wahns} of {\pvc{\nsav}} as stated in the theorem.
\end{proof}

\subsection{Vote-Level Addition Change}

\hide{
\begin{theorem}
{\pvac{\av}} is {\nph} for all $r\geq 3$ even if three is only one distinguished candidate.
\end{theorem}

\begin{proof}
We prove the theorem via a reduction from the {\prob{RX3C}} problem. Let $(\xs, \xc)$ be an instance of RX3C. The candidate set is $\xs\cup \{p\}$ where~$p$ is the distinguished candidate. We create in total $3\xsize$ votes. First, we arbitrarily select $3\xsize-3$ many $3$-sets in $S$ and for each selected $\xce\in \xc$ we create one vote which approve all candidates except the three candidates corresponding to the three elements in~$\xce$, i.e., each of such vote is equal to $\{p\}\cup (U\setminus s)$. For each $\xce\in \xc$ of the remaining three $3$-subsets, we create one which approves exactly the candidates in $U\setminus s$. We complement the reduction by setting $k=3\kappa$. The reduction clearly takes polynomial time. With respect to the constructed election, all candidates have the same {\av} score $3\xsize-3$. To exclude~$p$ from any winning $k$-committee, all the other candidates must be made so that they have higher score than that of~$p$. We show that the {\prob{RX3C}} instance is a {\yesins} \iff the above constructed instance is a {\yesins}.

$(\Rightarrow)$ Assume that $\xc'\subseteq \xc$ is an exact $3$-set cover.

$(\Leftarrow)$
\end{proof}

\begin{theorem}
{\pvac{\av}} is polynomial-time solvable if $r=1$.
\end{theorem}

\begin{proof}
Let~$s$ be the {\av} score of the strongest distinguished candidate. We maintain a set~$A$  of candidates who have score at least $s+1$. In addition, we let $V'=\emptyset$. If $\abs{A}\geq k$, we return ``{\yes}''. Otherwise, we repeat the following procedure until $\abs{A}=k$ in which case we return ``{\yes}'': Let $c\in C\setminus (\discset\cup A)$ the candidate with the highest {\av} score among all in $C\setminus (\discset\cup A)$. If $s+1-\scoreE{c}{E}{\av} > \ell$ or $C=\discset \cup A$, we reject. Otherwise, select arbitrarily $s+1- \scoreE{c}{E}{\av}$ votes who do not approve~$c$, modify them so that all of them become approve~$c$, remove these votes into~$V'$, and decrease~$\ell$ accordingly (by $s+1- \scoreE{c}{E}{\av}$).
\end{proof}
}


In the previous section, we showed that {\pvc{\av}} is already {\nph} even when $k=1$ (Theorem~\ref{thm-pvc-av-nph}). However, this is not the case for the vote-level addition operation.

\begin{theorem}
\label{thm-vac-av-poly-k-1}
For all possible values of~$r$, {\emph{\pvac{\av}}}  is polynomial-time solvable if $k=1$.
\end{theorem}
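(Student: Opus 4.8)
The plan is to reduce the destructive goal to a single score comparison and then resolve it greedily. When $k=1$, a winning committee is a singleton $\{c\}$ where $c$ maximizes the {\av} score, so none of~$\discset$ lies in a winning $1$-committee exactly when some candidate in $C\setminus\discset$ has {\av} score strictly larger than that of every distinguished candidate. Accordingly I would first compute, in the input election, the value $s^*=\max_{p\in\discset}{\avs}(p)$, and handle the trivial case $C=\discset$ (where every committee contains a distinguished candidate) by rejecting.

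The key structural observation is that a VAC operation only adds approvals, hence only increases {\av} scores and never decreases any. Since adding a distinguished candidate to a vote can only raise $s^*$ while helping no non-distinguished candidate, it is without loss of generality optimal to add only non-distinguished candidates; consequently the {\av} scores of all distinguished candidates, and in particular $s^*$, remain fixed throughout. The task thus becomes: using at most~$\ell$ VAC operations (one per modified vote, each of Hamming distance at most~$r$), can we make some candidate in $C\setminus\discset$ reach {\av} score at least $s^*+1$?

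Because only a single winner is needed, it suffices, and is optimal, to devote all operations to one target candidate~$c$. For a fixed~$c$, the most its score can rise is by adding~$c$ to votes that do not already approve it; adding~$c$ alone is a distance-$1$ change, legal whenever $r\geq 1$, and each such vote contributes exactly~$+1$ to ${\avs}(c)$. Hence the maximum achievable score of~$c$ is ${\avs}(c)+\min\{\ell,\,\n-{\avs}(c)\}$ when $r\geq 1$, and just ${\avs}(c)$ when $r=0$ (no VAC is permissible under distance~$0$). Since the map $x\mapsto x+\min\{\ell,\n-x\}$ is non-decreasing, the easiest target is the non-distinguished candidate~$c^\circ$ of maximum {\av} score, so I would accept if and only if ${\avs}(c^\circ)+\min\{\ell,\,\n-{\avs}(c^\circ)\}\geq s^*+1$ (with the $r=0$ variant dropping the $\min$ term). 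All quantities are computable in polynomial time, giving the claimed tractability.

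The one point needing care, which I expect to be the main obstacle rather than the final greedy check, is justifying the two normalizations simultaneously: the single-target focus and the ``never add distinguished candidates'' reduction. Given any feasible solution, stripping every added distinguished candidate from each modified vote keeps all distances at most~$r$ and leaves every non-distinguished score unchanged while it can only lower $s^*$, so the stripped solution remains feasible; and for a fixed target, operations applied to other votes on behalf of other candidates never raise that target's score, so boosting it in isolation is optimal for it. Combined with the fact that, for $k=1$, a single candidate exceeding $s^*$ already ousts all of~$\discset$, these observations show that checking the best single target $c^\circ$ is both necessary and sufficient.
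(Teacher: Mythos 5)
Your proposal is correct and follows essentially the same route as the paper's proof: both reduce the $k=1$ case to checking whether the highest-scoring non-distinguished candidate $c^\circ$ can be pushed to score $s^*+1$ by adding it alone (a distance-$1$ change) to at most $\min\{\ell,\,n-{\avs}(c^\circ)\}$ votes not yet approving it, with the $r=0$ case handled separately. Your explicit monotonicity argument for why the single best target suffices is a small but welcome addition of rigor over the paper's implicit use of the same fact.
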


\begin{proof}
Let $(E, J, k, \ell, r)$ be an instance of {\pvac{\av}}, where $E=(C, V)$, $J\subseteq C$,  $k=1$, and $r$ is a nonnegative integer. 
If $r=0$, we solve the instance by checking whether there exists a nondistinguished candidate with an {\av} score strictly greater than that of any distinguished candidates, which can be done in polynomial time. We now assume that $r\geq 1$.
Let~$s$ be the maximum {\av} score of the distinguished candidates, i.e., for all $c\in J$, it holds that $\scoreE{c}{E}{\av} \leq s$, and there exists $c'\in J$ with $\scoreE{c'}{E}{\av}=s$. Let $c^{\star}\in C\setminus J$ be a candidate with the highest {\av} score among candidates in $C\setminus J$. Define~$\overline{V}(c^{\star})$ as the submultiset of votes in~$V$ that disapprove~$c^{\star}$. Our algorithm returns ``{\yes}'' if $\scoreE{c^{\star}}{E}{\av}+\min\{\ell,\abs{\overline{V}(c^{\star})}\}\geq s+1$, and returns ``{\no}'' otherwise. The correctness of the algorithm is straightforward.
\end{proof}

Theorem~\ref{thm-vac-av-poly-k-1} prompts the question of whether {{\pvac{\av}}} can be solved in {\fpt}-time with respect to~$k$. 
The subsequent theorem provides a negative and robust answer: an {\fpt}-algorithm is unlikely even when parameterized by $\ell+k$, and in the presence of only one distinguished candidate. 

\begin{theorem}
\label{thm-vac-av-wah}
For all integers $r\geq 2$, {\emph{\pvac{\av}}} is {\emph\wah} when parameterized by $\ell+k$. Moreover, this holds even when $\abs{\discset}=1$.
\end{theorem}

\begin{proof}
We prove the theorem via a reduction from the {\prob{$\kappa$-Clique}} problem restricted to regular graphs. Let $I=(G, \kappa)$ be an instance of the {\prob{$\kappa$-Clique}} problem, where~$G=(U, A)$ is a $d$-regular graph. We assume that $m>d>\kappa>2$, since otherwise the problem can be solved in {\fpt}-time with respect to~$\kappa$~\cite{DBLP:journals/jea/EppsteinLS13}. Let~$m$ be the number of edges in the graph~$G$. We create an instance~$g(I)=((C,V), \discset, \ell, k, r)$ of {\pvac{\av}} as follows. The candidate set is $C=U\cup \{p\}$. Let~$\discset=\{p\}$. Regarding the votes, we first create $d-\kappa+2$ votes each of which approves all candidates in~$C$ except the distinguished candidate~$p$. From the above assumption, it follows that $d-\kappa+2$ is a positive integer. In addition, for each edge~$\edge{u}{u'}\in A$, we create one vote $v(\edge{u}{u'})$ approving all candidates in~$C$ except~$u$ and~$u'$, i.e., $v(\edge{u}{u'})=(U\cup \{p\})\setminus \{u, u'\}$. We set~$k=\kappa$ and $\ell=\frac{\kappa\cdot (\kappa-1)}{2}$. It is fairly easy to check that in the election~$(C, V)$, the {\av} score of~$p$ is~$m$, and the {\av} score of every candidate $u\in U$ is $(d-\kappa+2)+(m-d)=m-\kappa+2$, which is a smaller than the {\av} score of~$p$. Hence,~$p$ must be included in all winning $k$-committees in the current election. The construction clearly can be done in polynomial time. In the following, we show that for all possible values of $r\geq 2$, the instance~$I$ is a {\yesins} if and only if~$g(I)$ is a {\yesins}.

($\Rightarrow$) Assume that there is a clique~$K\subseteq U$ of~$\xsize$ vertices in the graph~$G$. Let $A'=\{e\in A \setmid e\subseteq K\}$ be the set of edges whose both endpoints are in~$K$. As~$K$ is a clique, $\abs{A'}=\frac{\kappa\cdot(\kappa-1)}{2}=\ell$. We change the~$\ell$ votes corresponding to~$A'$ so that they approve all candidates after the changes. Clearly, the distance between a vote $v(\edge{u}{u'})$ and the new vote after changing $v(\edge{u}{u'})$ in the above way is exactly two. After the changes, the {\av} score of~$p$ remains unchanged. However, for every candidate $u\in K$, there are exactly $\kappa-1$ votes which originally do not approve~$u$ but are changed into ones approving all candidates including~$u$. Therefore, the {\av} score of every candidate from~$K$ after the above changes is $(m-\kappa+2)+(\kappa-1)=m+1$, implying that~$K$ is the unique winning $k$-committee. As $p\not\in K$, the instance~$g(I)$ is a {\yesins}.

($\Leftarrow$) Assume that we can change at most~$\ell$ votes so that~$p$ is not in any winning $k$-committee. Observe that there must be a feasible solution where all changed votes are from those corresponding to~$A$. Moreover, if a vote corresponding to an edge~$\edge{u}{u'}$ is determined to be changed, it is optimal to change it by adding the two candidates~$u$ and~$u'$ into the vote. Let~$A'$ be the set of edges corresponding to the votes that are changed in such a solution. In addition, let~$K$ denote the set of vertices that are incident to at least one edge in~$A'$. By the above observation, we know that in the final election, the {\av} scores of~$p$ and those in $U\setminus K$ remain unchanged. This implies that $\abs{K}\geq k$ and, moreover, there exists a subset $K'\subseteq K$ of~$k$ candidates such that, for every $u\in K'$, at least $\kappa-1$ votes which do not approve~$u$ originally are changed in the solution. By the above construction, this is equivalent that every candidate-vertex in~$K'$ is incident to at least $\kappa-1$ edges in~$A'$. Given that $\abs{A'}=\frac{\kappa\cdot (\kappa-1)}{2}$, this is possible only when~$K'=k$ and~$K'$ is a clique of~$G$.
\end{proof}

Unlike {\av}, for {\sav} and {\nsav}, we already have {\nphns} even when both~$k$ and~$r$ are equal to~$1$.

\begin{theorem}
\label{thm-vac-sav-nph-k-1-r-1}
For all $r\geq 1$, {\emph{\pvac{{\sav}}}} and {\emph{\pvac{{\nsav}}}} are {\emph\nph}, even when $k=1$.
\end{theorem}

\begin{proof}
We prove Theorem~\ref{thm-vac-sav-nph-k-1-r-1} for SAV via a reduction from the {\prob{RX3C}} problem.  Let $I=(\xs, \xc)$ be an {\prob{RX3C}} instance such that $|\xs|=\abs{\xc}=3\xsize>0$. Similar to the proof of Theorem~\ref{thm-appadd-sav-nsav-np-hard}, we assume that~$\xsize\geq 6$ and~$\xsize$ is even. Let $r\geq 1$ be an integer. We construct an instance $g(I)=((C, V), \discset, \ell, k, r)$ of {\pvac{{\sav}}} as follows. The candidate set is $C=\xs\cup \{p\}$, and the set of distinguished candidates is $J=\xs$. Regarding the votes, we first create $\frac{3\xsize^2}{4}-3\xsize$ votes, each approving all candidates except~$p$. Let~$V'$ denote the multiset of these votes. Since $\kappa \geq 6$ and is even, $\frac{3\xsize^2}{4}-3\xsize$ is a positive integer. In addition, for each $\xce\in \xc$, we create one vote~$v(\xce)$ which approves exactly the three elements in~$\xce$. Let $V(\xc)=\{v(\xce) \setmid \xce\in \xc\}$. Let $V=V'\cup V(\xc)$. We complete the reduction by setting $k=1$ and $\ell=\xsize$.

It is fairly easy to verify that in~$(C, V)$, the {\sav} score of $p$ is~$0$, and that of each $\xse\in \xs$ is \[\left(\frac{3\xsize^2}{4}-3\xsize\right)\cdot \frac{1}{3\xsize}+1=\frac{\xsize}{4}.\] In the following, we show that~$I$ is a {\yesins} {\iff}~$g(I)$ is a {\yesins}. As~$p$ is the only nondistinguished candidate and $k=1$, excluding any candidate in~$J$ from any winning committees is equivalent to making~$p$ the unique winner.   

$(\Rightarrow)$ Let $\xc'\subseteq \xc$ be an exact $3$-set cover of~$\xs$. Let~$V(\xc')=\{v(\xce) \setmid \xce\in \xc'\}$ be the multiset of votes corresponding to~$\xc'$. Consider the new election obtained from $(C, V)$ by changing each vote in~$V(\xc')$ by the addition of~$p$. In the new election, the {\sav} score of~$p$ increases to $\frac{\xsize}{4}$. As~$\xc'$ is an exact $3$-set cover of~$\xs$, for each $\xse\in \xs$, there is exactly one vote in~$V(\xc')$ which approves~$\xse$. Changing this vote as above decreases the {\sav} score of~$\xse$ by $\frac{1}{3}-\frac{1}{4}=\frac{1}{12}$. Therefore, after all changes, the {\sav} score of every candidate~$\xse\in \xs$ decreases to $\frac{\xsize}{4}-\frac{1}{12}$, implying that~$p$ becomes the unique winner. So,~$g(I)$ is a {\yesins}.
The above correctness argument applies to all $r\geq 1$. 

$(\Leftarrow)$ Assume that~$g(I)$ is a {\yesins}, i.e., we can change a submultiset $\widetilde{V}\subseteq V$ of at most $\ell=\xsize$ votes to make~$p$ the unique winner under the distance bound restriction of~$r$. Notice that, by vote-level addition operation, changing any vote in~$V'$ increases the {\sav} score gap between~$p$ and every other candidate by at most $\frac{1}{3\xsize}$, but changing any vote in~$V(\xc)$ by adding~$p$ increases the {\sav} score gap between~$p$ and any other candidate by at least~$\frac{1}{4}$, which is greater than $\frac{1}{3\xsize}$. 
As a result, we may assume that $\widetilde{V}\subseteq V(\xc)$. Let $\xc'=\{\xce\in \xc \setmid v(\xce)\in \widetilde{V}\}$ be the subcollection of~$\xc$ corresponding to~$\widetilde{V}$. 
Observe further that 
it is always optimal to change the votes in~$\widetilde{V}$ by only adding~$p$ (this holds for all $r\geq 1$). 
Let~$E'$ be the election obtained from~$(C, V)$ by changing every vote in~$\widetilde{V}$ by the addition of~$p$. As discussed above,~$p$ is the unique SAV winner of~$E'$. The {\sav} score of~$p$ in~$E'$ is~$\frac{\xsize}{4}$. It follows that for every candidate $\xse\in \xs$, at least one vote approving~$\xse$ is contained in~$\widetilde{V}$. As a vote~$v(\xce)$ approves a candidate~$\xse$ if and only if $\xse\in \xce$, we know that~$\xc'$ covers~$\xs$. From $\abs{\xc'}=\abs{\widetilde{V}}\leq \xsize$, and the fact that each $\xce\in \xc$ is of cardinality~$3$, we know that every element of~$\xs$ occurs in exactly one element of~$\xc'$, meaning that~$I$ is a {\yesins}. 

The proof for NSAV is a modification of the above reduction based on Lemma~\ref{lem-relation-sav-nsav}.
\end{proof}

\hide{
\begin{theorem}
{\pvac{{\sav}}} is {\wah} with respect to both $\ell$ and $k$ even when $|J|=1$ and $r=2$.
\end{theorem}
\begin{proof}
We prove the theorem via a reduction from the {\prob{$\kappa$-Clique}} restricted to regular graphs. Let $(G=(U, A), \kappa)$ be an instance of {\prob{$\kappa$-Clique}} such that every vertex in~$G$ is of degree exactly~$d$ for some positive integer~$d$. Let~$\n$ and~$\m$ denote the number of vertices and number of edges in~$G$, respectively. For each vertex in~$G$ we create a candidate which is denoted by the same symbol for simplicity. In addition, we create a candidate~$p$, and a set~$D$ of $\n\cdot (\n-2)$ dummy candidates who wont have any chance to be the winners. We create two classes of votes. The first class consists of $\n\cdot d$ votes each of which approves all candidates except the candidate~$p$. The second class of votes corresponding to the edges of~$G$. In particular, for each edge $\edge{u}{u'}$in~$G$, we create a vote which approves exactly all candidates in $(U\setminus \{u, u'\})\cup \{p\}$. We complete the reduction by setting $J=\{p\}$, $k=\kappa$, $\ell=\frac{\kappa \cdot (\kappa-1)}{2}$, and $r=2$.

The reduction clearly can be done in polynomial time. In what follows, we show that the given {\prob{$\kappa$-Clique}} instance is a {\yesins} \iff the above constructed instance is a {\yesins}. It is useful for us to first look at the {\sav} scores of all candidates. The {\sav} score of~$p$ is \[\frac{\m}{\n-1}.\] The {\sav} score of every vertex-candidate $u\in U$ is
\[\frac{n\cdot d}{n\cdot n-1}+\frac{m-d}{n-1}=\frac{m}{n-1}.\] The {\sav} score of every dummy candidate in~$D$ is $\frac{d}{n-1}$.
\end{proof}
}


\subsection{Vote-Level Deletion Change}
For the vote-level deletion operation, we have an {\nphns} result for {\av} even when we want to elect only one winner.

\begin{theorem}
\label{thm-vdc-av-nph-r-3-k-1}
For all integers $r\geq 3$, {\emph{\pvdc{\av}}} is {\emph\nph}  even when $k=1$.
\end{theorem}

\begin{proof}
We prove the theorem via a reduction from the {\prob{RX3C}} problem. Let~$r$ be an integer at least~$3$. From an instance $I=(\xs,\xc)$ of {\prob{RX3C}} with $\abs{\xs}=\abs{\xc}=3\xsize>0$, we create an instance~$g(I)=((C, V), \discset, \ell, k,r)$ of {\pvdc{\av}} as follows. 
The candidate set is $C=\xs\cup \{p\}$ with $\discset=\xs$ being the set of distinguished candidates. Regarding the votes, we first create three votes, each approving only~$\p$. Then, for each $\xce\in \xc$, we create one vote~$v(\xce)$ approving candidates corresponding to the three elements in~$\xce$. Let $V(\xc)=\{v(\xce) \setmid \xce\in \xc\}$. Let~$V$ be the multiset of the above created $3\xsize+3$ votes. Let $\ell=\xsize$ and~$k=1$. This completes the construction of~$g(I)$. The {\av} scores of all candidates in~$(C, V)$ are~$3$. We show below that~$I$ is a {\yesins} if and only if~$g(I)$ is a {\yesins}.

($\Rightarrow$) Assume that there is an exact set cover $\xc'\subseteq \xc$ of~$\xs$. Let~$E$ be the election obtained from $(C, V)$ by changing the votes corresponding to~$\xc'$ into empty votes. As~$\xc'$ is an exact set cover of~$\xs$, for every candidate $\xse\in \xs$, exactly one vote approving~$\xse$ is changed, leading to the {\av} score of~$\xse$ being $3-1=2$ in~$E$. As none of the votes corresponding to~$\xc$ approves~$p$ originally, the {\av} score of~$p$ remains~$3$ in~$E$. Therefore,~$p$ is the unique winner of $E$, and hence~$g(I)$ is a {\yesins}. 

($\Leftarrow$) Assume that we can change at most $\ell=\xsize$ votes from~$V$ by candidate deletions so that none of~$\xs$ is a winner in the resulting election, or equivalently,~$p$ is the unique winner of the resulting election. 
As all candidates have the same AV score in the original election~$(C, V)$, for every $\xse\in \xs$, at least one vote approving~$\xse$ must be changed to decrease its score (by deleting~$a$ from the vote). By the above construction, only votes from~$V(\xc)$ approve candidates from $\xs$ and, moreover a vote $v(\xce)$ approves a candidate~$\xse\in\xs$ only if $\xse\in \xce$. Hence, the subcollection corresponding to the changed votes from $V(\xc)$ is a set cover of~$\xs$. As at most~$\xsize$ votes are changed, we know that~$I$ is a {\yesins}.
\end{proof}

Next, we show that if every vote is only allowed to delete at most one approved candidate, the problem becomes polynomial-time solvable, regardless of the values of~$k$.

\begin{theorem}
\label{thm-vdc-av-poly-r=1}
{\emph{\pvdc{\av}}}  is polynomial-time solvable if $r=1$.
\end{theorem}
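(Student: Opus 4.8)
The plan is to reduce the problem to a single network-flow feasibility test, exploiting the fact that when $r=1$ each touched vote must lose \emph{exactly} one approved candidate. First I would record two structural facts. Since deletions only decrease scores and can never raise the score of a non-distinguished candidate, it is never helpful to delete a non-distinguished candidate; by an exchange argument any successful bribery can be converted into one that deletes only candidates of~$J$ while touching no more votes (discarding the deletions of non-distinguished candidates leaves every distinguished score unchanged and every non-distinguished score at least as large, hence preserves the ``none of~$J$ wins'' property). Second, under {\av} a candidate lies in some winning $k$-committee \iff its score is among the~$k$ largest, so no candidate of~$J$ wins \iff there are at least~$k$ candidates whose score strictly exceeds the largest distinguished score. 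Because the optimal bribery leaves all non-distinguished scores intact, writing $a_k$ for the $k$-th largest score among the $\abs{C\setminus J}$ non-distinguished candidates (and rejecting immediately if $\abs{C\setminus J}<k$), the task becomes exactly: drive every candidate $c\in J$ down to $\avs(c)\le a_k-1$.

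Then I would quantify the required effort. For each $c\in J$ let $\rho_c=\max\{0,\ \avs(c)-a_k+1\}$ be the number of times~$c$ must be deleted. A single VDC with $r=1$ on a vote~$v$ deletes one candidate of~$v$, and each vote may be touched only once, so realizing the demands $(\rho_c)_{c\in J}$ amounts to choosing, for each~$c$, a set of $\rho_c$ votes from $V(c)$, all chosen votes being pairwise distinct. \emph{This coupling across candidates is exactly what distinguishes the present case from AppDel} (Theorem~\ref{thm-appdel-av-poly}), where repeated deletions from one vote are allowed and each candidate can be treated independently. I would encode the demands as a bipartite $b$-matching: build a flow network with a source~$s$, a sink~$t$, a node for every $c\in J$ and every $v\in V$, arcs $s\to c$ of capacity $\rho_c$, arcs $c\to v$ of capacity~$1$ whenever $c\in v$, and arcs $v\to t$ of capacity~$1$. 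The algorithm answers \yes{} \iff $\sum_{c\in J}\rho_c\le \ell$ and this network admits a flow of value $\sum_{c\in J}\rho_c$ (equivalently, by the defect form of Hall's theorem, $\sum_{c\in S}\rho_c\le\abs{\{v\in V\setmid v\cap S\neq\emptyset\}}$ for every $S\subseteq J$); all of this is computable in polynomial time.

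For correctness I would argue both directions. If the two tests pass, a maximum integral flow prescribes, for each~$c$, the $\rho_c$ distinct votes from which to delete~$c$; performing these $\sum_{c\in J}\rho_c\le\ell$ single deletions brings every distinguished score to at most $a_k-1$ while leaving the~$k$ top non-distinguished scores at $a_k$ or above, so each of the (at least)~$k$ such candidates strictly beats every member of~$J$, excluding~$J$ from all winning committees. Conversely, given any successful bribery, I would first normalize it so that it deletes only distinguished candidates; the winner characterization forces $\avs(c)\le a_k-1$, hence~$c$ is deleted at least $\rho_c$ times, for every $c\in J$. Keeping exactly $\rho_c$ of these deletions per candidate yields a sub-collection of pairwise-distinct (vote, candidate) pairs that realizes $(\rho_c)$ and uses $\sum_{c\in J}\rho_c\le\ell$ votes, so both tests succeed.

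The step I expect to be the genuine obstacle is the completeness direction together with the feasibility model: I must verify that the ``one deletion per touched vote'' restriction is captured \emph{exactly} by the $b$-matching, i.e.\ that monotonicity of Hall's condition lets me replace an arbitrary feasible deletion multiset by the minimal demand vector $(\rho_c)$ without losing realizability, and that no cheaper strategy can arise from deleting non-distinguished candidates or from tie-breaking subtleties in the definition of $a_k$. The remaining score bookkeeping and the polynomiality of the flow computation are routine.
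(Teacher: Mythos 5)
Your proposal is correct and follows essentially the same route as the paper's proof: both determine the $k$-th highest non-distinguished {\av} score, derive a per-candidate deletion demand from it, check the total demand against~$\ell$, and test realizability of the demands under the ``one deletion per touched vote'' constraint via a bipartite matching computation (the paper uses vertex copies to encode the demands, you use source-arc capacities in an equivalent flow/$b$-matching formulation). The normalization and correctness arguments you sketch match the paper's reasoning, so no further changes are needed.
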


\begin{proof}
We solve the problem by reducing it to the maximum matching problem, which is solvable in polynomial time~\cite{edmonds_1965,DBLP:journals/csur/Galil86}.  

Let $I = (E, \discset, k, \ell, r)$ be an instance of {\pvdc{\av}}, where $E = (C, V)$ and $r = 1$. We first sort all nondistinguished candidates in nonincreasing order of their AV scores in~$E$, breaking ties arbitrarily. Let~$s$ denote the {\av} score of the $k$-th candidate in this order.  

Our goal is to select at most~$\ell$ votes in $V$ and remove one distinguished candidate from each selected vote so that the {\av} score of every distinguished candidate is at most $s - 1$. Let $\discset'$ be the set of distinguished candidates with an {\av} score of at least~$s$ in~$E$.  

If $\sum_{c\in \discset'}(\scoreE{c}{E}{\av} - s + 1) > \ell$,  we immediately conclude that the given instance~$I$ is a {\noins}.

Otherwise, we construct a bipartite graph as follows:  
\begin{itemize}
    \item  For each distinguished candidate in~$\discset'$, we create a vertex.  
\item For each vote that approves at least one candidate from~$\discset'$, we create a vertex.  
\item We introduce an edge between a vote-vertex and a candidate-vertex if the corresponding vote approves the corresponding candidate.   
\end{itemize}

Then, we refine the bipartite graph as follows:  
For each distinguished candidate $c \in \discset'$ with a score of $\scoreE{c}{E}{\av} \geq s+1$, we create $\scoreE{c}{E}{\av} - s$ copies of its corresponding vertex, where each copy retains the same neighbors as the original vertex.  

Finally, we compute a maximum matching of the bipartite graph, and conclude that~$I$ is a {\yesins} if and only if all candidate-vertices and their copies are saturated by the maximum matching.
\end{proof}

Now, we move on to SAV and NSAV.

\begin{theorem}
\label{thm-vdc-sav-nsav-nph-r-3-k-1}
{\emph{\pvdc{{\sav}}}} and {\emph{\pvdc{{\nsav}}}} are {\emph\nph} for all integers $r\geq 3$, even when $k=1$ and every vote approves at most three candidates.
\end{theorem}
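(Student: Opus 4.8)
The plan is to prove {\nphns} for {\sav} by a reduction from RX3C that closely parallels the one behind Theorem~\ref{thm-vdc-av-nph-r-3-k-1}, and then to transfer it to {\nsav} through Lemma~\ref{lem-relation-sav-nsav}. Given an RX3C instance $(\xs,\xc)$ with $\abs{\xs}=\abs{\xc}=3\xsize$, I would set $C=\xs\cup\{\p\}$ and $\discset=\xs$, create a single vote $\{\p\}$, and for each $\xce\in\xc$ create one set-vote $v(\xce)=\xce$ consisting of the three candidates corresponding to the elements of~$\xce$. I would set $k=1$ and $\ell=\xsize$. Every vote approves at most three candidates, and since $r\ge 3$ any such vote can be emptied within the distance bound, so the construction is uniform over all $r\ge 3$. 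Because each element of~$\xs$ lies in exactly three sets of~$\xc$, each distinguished candidate has {\sav} score $3\cdot\frac{1}{3}=1$, and~$\p$ likewise has {\sav} score~$1$.

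For the forward direction, from an exact $3$-set cover $\xc'\subseteq\xc$ I would empty the $\xsize$ set-votes in $\{v(\xce)\setmid\xce\in\xc'\}$, each by a single VDC operation deleting all three approved candidates (Hamming distance $3\le r$). Since each distinguished candidate lies in exactly one set of~$\xc'$, it loses exactly one $\tfrac13$-share and drops to $\frac23$, while~$\p$ is untouched and stays at~$1$. Hence the maximum {\sav} score is attained only by~$\p$, so no candidate of~$\discset$ lies in any winning $1$-committee.

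For the converse, suppose at most~$\xsize$ votes can be changed so that no distinguished candidate wins. The first step is to argue that the vote $\{\p\}$ is never changed: emptying it forces ${\savs}(\p)=0$, and since the budget $\xsize$ is far too small to zero the scores of all $3\xsize$ distinguished candidates, some distinguished candidate would then attain the (positive) maximum, contradicting exclusion. Thus ${\savs}(\p)=1$ throughout, and because~$\p$ is the only non-distinguished candidate, the final maximum score must equal~$1$ (any distinguished candidate reaching the maximum would itself win). The key step is then the observation that each distinguished candidate must be \emph{removed} from at least one of its set-votes: deleting candidates from a vote only redistributes its weight among the surviving approved candidates, so a distinguished candidate never removed from any of its three size-$3$ set-votes keeps score $\ge 1$ and would tie~$\p$ at the maximum. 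Each changed set-vote has size three, so the total number of candidate-removals is at most $3\xsize$; matched against the $3\xsize$ distinguished candidates that each need at least one removal, this forces exactly~$\xsize$ changed set-votes, each emptied completely, whose removed triples partition~$\xs$ and therefore form an exact $3$-set cover.

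The main obstacle, and exactly where this proof departs from the {\av} case of Theorem~\ref{thm-vdc-av-nph-r-3-k-1}, is the redistribution (\emph{boosting}) effect of {\sav}: a partial deletion inside a set-vote raises the scores of the surviving distinguished candidates, so a naive counting could in principle be undercut. The plan neutralizes this by two monotonicity facts used above---boosting can only increase scores, so it can never help to exclude a candidate, and the ``at least one removal per distinguished candidate'' count is therefore boost-proof. Finally, for {\nsav} I would augment the same construction with a set~$D$ of at least $\abs{V}\cdot\abs{C}^2$ dummy candidates, none distinguished and none ever approved. Since VDC only deletes, no vote can ever come to approve a dummy, and Lemma~\ref{lem-relation-sav-nsav} guarantees that both in the initial election and in every election reachable by the allowed operations the {\nsav} order on~$C$ coincides with the {\sav} order; the dummies, being disapproved by everyone, always sit strictly below. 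Hence~$\p$ remains the top candidate precisely when it does under {\sav}, making the {\nsav} instance equivalent to the {\sav} one and completing the proof for both rules.
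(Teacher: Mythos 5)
Your proof is correct, and it follows the same overall strategy as the paper's: a reduction from RX3C with one size-$3$ vote per set of~$\xc$, where a solution empties the votes of an exact cover so that every distinguished candidate loses exactly one $\frac{1}{3}$-share and drops below the non-distinguished candidate. The difference is the gadget on the non-distinguished side. The paper introduces $r+1$ candidates $p_1,\dots,p_{r+1}$ sharing a single vote $P$, sets $\ell=\xsize+1$, and spends one extra operation shrinking $P$ to $\{p_1\}$ so as to lift $p_1$ from score $\frac{1}{r+1}$ to score~$1$; you instead use a single candidate~$\p$ with a singleton vote $\{\p\}$, which has score~$1$ from the outset, and $\ell=\xsize$. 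Your backward direction is correspondingly cleaner: there is no need to argue that one operation must be reserved for the vote~$P$, and your counting --- at most $3\xsize$ candidate-removals available against $3\xsize$ distinguished candidates each requiring at least one removal of itself --- is exactly the right way to neutralize the score-boosting effect of partial deletions, forcing every changed vote to be fully emptied and the chosen triples to partition~$\xs$. One further point in your favour: the theorem asserts hardness even when every voter approves at most three candidates, and your construction satisfies this, whereas the paper's vote~$P$ approves $r+1\geq 4$ candidates, so strictly speaking only a construction like yours establishes that clause of the statement. Your treatment of {\nsav} via Lemma~\ref{lem-relation-sav-nsav}, with the observation that VDC can never cause a dummy candidate to become approved, matches the modification the paper intends.
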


\begin{proof}
We give only the proof for {\sav}. The proof for {\nsav} is a slight modification of the following proof based on Lemma~\ref{lem-relation-sav-nsav}.

Our proof is based on a reduction from the {\prob{RX3C}} problem. Let $I=(\xs, \xc)$ be an instance of {\prob{RX3C}}, where $\abs{\xs}=\abs{\xc}=3\xsize>0$. Let~$r$ be an integer such that~$r\geq 3$. We create an instance~$g(I)$ of {\pvdc{{\sav}}} as follows. For each $\xse\in \xs$, we create one candidate, still denoted by~$\xse$ for simplicity. Additionally, we create a set $P=\{p_1,p_2,\dots, p_{r+1}\}$ of~$r+1$ candidates. Let~$C=\xs\cup P$. The votes are structured as follows. We first create one vote $v=P$. Then, for each $\xce\in \xc$, we create one vote~$v(\xce)$ which approves exactly the three candidates in~$\xce$. Let $V(\xc)=\{v(\xce) \setmid \xce\in \xc\}$ be the multiset of votes corresponding to~$\xc$, and let~$V=V(\xc)\cup \{v\}$ denote the multiset of the above created votes. Let $k=1$, $J=\xs$, and $\ell=\xsize+1$. 
The instance $g(I)$ is then $((C, V), J, \ell, k, r)$. We prove below that~$I$ is a {\yesins} if and only if~$g(I)$ is a {\yesins}. Notice that, in the election~$(C, V)$, the {\sav} score of every candidate in~$P$ is $\frac{1}{r+1}\leq \frac{1}{4}$, and that of every $\xse\in \xs$ is~$1$.

$(\Rightarrow)$ Assume that $\xc' \subseteq \xc$ is an exact set cover of~$\xs$. Let $V(\xc')=\{v(\xce) \setmid \xce\in \xc'\}$ be the multiset of votes corresponding to~$\xc'$. We change the~$\xsize$ votes in~$V(\xc')$ into empty votes, and change~$v=P$ into $v'=\{p_1\}$. Clearly, the distance between each changed vote and the new vote is at most~$r$.  As~$\xc'$ is an exact set cover, for every candidate $\xse\in\xs$, there is exactly one~$v(\xce)\in V(\xc')$, $\xse\in \xce\in \xc'$, which approves~$\xse$ originally. Therefore, the above changes decrease the {\sav} score of~$\xse$ by exactly~$\frac{1}{3}$, leading to a final {\sav} score of $1-\frac{1}{3}=\frac{2}{3}$. The {\sav} score of~$p_1$, however, becomes~$1$ after the changes, excluding all distinguished candidates from being a winner.

$(\Leftarrow)$ Assume that we can obtain an election $E'$ by modifying a submultiset $V' \subseteq V$, consisting of at most $\ell = \xsize + 1$ votes, by deleting up to~$r$ candidates from each vote, ensuring that none of the candidates in~$J$ remains a winner~$E'$. 
Obviously, at least one candidate from~$P$ must have a strictly higher {\sav} score than that of any candidate from~$\discset$ in~$E'$. By symmetry, assume that~$p_1\in P$ is such a candidate. We may assume then that $v\in V'$ and~$v$ is changed to approve only~$p_1$ in~$E'$. This assumption is justified because changing~$v$ in this way increases the score gap between~$p_1$ and every distinguished candidate in~$\discset$ by  
$1 - \frac{1}{r+1} \geq \frac{3}{4}$,  
whereas modifying any other vote by merely deleting candidates increases the score gap by at most~$\frac{1}{3}$. Consequently, the {\sav} score of~$p_1$ in~$E'$ is~$1$. This implies that for every $\xse\in \xs$, there is at least one vote approving~$\xse$ that belongs to~$V'$ and is changed. Hence, the subcollection $\xc'=\{\xce\in \xc \setmid v(\xce)\in V'\}$ corresponding to~$V'$ is a set cover of~$\xs$. Since $\abs{\xc'}=\abs{V'\setminus \{v\}}\leq \xsize$, we conclude that~$I$ is a {\yesins}.
\end{proof}

Next, we complement the above theorem with a {\wahns} result. 

\begin{theorem}
\label{thm-vdc-sav-nsav-wah-l-k-r-1}
{\emph{\pvdc{{\sav}}}} and {\emph{\pvdc{{\nsav}}}} for all possible values of $r\geq 1$ are {\emph\wah} with respect to both~$\ell+k$. This holds even when $\abs{\discset}=1$.
\end{theorem}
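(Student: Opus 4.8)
The plan is to reuse, essentially verbatim, the reduction from \prob{$\kappa$-Clique} on regular graphs that establishes Theorem~\ref{thm-pvc-sav-nsav-wah-l-k-r-1}, observing that it is in fact a \emph{deletion-only} reduction. Recall that in that construction every vote approves exactly $r+2$ candidates: for each vertex~$u$ there are~$t$ votes of the form $\{c(u)\}\cup C(u,i)$, and for each edge $\edge{u}{u'}$ there is one vote $\{c(u),c(u'),p\}\cup C(\edge{u}{u'})$, each padded with fresh dummy candidates so that $\abs{C(\edge{u}{u'})}=r-1$; the budget is $\ell=\frac{\kappa\cdot(\kappa-1)}{2}$, $k=\kappa$, and~$p$ is the unique distinguished candidate. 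First I would carry this construction over unchanged as the instance $g(I)$ of {\pvdc{{\sav}}}. Since the construction does not depend on which operation the briber uses, it is still polynomial-time computable and preserves $k=\kappa$ and $\ell=\frac{\kappa\cdot(\kappa-1)}{2}$ as functions of~$\kappa$, so it is a parameterized reduction in $(k,\ell)$.

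Next I would verify that both directions of correctness survive the restriction to vote deletions. For the forward direction, given a clique~$K$ of size~$\kappa$, the solution built in the proof of Theorem~\ref{thm-pvc-sav-nsav-wah-l-k-r-1} modifies each of the~$\ell$ edge-votes inside $G[K]$ by \emph{removing}~$p$ together with the $r-1$ dummy candidates of that vote. This deletes exactly~$r$ candidates, leaving the vote $\{c(u),c(u')\}$ at Hamming distance~$r$ from the original, so it is a legal single VDC operation (it removes at least one candidate, and $r\geq 1$ already guarantees this, while respecting the bound~$r$); hence a $\kappa$-clique yields a {\yesins} of {\pvdc{{\sav}}}. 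For the backward direction I would simply note that any feasible VDC solution is also a feasible {\pvc{{\sav}}} solution, because a deletion is a special vote change of distance at most~$r$; therefore a {\yesins} of {\pvdc{{\sav}}} implies a {\yesins} of {\pvc{{\sav}}}, which by Theorem~\ref{thm-pvc-sav-nsav-wah-l-k-r-1} forces~$G$ to contain a clique of size~$\kappa$. Combining the two gives that~$I$ is a {\yesins} of \prob{$\kappa$-Clique} iff $g(I)$ is a {\yesins} of {\pvdc{{\sav}}}, establishing {\wahns} with respect to both~$\ell$ and~$k$, for every fixed $r\geq 1$ and even when $\abs{\discset}=1$.

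Finally, the {\nsav} case I would obtain by the standard device of Lemma~\ref{lem-relation-sav-nsav}: append a set~$D$ of at least $n\cdot m^2$ candidates approved by nobody, so that the {\nsav} ordering of the relevant candidates coincides with their {\sav} ordering throughout, and confirm (as in the other {\sav}-to-{\nsav} transfers in the paper) that deleting candidates never involves the dummy candidates of~$D$. The only point requiring genuine care is the forward direction's validity bookkeeping---checking that removing~$p$ and the $r-1$ dummies really is a VDC move of distance at most~$r$---but this is immediate once the vote sizes are pinned at $r+2$. The backward direction is essentially free, since VDC is a restriction of VC, so I do not expect any real obstacle beyond confirming that this reuse of the earlier reduction is sound.
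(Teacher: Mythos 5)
Your proposal is correct and matches the paper's proof, which likewise just reuses the reduction from Theorem~\ref{thm-pvc-sav-nsav-wah-l-k-r-1} after observing that the optimal strategy there already only removes the dummy candidates and~$p$ from each changed vote (a legal VDC move of Hamming distance~$r$). Your packaging of the backward direction as ``every VDC solution is a VC solution'' is a slightly cleaner way to state what the paper asserts when it says the correctness arguments are the same, but it is the same underlying argument.
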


\begin{proof}
The proof is exactly the one for Theorem~\ref{thm-pvc-sav-nsav-wah-l-k-r-1}. As discussed in the proof of Theorem~\ref{thm-pvc-sav-nsav-wah-l-k-r-1}, when a vote is determined to be changed, an optimal strategy is to delete the dummy candidates and the distinguished candidate from the vote. The rationale for the correctness remains unchanged.
\end{proof}

\section{Fixed-Parameter Tractability
}
\label{sec-many-fpts}
In the preceding sections, we have obtained many intractability results and a few polynomial-time solvability results in several special cases. This section is dedicated to exploring {\fpt}-algorithms concerning three natural parameters: the number of candidates~$m$, the number of voters~$n$, and the number of distinguished candidates~$|J|$. As $|J|\leq m$, any {\fpt}-algorithm with respect to~$|J|$ naturally extends to~$m$.

We have shown that {\pvc{\av}} and {\pvac{\av}} are {\nph}, even when there is only one distinguished candidate (Theorems~\ref{thm-vc-av-wa-hard-k-ell} and~\ref{thm-vac-av-wah}), but left the complexity of {\pvdc{\av}} in this special case unexplored. We address this question now. Particularly, based on integer-linear programming (ILP), we show that the problem is {\fpt} with respect to~$\abs{J}$. This sharply contrasts not only with the above-mentioned hardness of the vote-level operation-based problems for the rule AV, but also with the {\nphns} nature of {\pvdc{\sav}}  and {\pvdc{\nsav}}, even when~$\abs{J}=1$. 

\begin{theorem}
\label{thm-vdc-av-fpt-J}
For all possible values of~$r$, {\emph{\pvdc{\av}}}  is {\emph\fpt} with respect to the number of distinguished candidates.
\end{theorem}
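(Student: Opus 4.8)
The plan is to reduce the whole problem to a single, fixed covering task by proving a clean structural characterization of when \emph{all} of $\discset$ is simultaneously excluded under $\av$, and then to solve that task with an integer program whose number of variables is bounded by a function of $\abs{\discset}$. The first ingredient is that it is without loss of generality to delete only distinguished candidates: given any solution, restoring every deleted non-distinguished candidate leaves each distinguished candidate's $\av$ score unchanged while only raising the scores of the others, so every distinguished candidate still has at least as many strict betters; the number of operated votes does not increase (votes left without any deletion are simply dropped). Hence the $\av$ scores of the non-distinguished candidates stay fixed at their input values throughout.

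Next I would prove the characterization. Write $N=C\setminus\discset$, let $\sigma_1\ge\sigma_2\ge\cdots$ be the $\av$ scores of the candidates in $N$ in non-increasing order, and let $\sigma_k$ be the $k$-th of these. The claim is that, after deletions, no candidate of $\discset$ lies in any winning $k$-committee \iff $\abs{N}\ge k$ and every $p\in\discset$ has final $\av$ score at most $\sigma_k-1$. The ``if'' direction is immediate: the $k$ candidates realizing $\sigma_1\ge\cdots\ge\sigma_k$ strictly beat every distinguished candidate, so each $p$ has at least $k$ strict betters. For ``only if'', consider a distinguished candidate $p^{\max}$ of maximum final score; any strict better of $p^{\max}$ must be non-distinguished, so $p^{\max}$ needs at least $k$ non-distinguished strict betters, forcing $\abs{N}\ge k$ and its score---hence every distinguished score---below $\sigma_k$. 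The key (and somewhat surprising) point is that intra-$\discset$ blocking is useless: the top distinguished candidate always requires external blockers, so no high-scoring distinguished candidate can be ``spent'' to block the others.

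Given the characterization, the instance is a yes-instance \iff we can bring each $p\in\discset$ down to score at most $\sigma_k-1$, i.e. delete $p$ from at least $\delta_p:=\max\{0,\ {\avs}(p)-\sigma_k+1\}$ votes, using at most $\ell$ operated votes and deleting at most $r$ distinguished candidates per operated vote. Because the characterization is an exact threshold condition with $\sigma_k$ fixed, over-reducing a candidate never hurts, so the demands $\delta_p$ are all we must meet. To decide feasibility in $\fpt$ time I would group the votes by the set $v\cap\discset$ of distinguished candidates they approve; there are at most $2^{\abs{\discset}}$ such types, say with multiplicities $n_T$. I then set up an integer program with one variable $x_{T,S}$ for each type $T$ and each nonempty $S\subseteq T$ with $\abs{S}\le r$ (the number of operated votes of type $T$ whose deleted set is exactly $S$), subject to $\sum_S x_{T,S}\le n_T$ for each $T$, $\sum_{T,S:\,p\in S} x_{T,S}\ge\delta_p$ for each $p\in\discset$, $\sum_{T,S} x_{T,S}\le\ell$, and $x\ge 0$ integral. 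This program has at most $4^{\abs{\discset}}$ variables, so its feasibility is decidable in $\fpt$ time with respect to $\abs{\discset}$ by Lenstra's algorithm for integer programming in a bounded number of variables; reporting yes exactly when it is feasible solves the problem.

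The main obstacle is getting the structural characterization right---in particular the ``only if'' direction together with the reduction to deleting only distinguished candidates---since one has to rule out the tempting but ultimately futile strategy of using high-scoring distinguished candidates to block the remaining ones. Once the characterization collapses everything to meeting the fixed demands $\delta_p$, the rest (bounding the number of vote types and deletion patterns by functions of $\abs{\discset}$ and invoking bounded-variable integer programming) is routine.
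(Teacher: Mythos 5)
Your proposal is correct and follows essentially the same route as the paper: both group the votes by their intersection with~$\discset$, introduce one integer variable per (vote type, deleted subset) pair for at most $4^{\abs{\discset}}$ variables, require every distinguished candidate's score to drop strictly below the $k$-th highest non-distinguished score, and invoke Lenstra's bounded-variable integer programming result. Your explicit justification of the reduction to deleting only distinguished candidates and of the threshold characterization is left implicit in the paper but matches its reasoning.
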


\begin{proof}
Let~$r$ be a nonnegative integer, and let $(E, \discset, k, \ell, r)$ be an instance of {\pvdc{\av}}, where $E=(C, V)$ is an election. 
For each subset $A\subseteq \discset$, let $V(A)=\{v\in V\setmid v\cap \discset=A\}$ denote the multiset of votes approving exactly the candidates in~$A$ among all distinguished candidates, and let $n(A)=\abs{V(A)}$. We compute the {\av} scores of all nondistinguished candidates and rank them in descending order, breaking ties arbitrarily. Let~$s$ denote the {\av} score of the $k$-th candidate in this ranking. For each $A \subseteq \discset$ and each subset $B \subseteq A$ with $\abs{B} \leq r$, we introduce a nonnegative integer variable~$x_{A,B}$, representing the number of votes in~$V(A)$ that are modified by removing only the candidates in~$B$. In total, there are at most $4^{\abs{\discset}}$ such variables. The constraints are as follows.
\begin{itemize}
    \item Since at most~$\ell$ votes can be modified, we enforce
\[\sum_{B\subseteq A, A\subseteq \discset, \abs{B}\leq r}x_{A, B}\leq \ell.\]

\item For each $A\subseteq \discset$, the number of modified votes cannot exceed the number of votes in $V(A)$:
\[\sum_{B\subseteq A, \abs{B}\leq r}x_{A, B}\leq n(A).\]

\item To ensure that no distinguished candidate belongs to any winning $k$-committee, we impose the following constraint for every $c\in \discset$: 
\[\scoreE{c}{E}{\av}-\sum_{c\in B, B\subseteq A, A\subseteq \discset, \abs{B}\leq r}x_{A, B}\leq s-1.\]
\end{itemize}

It is clear that the given instance of {\pvdc{\av}} is a {\yesins} if and only if the above-formulated ILP has a feasible solution.  
By Lenstra's theorem~\cite{DBLP:journals/mor/Lenstra83}, this ILP can be solved in {\fpt}-time with respect to $\abs{\discset}$.  
\end{proof}

\hide{As $\abs{J}$ is smaller than the number of candidates, the following corollary follows.

\begin{corollary}
For all possible values of~$r$, 
{\pvdc{\av}}   is {\fpt} with respect to the number of candidates. 
\end{corollary}
}

Note that the fixed-parameter tractability with respect to~$\abs{J}$ does not extend to {\sav} and {\nsav}. As shown in the previous sections, {\pvdc{{\sav}}} and {\pvdc{{\nsav}}} are {\wah} with respect to~$\ell$ and~$k$, even when $\abs{J}=1$ (Theorem~\ref{thm-vdc-sav-nsav-wah-l-k-r-1}). This distinction arises because, in {\av}, removing a candidate from a vote does not affect the {\av} scores of other candidates in that vote. In contrast, in {\sav} and {\nsav}, removing a candidate from a vote increases the {\sav} scores of the remaining candidates in that vote. This key difference allows us to focus solely on removing distinguished candidates in {\av}, whereas in {\sav} and {\nsav}, all candidates must be considered collectively.

Utilizing ILP formulations once again, we establish that all problems examined in this paper are {\fpt} with respect to a larger parameter, namely, the number of candidates~$m$. 
At a high level, to solve these problems, we first guess the exact winning $k$-committees, each of which excludes all distinguished candidates. There are at most $2^{m\choose k}\leq 2^{2^m}$ guesses, and each guess involves at most~$2^m$ committees. For each guessed class of winning $k$-committees, we provide an ILP formulation. Specifically, we partition the votes into at most~$2^m$ groups, each consisting of all votes approving the same candidates. For each group of votes approving exactly candidates in a subset $A\subseteq C$, we introduce~$2^m$ nonnegative integer variables, with each corresponding to a subset~$B$ of candidates and indicating how many votes from the group are transformed into votes approving exactly the candidates in~$B$. Constraints are derived to ensure that all $k$-committees in the guessed class share the same score, which is strictly higher than that of any $k$-committees not in the class. For vote-level operations with a distance bound~$r$, we impose additional constraints on the variables to ensure that transformations are made only between votes with a Hamming distance at most~$r$. The {\fpt}-running time ensues from the need to solve at most~$2^{2^m}$ ILPs, each of which has at most $2^m\cdot 2^m=4^m$ variables, and ILP is {\fpt} with respect to the number of variables~\cite{DBLP:journals/mor/Lenstra83}.

\begin{theorem}
\label{thm-fpt-m}
For $f\in \{\emph{\text{AV}, \text{{\sav}}, \text{{\nsav}}, \text{{\vccav}}, \text{{\pav}}}\}$, 
{\emph\pappadd{$f$}}, {\emph\pappdel{$f$}}, {\emph\pvc{$f$}}, {\emph\pvac{$f$}}, and {\emph\pvdc{$f$}} are {\emph\fpt} with respect to the number of candidates. The results for the three vote-level operation-based bribery problems hold for all possible values of~$r$.
\end{theorem}

\begin{proof}
Our algorithms for the problems stated in the theorem share a common structure. 
Let $(C, V)$ be an election, $J\subseteq C$ a subset of candidates, and~$k$,~$\ell$, and~$r$ three nonnegative integers. We define~$I$ as  $((C, V), J, k, \ell)$ for an instance of an atomic operation-based problem and as $((C, V), J, k, \ell, r)$ for an instance of a vote-level operation-based problem stated in the theorem. Let $m=\abs{C}$. 

If $\abs{C\setminus J}<k$, we immediately conclude that~$I$ is a {\noins}. Therefore, in what follows, we assume that $\abs{C\setminus J}\geq k$. In addition, if $r=0$, the three vote-level operation-based problems for all the five rules stated in the theorem are clearly {\fpt} with respect to~$m$. Hence, we assume also that $r\geq 1$. 

For $A\subseteq C$, we use~$V(A)$ to denote the multiset of votes in~$V$ approving exactly the candidates in~$A$. 

Our algorithm proceeds by enumerating all nonempty collections of $k$-committees of~$C\setminus \discset$. There are at most $2^{m-\abs{\discset} \choose k}-1$ such collections. For each enumerated collection~$\mathcal{W}$, we formulate an ILP as follows. For each $A, B\subseteq C$, we introduce a nonnegative integer variable denoted by~$x_{A, B}$, representing the number of votes in~$V(A)$ that are modified to approve exactly the candidates in~$B$. There are at most~$4^m$ such variables. The following constraints are imposed.

\begin{itemize}
    \item The first set of constraints are formulated to ensure that the bribery does not surpass the budget~$\ell$: 

\begin{center}
    \begin{tabular}{ll}\toprule
      {\pappadd{$f$}} & $\sum_{A\subsetneq B\subseteq C} \abs{B\setminus A} \cdot x_{A, B}\leq \ell$\\
      {\pappdel{$f$}} & $\sum_{B\subsetneq A\subseteq C} \abs{A\setminus B} \cdot x_{A, B}\leq \ell$\\ 
      {\pvc{$f$}} & $\sum_{A,B\subseteq C, 1\leq \hamdis{A}{B}\leq r}  x_{A, B}\leq \ell$\\
      {\pvac{$f$}} & $\sum_{A\subsetneq B\subseteq C, \hamdis{A}{B}\leq r}  x_{A, B}\leq \ell$\\
      {\pvdc{$f$}}   &  $\sum_{B\subsetneq A\subseteq C, \hamdis{A}{B}\leq r}  x_{A, B}\leq \ell$ \\ \bottomrule
    \end{tabular}
\end{center}

\item For every $A\subseteq C$, we impose the following natural constraints:

\begin{center}
\begin{tabular}{ll}\toprule
      {\pappadd{$f$}} & $\sum_{A\subsetneq B\subseteq C} x_{A, B}\leq \abs{V(A)}$\\
      {\pappdel{$f$}} & $\sum_{B\subsetneq A} x_{A, B}\leq \abs{V(A)}$\\ 
      {\pvc{$f$}} & $\sum_{B\subseteq C, 1\leq \hamdis{A}{B}\leq r}  x_{A, B}\leq \abs{V(A)}$\\
      {\pvac{$f$}} & $\sum_{A\subsetneq B\subseteq C, \hamdis{A}{B}\leq r}  x_{A, B}\leq \abs{V(A)}$\\
      {\pvdc{$f$}}   &  $\sum_{B\subsetneq A, \hamdis{A}{B}\leq r}  x_{A, B}\leq \abs{V(A)}$ \\ \bottomrule
    \end{tabular}
\end{center}

\item We introduce an additional set of constraints to ensure that $\mathcal{W}$ precisely represents the winning $k$-committees after modifying the votes according to the previously defined constraint classes. This is equivalent to guaranteeing that:
\begin{enumerate}
    \item[(1)] All committees in $\mathcal{W}$ have the same score.
    \item[(2)] The score of every committee in $\mathcal{W}$ is strictly higher than that of any committee not in $\mathcal{W}$. 
\end{enumerate} 

This objective can be expressed through ILPs as follows. First, we express the scores of committees in terms of the variables defined above for the rules AV, SAV, NSAV, CCAV, and PAV. 
For each subset $B\subseteq C$, we define $F(B)$ for the five problems as follows:

\begin{center}
\small
{
\begin{tabular}{ll}\toprule
problems & $F(B)$ \\ \midrule
      {\pappadd{$f$}} & $\left(\sum_{A\subsetneq B} x_{A, B}\right)+\left(\abs{V(B)}-\sum_{B\subsetneq A'\subseteq C} x_{B, A'}\right)$\\
      
      {\pappdel{$f$}} &  $\left(\sum_{B\subsetneq A\subseteq C} x_{A, B}\right)+\left(\abs{V(B)}-\sum_{A'\subsetneq B} x_{B, A'}\right)$\\ 
      
      {\pvc{$f$}} & $\left(\sum_{A\subseteq C,1\leq \hamdis{A}{B}\leq r} x_{A, B}\right)+\left(\abs{V(B)}-\sum_{A'\subseteq C, 1\leq \hamdis{A'}{B}\leq r} x_{B, A'}\right)$\\
      
      {\pvac{$f$}} & $\left(\sum_{A\subsetneq B, \hamdis{A}{B}\leq r} x_{A, B}\right)+\left(\abs{V(B)}-\sum_{B\subsetneq A'\subseteq C, \hamdis{A'}{B}\leq r} x_{B, A'}\right)$\\
      
      {\pvdc{$f$}}   &  $\left(\sum_{B\subsetneq A\subseteq C, \hamdis{A}{B}\leq r} x_{A, B}\right)+\left(\abs{V(B)}-\sum_{A'\subsetneq B,\hamdis{A'}{B}\leq r} x_{B, A'}\right)$ \\ \bottomrule
    \end{tabular}
    }
\end{center}

Precisely,~$F(B)$ represents the total number of votes approving exactly the candidates in~$B$ in the final election, after modifying votes in accordance with the previous defined constraints.
For each $w\subseteq C$ and each rule~$f$, we define ${\sf{sc}}(w)$ as follows: 

\begin{center}
\begin{tabular}{ll}\toprule
rules & ${\sf{sc}}(w)$\\ \midrule
AV & $\sum_{B\subseteq C} {\abs{B\cap w}}\cdot F(B)$\\
SAV & $\sum_{B\subseteq C, B\cap w\neq \emptyset} \frac{\abs{B\cap w}}{\abs{B}}\cdot F(B)$\\
NSAV & $\left(\sum_{B\subseteq C, B\cap w\neq \emptyset} \frac{\abs{B\cap w}}{\abs{B}}\cdot F(B)\right)-\left(\sum_{B\subsetneq C} \frac{\abs{B\setminus w}}{m-\abs{B}}\cdot F(B)\right)$\\ 
CCAV & $\sum_{B\subseteq C, B\cap w\neq \emptyset} F(B)$\\
PAV & $\sum_{B\subseteq C, B\cap w\neq \emptyset} (1+1/2+\cdots+1/(\abs{B\cap w}))\cdot F(B)$\\ \bottomrule
\end{tabular}
\end{center}

To achieve the objective described in~(1), we impose the constraint ${\sf{sc}}(w)={\sf{sc}}(w')$ for all $w, w'\in \mathcal{W}$. To achieve the objective described in~(2), we impose the constraint ${\sf{sc}}(w) > {\sf{sc}}(w')$ for every $w\in \mathcal{W}$ and every $k$-committee $w'\subseteq C$ such that $w'\not\in \mathcal{W}$.
\end{itemize}

By Lenstra's theorem~\cite{DBLP:journals/mor/Lenstra83}, each of the above ILPs can be solved in {\fpt}-time with respect to~$m$. Since each problem stated in the theorem requires solving {\fpt}-many ILPs in terms of~$m$, the overall problem can be solved in {\fpt}-time with respect to~$m$.
\end{proof}

%
%
%
%

Finally, we turn our attention to the parameter~$n$, the number of votes. We begin with the following {\fpt}-result.

\begin{theorem}
\label{thm-vdc-av-fpt-n}
For all possible values of~$r$, {\emph\pvdc{\emph{AV}}}  can be solved in time~$\bigos(2^n)$, where~$n$ denotes the number of votes.
\end{theorem}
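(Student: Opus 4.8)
The plan is to combine a structural simplification of the objective with a brute-force enumeration over the set of bribed votes. First I would observe that, since deleting a candidate from a vote decreases only that candidate's {\av} score and leaves all other scores untouched, it is never beneficial to delete a \emph{non-distinguished} candidate: doing so can only lower the number of candidates sitting strictly above a distinguished candidate, which works against the briber's goal. Hence there is an optimal solution in which every deletion removes a distinguished candidate, and consequently the {\av} scores of all candidates in $C\setminus\discset$ stay exactly at their initial values throughout.

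Building on this, I would reduce the objective to a simple per-candidate threshold. Let $\beta_k$ be the $k$-th largest {\av} score among the candidates in $C\setminus\discset$ (if $\abs{C\setminus\discset}<k$ we can immediately return ``{\no}'', since then fewer than $k$ candidates can ever lie strictly above any distinguished candidate). Writing $p^\ast$ for the distinguished candidate of maximum \emph{final} score, the crucial point is that no distinguished candidate can lie strictly above $p^\ast$, so $p^\ast$ is kept out of all winning $k$-committees {\iff} at least $k$ non-distinguished candidates have score strictly above that of $p^\ast$, i.e. {\iff} the final score of $p^\ast$ is at most $T:=\beta_k-1$ (scores being integers). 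Moreover, once $p^\ast$ is excluded, every other distinguished candidate, having no larger final score, is excluded by the very same $k$ non-distinguished candidates. Thus the instance is a {\yesins} {\iff} we can, within budget, reduce the {\av} score of every $p\in\discset$ to at most $T$; equivalently, $p$ must lose at least $\rho_p:=\max\{0,\ {\avs}(p)-T\}$ approvals.

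With this reformulation the only remaining question is purely combinatorial: can we pick a set $V'\subseteq V$ of at most $\ell$ votes and delete from each $v\in V'$ at most $\min\{r,\abs{v\cap\discset}\}$ of its approved distinguished candidates so that each $p\in\discset$ loses at least $\rho_p$ approvals? Here I would enumerate, in $\bigos(2^n)$ total time, every subset $V'\subseteq V$ with $\abs{V'}\le\ell$ as a guess of the bribed votes. For a \emph{fixed} $V'$ the feasibility test is a routine degree-constrained bipartite flow problem: build a network with a source sending up to $r$ units into each vote-node of $V'$, unit-capacity arcs from each vote-node $v$ to every distinguished candidate it approves, and a demand of $\rho_p$ on each candidate-node $p$; the guess succeeds {\iff} all demands can be met, which is decidable in polynomial time. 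We answer ``{\yes}'' {\iff} some guessed $V'$ succeeds. (A vote in $V'$ to which the flow assigns no deletion is simply left unbribed, which only helps the budget, so insisting that a VDC operation delete at least one candidate causes no loss.)

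The step I expect to be the real crux is the threshold reduction of the second paragraph, because it is what turns the objective into separable per-candidate demands and thereby makes the inner problem a clean flow computation; getting the strict-inequality bookkeeping right (integrality of scores, ties among non-distinguished candidates, and the collapse of the whole committee condition onto the single candidate $p^\ast$) is where care is needed. By contrast, the exponential enumeration over $V'$ is essentially forced: the {\nphns} for $r\ge 3$ (\thm\ref{thm-vdc-av-nph-r-3-k-1}) stems precisely from the cardinality cap of $\ell$ on the \emph{number} of bribed votes, which is exactly the constraint we discharge by brute force, leaving behind only a polynomially solvable flow instance for each guess.
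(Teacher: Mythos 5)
Your proposal is correct and follows essentially the same route as the paper: enumerate the set of bribed votes in $\bigos(2^n)$ time, reduce the goal to forcing each distinguished candidate's score down to at most the $k$-th highest non-distinguished score minus one, and decide each guess via a network-flow instance with source-to-vote capacities bounded by $r$, unit vote-to-candidate arcs, and per-candidate demands ${\avs}(c)-s+1$. The only differences are cosmetic (you enumerate subsets of size at most $\ell$ rather than exactly $\ell$, and you spell out the WLOG that only distinguished candidates are deleted, which the paper leaves implicit).
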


\begin{proof}
Let $I=(E, \discset, k, \ell, r)$ be an instance of {\pvdc{AV}}, where $E=(C, V)$. Our algorithm proceeds as follows.

We order all nondistinguished candidates from~$C\setminus \discset$  in nonincreasing order based on their AV scores, breaking ties arbitrarily. Let~$s$ denote the score of the $k$-th candidate in the order. Let \[\discset'=\{c\in \discset \setmid \scoreE{c}{E}{\av} \geq s\}\] be the set of distinguished candidates whose AV scores are at least~$s$. If $\discset'=\emptyset$, we directly conclude that~$I$ is a {\yesins}. Otherwise, we need to modify at least one vote to ensure that all distinguished candidates are excluded from all winning $k$-committees. In this case, we split the instance into at most~$2^n$ subinstances, each of which takes as input the original instance along with a nonempty subset $V'\subseteq V$ of at most~$\ell$ votes, and the question is whether we can modify exactly the votes in~$V'$ to exclude all distinguished candidates from any winning $k$-committee. Clearly, there are at most~$2^n$ subinstances, and~$I$ is a {\yesins} if and only if at least one of the subinstances is a {\yesins}. 

To solve a subinstance corresponding to a subset $V'\subseteq V$, we reduce it to a  maximum network flow instance. Particularly, we create a source node~$v^+$ and a sink node~$v^-$. Moreover, for each vote $v\in V'$, we create a node denoted still by~$v$ for simplicity. For each distinguished candidate $c\in \discset'$, we create a node denoted still by~$c$ for simplicity. The arcs in the network are defined as follows:
\begin{itemize}
    \item There is an arc from the source node~$v^+$ to every vote-vertex~$v$, with capacity $\min\{r, \abs{v\cap \discset'}\}$. 
    \item For every vote-node~$v$ and every candidate-node~$c$, there is an arc from~$v$ to~$c$ with capacity~$1$ {\iff}~$v$ approves~$c$, i.e., $c\in v$. 
    \item For each candidate-node $c\in \discset'$, there is an arc from~$c$ to the sink node~$v^-$ with capacity~$\scoreE{c}{E}{\av}-s+1$. 
\end{itemize}
It is not hard to see that the subinstance is a {\yesins}~{\iff}~the above constructed network has a flow of size $\sum_{c\in \discset'} (\scoreE{c}{E}{\av}-s+1)$. The theorem follows from that the maximum network flow problem can be solved in polynomial time (see, e.g.,~\cite{DBLP:conf/stoc/Orlin13}).
\end{proof}

We show that for both the vote-level change operation and the vote-level addition change operation, the corresponding problems remain {\fpt} with respect to~$n$ when $r = m$. To solve these problems, we enumerate all subsets of at most~$\ell$ votes that may be modified. Once the modified votes are identified, we can solve the instance greedily: for the vote-level change operation, we set the modified votes to approve precisely all nondistinguished candidates, while for the vote-level addition change operation, we add all nondistinguished candidates to the modified votes.

\begin{corollary}
\label{thm-vc-av-vac-fpt-n-r-2m}
{\emph\pvc{\emph{AV}}} and {\emph\pvac{\emph{AV}}} for $r=m$ can be solved in time~$\bigos(2^n)$, where~$m$ is the number of candidates and~$n$ is the number of votes.
\end{corollary}


\hide
{
The {\fpt}-result can also be obtained for {\sav} with the vote-level deletion change operation.

\begin{theorem}
{\pvdc{{\sav}}} for $r=m$ can be solved in time~$\bigos(2^n)$, where~$n$ is the number of votes.
\end{theorem}

\begin{proof}
We enumerate all subsets $V'\subseteq V$ of at most~$\ell$ votes. There are at most~$2^n$ enumerations. For each enumerated subset~$V'$, we check if we can modify all votes in~$V'$ so that none of the distinguished candidates is included in any winning $k$-committee. To achieve this, we do the following. First, we empty all votes in~$V'$. Second, for each candidate $c\in C$, we calculate its {\sav} score with respect to the votes $V\setminus V'$, and denote the score by ${\sf{\sc}}_{V\setminus V'}(c)$. The final {\sav} scores of all distinguished candidates are determined as their scores with respect to $V\setminus V'$ (as it is optimal to remove all distinguished candidates from a vote that is supposed to be modified). Let $s=\max_{c\in \discset} {\sf{\sc}}_{V\setminus V'}(c)$ be the highest {\sav} score among all distinguished candidates. The question now is whether we redefine votes in $V\setminus V'$ so that there are at least~$k$ nondistinguished candidates who have {\sav} score strictly higher than~$s$. To this end, we do the following. Let $A=\{c\in (C\setminus \discset)\mid {\sf{\sc}}_{V\setminus V'}(c)>s\}$ be the set of all nondistinguished candidates whose {\sav} scores are already higher than~$s$ with respect to the votes in $V\setminus V'$. If $\abs{A}\geq k$, we accept the instance. Otherwise, we repeat the following procedure. Let $c\in C\setminus (\discset\cup A)$ be a candidate such that $\sf{\sc}(c)\geq {\sf{\sc}}(c')$ for all $c'\in C\setminus (\discset\cup A)$. Then, we add~$c$ into the votes in~$V'$ which has not approved $c$ until, either the {\sav} score of~$c$ is higher than~$s$ or after adding~$c$ into all votes in~$V'$ the {\sav} score of~$c$ is still at most~$s$. In the former case, we repeat the procedure, while in the latter case, we directly discard this enumeration. 
Note that even in the former case, the {\sav} score of~$c$ may be decreased in the subsequent repetitions because additional candidates might be added to the same votes. 
\end{proof}
}


\section{Conclusion}
\label{sec-conclusion}
We have studied the (parameterized) complexity of five destructive bribery problems within the context of approval-based multiwinner voting. These problems simulate scenarios in which a briber seeks to eliminate any possibility of a specified set of candidates winning by bribing voters without exceeding {\their} budget. 
For the five well-studied {\abmrs} AV, {\sav}, {\nsav}, {\vccav}, and {\pav}, we have presented a thorough analysis of the (parameterized) complexity of associated bribery problems. A comprehensive summary of our findings is presented in Table~\ref{tab-results-summary}. 

There are numerous avenues for potential future research. 
\begin{itemize}
    \item To begin, addressing open problems is a natural starting point. While all problems examined in this paper are {\fpt} with respect to the number of candidates, there remains a scarcity of {\fpt}-algorithms relative to the number of votes, leaving numerous cases unresolved. It worth noting that several control and manipulation problems for single-winner voting rules are already known to be {\nph}, even when the number of voters is held  constant~\cite{DBLP:journals/jcss/BachmeierBGHKPS19,DBLP:conf/ijcai/BetzlerNW11,DBLP:journals/jair/ChenFNT17}. 
    Furthermore, for~$\ell$ and $\abs{\discset}$, the parameterized complexity of {\pappadd{SAV}}, {\pappadd{NSAV}}, {\pvac{SAV}}, {\pvac{NSAV}} remains open, and the parameterized complexity of {\pvdc{AV}} for $\ell$ remains open.  
    Additionally, an intriguing parameter that we have not considered is the number of different types of candidates, where two candidates are of the same type if they are approved by exactly the same set of votes. Investigating whether our {\fpt}-algorithms with respect to the number of candidates (Theorems~\ref{thm-fpt-m}) can be extended to this parameter presents a promising direction for future research.   
    
    \item Secondly, exploring the complexity of these problems within special domains of dichotomous preferences presents an open and intriguing area of investigation.  Concepts of a variety of domains can be found in~\cite{DBLP:conf/ijcai/ElkindL15,DBLP:journals/corr/abs-2205-09092,DBLP:journals/aarc/Karpov22,DBLP:conf/ijcai/Yang19a}. It should be pointed out that Kusek~et~al.~\cite{DBLP:conf/atal/KusekBF0K23} have explored the complexity of constructive bribery problems restricted to the candidate interval domain and the voter interval domain. 
    
    \item Thirdly, our study is predominantly theoretical in nature. Conducting experimental work could provide valuable insights into whether these problems pose practical challenges. For promising electoral data available for establishing such works, we recommend consulting~\cite{DBLP:conf/atal/BoehmerS23,DBLP:conf/aldt/MatteiW13}
\end{itemize} 

\section*{Acknowledgement}
The author sincerely thanks the anonymous reviewers of ACM Transactions on Computation Theory  for their insightful and constructive comments, which significantly enhanced the quality of this work. Additionally, the author extends heartfelt gratitude to the anonymous reviewers of AAMAS 2020 for their valuable feedback on an earlier version of this paper.

\end{document}